\newcommand{\bburl}[1]{\textcolor{blue}{\url{#1}}}
\theoremstyle{plain}
\newtheorem{theorem}{Theorem}[section]
\newtheorem{proposition}[theorem]{Proposition}
\newtheorem{assumption}[theorem]{Assumption}
\newtheorem{lemma}[theorem]{Lemma}
\newtheorem{corollary}[theorem]{Corollary}
\newtheorem{remark}[theorem]{Remark}
\newtheorem*{claim*}{Claim}
\theoremstyle{definition}
\newtheorem{example}[theorem]{Example}
\newtheorem*{notation*}{Notation}
\newtheorem{definition}[theorem]{Definition}
\providecommand{\abs}[1]{\left\lvert#1\right\rvert}
\providecommand{\card}[1]{\left\lvert#1\right\rvert}
\providecommand{\norm}[1]{\left\lvert\left\lvert#1\right\rvert\right\rvert}
\providecommand{\bigO}[1]{O\left(#1\right)}
\providecommand{\littleO}[1]{o\left(#1\right)}
\def\E{\mathbb{E}}
\def\N{\mathbb{N}}
\def\R{\mathbb{R}}
\def\C{\mathbb{C}}
\def\P{\mathbb{P}}
\newcommand{\ncr}[2]{{#1 \choose #2}}
\newcommand\be{\begin{equation}}
\newcommand\ee{\end{equation}}
\newcommand\bea{\begin{eqnarray}}
\newcommand\eea{\end{eqnarray}}
\numberwithin{equation}{section}
\date{\today}
\begin{document}


\title[Spectral Statistics of Non-Hermitian Random Matrix Ensembles]{Spectral Statistics of Non-Hermitian Random Matrix Ensembles}


\author{Ryan C. Chen}
\address[RCC]{\scriptsize Department of Mathematics, Princeton University, Princeton, NJ 08544}
\email{\textcolor{blue}{\href{mailto:rcchen@princeton.edu}{rcchen@princeton.edu}}}

\author{Yujin H. Kim}
\address[YHK]{\scriptsize Department of Mathematics, Columbia University, New York, NY 10027}
\email{\textcolor{blue}
{\href{mailto:yujin.kim@columbia.edu}{yujin.kim@columbia.edu}}}

\author{Jared D. Lichtman}
\address[JDL]{\scriptsize Department of Mathematics, Dartmouth College, Hanover, NH 03755}
\email{\textcolor{blue}{\href{mailto:jdl.18@dartmouth.edu}{jdl.18@dartmouth.edu}}}

\author{Steven J. Miller}
\address[SJM]{\scriptsize Department of Mathematics and Statistics, Williams College, Williamstown, MA 01267}
\email{\textcolor{blue}{\href{mailto:sjm1@williams.edu}{sjm1@williams.edu}}}

\author{Shannon Sweitzer}
\address[SS]{\scriptsize Department of Mathematics, University California, Riverside, CA 92521}
\email{\textcolor{blue}{\href{mailto:sswei001@ucr.edu}{sswei001@ucr.edu}}}

\author{Eric Winsor}
\address[EW]{\scriptsize Department of Mathematics, University of Michigan, Ann Arbor, MI 48109}
\email{\textcolor{blue}{\href{mailto:rcwnsr@umich.edu}{rcwnsr@umich.edu}}}

\thanks{The authors were partially supported by NSF Grants DMS1561945, DMS1659037, the University of Michigan, Princeton University, and Williams College. We thank Arup Bose, Peter Forrester, Roger van Peski, and our colleagues from SMALL 2017 for helpful conversations. }

\subjclass[2010]{15B52 (primary), 15B57 (secondary)}

\keywords{Random Matrix Ensembles, Singular Values, Checkerboard Matrices, Limiting Spectral Measure, Split Limiting Behavior, Joint Density}

\begin{abstract} Recently Burkhardt et. al. introduced the $k$-checkerboard random matrix ensembles, which have a split limiting behavior of the eigenvalues (in the limit all but $k$ of the eigenvalues are on the order of $\sqrt{N}$ and converge to semi-circular behavior, with the remaining $k$ of size $N$ and converging to hollow Gaussian ensembles). We generalize their work to consider non-Hermitian ensembles with complex eigenvalues; instead of a blip new behavior is seen, ranging from multiple satellites to annular rings. These results are based on moment method techniques adapted to the complex plane as well as analysis of singular values. \end{abstract}

\maketitle


\tableofcontents

\section{Introduction}

\subsection{Background}

Random matrix ensembles have been studied for almost a hundred years. The eigenvalues of these ensembles model many important and interesting behavior, from the waiting time of events to the energy levels of heavy nuclei to zeros of $L$-functions in number theory; see for example the surveys \cite{Bai, BFMT-B, Con, FM, KaSa, KeSn} and the textbooks \cite{Fo, Me, MT-B, Tao2011}.

There are many questions one can ask about these eigenvalues. This paper is a sequel to \cite{RMT2016}. There, in the spirit of numerous previous works, the authors investigated the density of eigenvalues of some highly structured ensembles. One of the central results in the subject is due to Wigner \cite{Wig1, Wig2, Wig3, Wig4, Wig5}, which states that the distribution of the scaled eigenvalues of a typical real symmetric matrix converges, in some sense, to the semi-circle distribution. However, if the real symmetric matrices have additional structure then other distributions can arise; see for example \cite{Bai, BasBo1, BasBo2, BanBo, BLMST, BCG, BHS1, BHS2, BM, BDJ, GKMN, HM, JMRR, JMP, Kar, KKMSX, LW, MMS, MNS, MSTW, McK, Me, Sch}.

In all those examples the limiting distribution has just one component. Different behavior is seen in the limit as $N\to\infty$ of the $k$-checkerboard $N\times N$ matrix ensembles of \cite{RMT2016} (see also \cite{CDF, CDF2}), described later in Definition \ref{def:complexSymCheck}. There, all but $k$ of the normalized eigenvalues converge to a semi-circle centered at the origin; however, there are $k$ eigenvalues which diverge to infinity together. Further, these $k$ blip eigenvalues converge to a universal distribution, the $k$-hollow GOE distribution (obtained by setting the diagonal of the $k\times k$ GOE ensemble to 0).

Below we describe the ensembles studied in \cite{RMT2016} and discuss our generalization (see Definitions \ref{def:gencheckone} and \ref{def:genchecktwo}). In particular, we find ensembles where there can be multiple blips or satellites orbiting the bulk of the eigenvalues, as well as a ring of eigenvalues around the central mass; Figure \ref{fig:imagestoseelater}.

\begin{figure}[h]
\scalebox{1}{\includegraphics{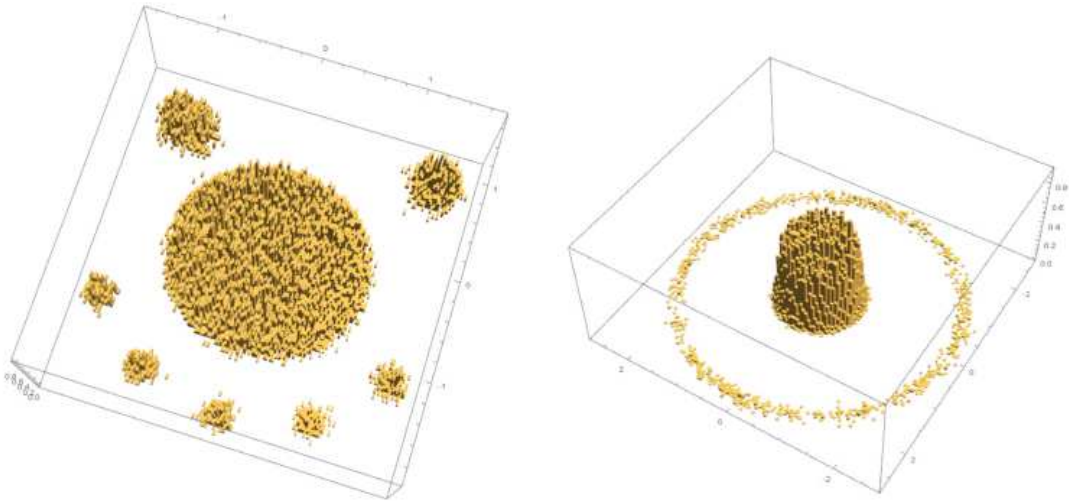}}
\caption{Two numerical examples of distributions which can arise from a generalized $k$-checkerboard ensemble. Left: A collection of satellites. Right: A ring of eigenvalues.}
\label{fig:imagestoseelater}
\end{figure}

In the next subsections we define the ensembles we investigate and state our results. Unfortunately many of the techniques used for related ensembles are not applicable here, and thus we spend some time describing the needed tools and approach.

\subsection{Results}

Random matrix ensembles with real entries see markedly different behavior between asymmetric and symmetric entry choices -- for example, the symmetric ensembles are Hermitian with real eigenvalues, and this need not hold for asymmetric ensembles. Allowing matrices with complex entries, we also find differences between the asymmetric and symmetric (not-necessarily Hermitian) ensembles in the joint density formulas.

Recall that the joint density function for singular values returns the probability that any given matrix has a certain $N$-tuple as its singular values.

Suppose $M$ is a random $N\times N$ matrix (for example, real asymmetric, complex symmetric, etc.). The joint density function $\rho_N$ for the singular values satisfies
\begin{align}
\int_{\R_{\geq0}^N} F(x_1,\ldots,x_N)\rho_N(x_1,\ldots,x_N)~dx\ = \ \E\sum_{\{\sigma_1^2,\ldots,\sigma_N^2\}\in\lambda(M^*M)} F(\sigma_1,\ldots,\sigma_N)
\end{align}
for any test function $F$, where the right-hand sum is interpreted as over all $N!$ orderings of the $N$ eigenvalues of $M^*M$ (and the $\sigma_j$ are nonnegative).

We list the available singular value joint density functions for complex asymmetric and symmetric ensembles, see for example \cite{AZ,TaoVu2009,Fo}:
\begin{align}
&{\rm Complex\ asymmetric\ Gaussian:}\ \rho_N(x_1,\ldots,x_N)\ =\ c_{N}'\left|\Delta(x_1^2,\ldots,x_N^2)\right|^2\prod_{j=1}^N |x_j| \prod_{j=1}^N e^{-|x_j|^2/2}
\end{align}

\ \\
\begin{align}
&{\rm Complex\ symmetric\ Gaussian:}\ \rho_N(x_1,\ldots,x_N)\ = \ c_N\left|\Delta(x_1^2,\ldots,x_N^2)\right|\prod_{j=1}^N |x_j| \prod_{j=1}^Ne^{-|x_j|^2/2}
\end{align}
where $\Delta$ denotes the Vandermonde determinant, and the complex Gaussian random variables have mean $0$ and variance $1$. Entries in matrices from the asymmetric ensemble are iidrv, while entries in the symmetric ensemble are iidrv in the upper triangle and the diagonal.

Note that the joint densities for singular values differ between the symmetric and asymmetric ensembles. In the ensembles that follow, we will chiefly consider symmetric matrices, and in doing so highlight the consistency found instead with the statistics we study for symmetric and asymmetric ensembles.

\subsubsection{Checkerboard Ensembles}
We investigate extensions of the structured ``checkerboard'' ensemble from \cite{RMT2016} into the complex regime. In that paper, the authors investigated a Hermitian ensemble, with real limiting eigenvalue distribution having almost all eigenvalues in a semicircular mass at the origin, referred to as the ``bulk'' and a vanishing percentage of eigenvalues, whose distribution is described explicitly, that moves off to infinity and is referred to as the ``blip.''(We adopt this terminology of bulk and blip where appropriate.)

The first complex analog we investigate is constructed as follows.

\begin{definition}
\label{def:complexSymCheck}
Fix $k\in\N$ and $w\in \C$. Then, for $k\mid N$, the $N\times N$ \textbf{complex symmetric $(k,w)$-checkerboard ensemble} is the ensemble of matrices $A$ with entries
\begin{align}
a_{ji}\ = \ a_{ij}\ =\
\begin{cases}
* & \text{ if } i\not\equiv j\pmod{k}
\\
w & \text{ if } i\equiv j\pmod{k}
\end{cases}
\end{align}
where $*\sim X+iY$ are selected such that $X$, $Y$ are iidrv mean $0$ variance $1/2$ real random variables. When we set $w=1$, or the value of $w$ is clear, we will just refer to the \textbf{complex symmetric $k$-checkerboard ensemble}.
\end{definition}

In contrast, the real symmetric ensemble studied in \cite{RMT2016} uses real random variables for $a_{ij}=a_{ji}$, and the Hermitian ensemble studied uses complex random variables with $a_{ij}=\overline{a_{ji}}$. In these situations, Hermiticity implies the resulting eigenvalue distributions are real. Our matrices are not necessarily Hermitian, and thus the eigenvalue distributions that arise are on $\C$. Restricting our attention to complex symmetric rather than the fully asymmetric case turns out to not make a difference for several of the following results. We have chosen to require symmetry, however, to highlight the difference between requiring symmetric structure in the real and complex settings (real symmetric and real asymmetric ensembles have very different behavior), and also to contrast with the differing behavior of complex symmetric and complex asymmetric Gaussian ensembles discussed above. For simplicity, we prove most of our results below for $w=1$ as was done in \cite{RMT2016} -- the extension to other values of $w$ is relatively straightforward.

In the paper \cite{RMT2016} studying the Hermitian version of this ensemble, the semicircular bulk was analyzed with the method of moments, but this could not be used for the blip as the eigenvalues were growing too rapidly. The blip existence was established by a perturbation argument using Weyl's inequalities (available for Hermitian matrices), and the distribution of the blip was analyzed using a polynomial weighting function.

None of these techniques are directly applicable for non-Hermitian ensembles with complex eigenvalue distributions. Complex polynomial weighting functions are not as well behaved -- for example, they are far from non-negative. Non-Hermitian ensembles also do not enjoy perturbation results such as Weyl's inequalities, as the spectra can be quite unstable due to the presence of pseudospectrum \cite{Tao2011}.

The method of moments also runs into serious difficulties in the complex regime. The use of the standard (real) method of moments is two-fold. Appropriate bounds on the moments implies convergence of the measures to a limiting measure (e.g. via the Carleman continuity theorem), and the moments also uniquely determine the limiting distribution. The analogous problem for complex moments uses mixed moments of the form
\begin{align}
\int z^{r_1}\overline{z}^{r_2}~d\mu.
\end{align}
However, these mixed moments do not have a straightforward relation to the matrix entries, as is available via the eigenvalue trace lemma in the real case and for moments of the form
\begin{align}
\int z^r~d\mu.
\end{align}
which we refer to as ``holomorphic.'' Although these holomorphic moments can be computed easily via the eigenvalue trace lemma for spectral measures, they cannot in general be used to characterize complex distributions. For example, all holomorphic moments of any angularly symmetric distribution will vanish. Ultimately, this is because the space of real polynomials is dense in various function spaces (the Stone-Weierstrass Theorem) and similarly for complex polynomials in $z$ and $\overline{z}$, but holomorphic polynomials in $z$ do not enjoy such properties \cite{Tao2011}.

Our analysis of the complex eigenvalues will thus employ markedly different techniques. As a proxy for the complex eigenvalues, we first study the associated singular value distributions, and explicitly describe the split limiting behavior in this context.

\begin{definition}
Given an $N\times N$ complex symmetric $k$-checkerboard matrix $A$, define the \textbf{bulk squared singular spectral measure} as
\begin{align}
\nu_{A,N}^{s^2}(x) \ = \ \frac1N \sum_{\sigma \text{ eigenvalue } A^*A}\delta\left(x-\frac{\sigma}{N}\right).
\end{align}
Note that $\sigma\geq0$ is a singular value of $A$ if and only if $\sigma^2$ is an eigenvalue of $B:=A^*A$.
\end{definition}

\begin{theorem}
\label{thm:singVal bulk}
Let $A_N$ be a random sequence of $N\times N$ complex symmetric $k$-checkerboard matrices. Then as $N\rightarrow\infty$, $\nu_{A_N,N}^{s^2}$ converges almost surely to the quarter-circular probability distribution (after renormalizing the total measure so that the distribution integrates to $1$) of radius $R=2\sqrt{1-1/k}$ and circle center at $0$, supported on $[0,R]$.
\end{theorem}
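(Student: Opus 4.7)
The plan is to apply the method of moments, exploiting that $B := A^*A$ is Hermitian positive semidefinite so $\nu_{A,N}^{s^2}$ is a compactly supported probability measure on $\R_{\ge 0}$ determined by its moments
\begin{align}
\mu_m \ :=\ \int x^m \, d\nu_{A,N}^{s^2}(x) \ =\ \frac{1}{N^{m+1}}\,\mathrm{tr}\bigl((A^*A)^m\bigr).
\end{align}
Expanding the trace gives a sum over index tuples of a product of $2m$ factors, each $a_{ij}$ or $\overline{a_{ij}}$. Each factor is either \emph{fixed} (equal to $w=1$, when the two indices are congruent mod $k$) or \emph{random} with mean zero, unit variance, and vanishing pseudovariance $\E[a^2]=0$.

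Because $\E[a^2]=0$, a term contributes to the expectation only when the random factors pair as conjugate $a$--$\bar a$ pairs with matching unordered index pairs $\{i,j\}$. As in the Wigner/Marchenko--Pastur moment analysis, the leading-order $N^{m+1}$ contribution should come from non-crossing pair partitions of the random positions, with each random pair enforcing an index coincidence and each fixed factor enforcing a mod-$k$ residue constraint contributing a factor of $1/k$ to the index sum. Summing over the choice of which positions are fixed versus random, compatible non-crossing pairings, and the residue patterns should reproduce the $m$-th moment of the claimed limit, with the effective variance $1-1/k$ per entry yielding radius $R=2\sqrt{1-1/k}$.

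For almost-sure convergence, I would bound $\mathrm{Var}(\mu_m)$ by an analogous index-sum argument on $\mathrm{tr}((A^*A)^m)^2$, where non-vanishing cross-pairings between the two copies of the trace produce a contribution of order $N^{2m}$ rather than $N^{2m+2}$, giving $\mathrm{Var}(\mu_m) = O(N^{-2})$ and hence $\mu_m \to \E[\mu_m]$ almost surely by Borel--Cantelli and a union bound over $m$. Compact support of the target distribution means its moments satisfy Carleman's condition, so almost-sure convergence of all moments upgrades to almost-sure weak convergence of $\nu_{A_N,N}^{s^2}$.

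The main obstacle I foresee is the combinatorial bookkeeping in the leading-order moment count: the interplay between the non-crossing pair-partition structure on random factors, the mod-$k$ residue constraints from fixed factors, and the ``accidental'' extra pairings introduced by the complex symmetry $a_{ij}=a_{ji}$ (absent in the fully asymmetric case). Identifying exactly which configurations contribute at leading order, and verifying that crossings, extraneous index coincidences, or under-constrained residue patterns are strictly subleading, is the delicate heart of the argument; the precise $k$-dependence $R=2\sqrt{1-1/k}$ must emerge from summing binomially over how many of the $2m$ positions are fixed versus paired.
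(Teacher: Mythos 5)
Your overall strategy (method of moments on $A^*A$, walk/pairing combinatorics, Carleman's condition, a variance bound plus Borel--Cantelli) is the same as the paper's, but there is a genuine gap at the very first step: you cannot apply the method of moments directly to the full ensemble with the deterministic entries kept in as ``fixed factors.'' The positions with $i\equiv j\pmod{k}$ form a deterministic rank-$k$ matrix $P$ whose $k$ nonzero singular values equal $N/k$, so $A^*A=(M+P)^*(M+P)$ has $k$ eigenvalues of size $N^2/k^2+O(N^{3/2+\epsilon})$. These outliers dominate the trace: for instance the all-deterministic terms alone contribute $N^{2m}/k^{2m-1}$ to $\mathrm{tr}\bigl((A^*A)^m\bigr)$, so
\begin{align}
\frac{1}{N^{m+1}}\,\E\bigl[\mathrm{tr}\bigl((A^*A)^m\bigr)\bigr]\ \gg\ \frac{N^{m-1}}{k^{2m-1}}\ \longrightarrow\ \infty
\end{align}
for every $m\ge 2$ (and even $\E[\mu_1]\to 1$ rather than the correct value $1-1/k$). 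Your claim that the leading order is $N^{m+1}$, and the ``effective variance $1-1/k$ per entry'' heuristic, are therefore false for the actual ensemble: the fixed factors equal $1$, not $0$, and an unpaired fixed factor costs nothing in expectation, so configurations with many fixed factors are not subleading --- they are the blip. The weak limit is still the quarter circle because the blip carries mass only $k/N$, but this cannot be detected from the moments of the full empirical measure.

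The missing idea, which is how the paper proceeds, is to first pass to the hollow ensemble: write $A=M+P$ with the deterministic entries zeroed out in $M$. Since $P$ has finite rank $k$, $(M+P)^*(M+P)$ is a finite-rank perturbation of $M^*M$, and the rank-inequality result quoted from the predecessor paper (their Theorem 1.3) shows the two limiting bulk distributions coincide. Only after this reduction does your moment computation go through: the surviving weight-$(r+1)$ terms are closed walks on trees traversing each edge twice, the two traversals of an edge have opposite parity (forcing the conjugate $a$--$\bar a$ pairing required by $\E[a^2]=0$), there are $C_r$ such ordered trees, and the hollowness constraint that adjacent tree vertices lie in distinct residue classes mod $k$ contributes $\bigl((k-1)/k\bigr)^r$ --- a factor per tree \emph{edge}, not a factor of $1/k$ per fixed position --- which is exactly where $R=2\sqrt{1-1/k}$ comes from. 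The remainder of your outline (compact support, Carleman, $O(N^{-2})$ variance) is consistent with the standard completion of the argument.
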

We also give an explicit description of the singular value blip distribution.

\begin{definition}
\label{def:EBSSSM}
The \textbf{empirical blip square singular spectral measure (EBSSSM)} for a matrix $A$ is

\begin{align}
\mu_{A,N}^{s^2}\ :=\ \frac{1}{k}\sum_{\sigma \text{ an eigenvalue of } A^*A} f_{n(N)}\left(\frac{k^2\sigma}{N^2}\right)\delta\left(x-\frac{1}{N}\left(\sigma-\frac{N^2}{k^2}\right)\right),
\end{align}
where $f_n(x)$ is the polynomial weighting function
\begin{align}
x^{2n}(x-2)^{2n}
\end{align}
and $n(N)$ is a monotonically growing function of $N$ that tends to $\infty$ such that $2^{4n(N)}=\littleO{N}$; for example, $n(N)=c\log N$ with $c$ a small enough constant suffices.
\end{definition}

Note that $\sigma$ an eigenvalue of $A^*A$ is equivalent to $\sqrt{\sigma}$ being a singular value of $A$. As in \cite{RMT2016}, the weight function $f$ weights the squared singular values in the blip roughly $1$, and weights the squared singular values in the bulk roughly $0$. The normalization factor $1/N$ ensures that we will find finite moments, i.e., the fluctuations of the squared singular values about the blip are of order $N$.

We can explicitly describe the blip distribution for the squared singular values, and recall a distribution studied in Theorem 1.9 of \cite{RMT2016}[Theorem~1.9], which also contains a few images of examples for small $k$.

\begin{definition}
\label{def:hollow GOE}
Fix $k\in\N$. Then the \textbf{$k\times k$ hollow Gaussian Orthogonal Ensemble (GOE)} is the ensemble of $k\times k$ matrices $A$ with entries
\begin{align}
a_{ji}\ =\ a_{ij}\ =\
\begin{cases}
* & \text{ if } i\neq j
\\
0 & \text{ if } i=j,
\end{cases}
\end{align}
where $*\sim X$ are iidrv mean $0$ variance $1$ real normal random variables, and the entries in the upper triangular half $A$ are all iidrv.
\end{definition}

When $k=2$, the empirical spectral measure is Gaussian, see \cite[Proposition~3.18]{RMT2016}. In general, standard universality implies that the limiting spectral distribution only requires the random variables to be mean $0$ and variance $1$. Furthermore, we use the term \emph{hollow} as a qualifier to any ensemble (for example, complex symmetric) where we have replaced the entries $a_{ij}$ with $0$ when $i\equiv j\pmod{k}$, with $k$ is clear from the context.

\begin{theorem}[Blip distribution for squared singular values]
\label{thm:singVal blip}
The empirical blip squared singular spectral measure of a complex symmetric $k$-checkerboard ensemble converges almost surely to the measure with $r$\textsuperscript{{\rm th}} centered moments equal to the $r$\textsuperscript{{\rm th}} centered moments of the empirical spectral measure of the $k\times k$ hollow Gaussian Orthogonal Ensemble, scaled by a factor of $(\sqrt{2}/k)^r$.
\end{theorem}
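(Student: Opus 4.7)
The plan is to isolate the $k$ large squared singular values of $A$ via a rank-$k$ perturbation argument and to identify their limiting centered distribution with the spectrum of a $k\times k$ hollow GOE via a compression calculation. Write $A = A_D + A_R$, where $A_D$ is the deterministic matrix with entry $1$ at position $(i,j)$ iff $i \equiv j \pmod{k}$ and $0$ elsewhere, and $A_R$ is the complex symmetric random matrix with entries $X+iY$ off the checkerboard positions and $0$ on them. Letting $v_j$ be the indicator of the residue class $j \pmod k$ and $\hat v_j = v_j/\sqrt{N/k}$, one checks $A_D = (N/k) P_V$, where $P_V$ is orthogonal projection onto $V := \operatorname{span}\{\hat v_1, \ldots, \hat v_k\}$. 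Therefore $A_D^*A_D = (N^2/k^2) P_V$ has a $k$-fold eigenvalue at $N^2/k^2$ on $V$, giving the blip center in Definition \ref{def:EBSSSM}, and vanishes on $V^\perp$.

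Under this decomposition, $B := A^*A = A_D^2 + A_D A_R + A_R^*A_D + A_R^*A_R$, and a perturbation analysis via the Schur complement in the $V \oplus V^\perp$ decomposition (using that the spectral gap of $A_D^2$ is $N^2/k^2 \gg \|A_R^* A_R\| \sim N$) shows the top $k$ eigenvalues of $B$ lie in $N^2/k^2 + O(N)$ while the remaining $N-k$ eigenvalues are $O(N)$. Under the scaling $k^2 \sigma/N^2$ the blip squared singular values cluster at $1$ and the bulk at $0$; since $f_n(x) = x^{2n}(x-2)^{2n}$ has a zero of order $2n$ at $x = 0$ and $f_n(1) = 1$, the growth constraint $2^{4n(N)} = o(N)$ makes the bulk contribution to the moments of $\mu_{A,N}^{s^2}$ vanish, while each blip eigenvalue contributes weight tending to $1$. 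Asymptotically, computing the $r$-th centered moment of the limit thus reduces (via the Schur complement) to the spectrum of the $k\times k$ effective matrix on the blip eigenspace, whose leading piece is $P_V B P_V$.

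A direct computation yields
\begin{align*}
P_V B P_V \ = \ (N^2/k^2) I_V + (2N/k)\operatorname{Re}(P_V A_R P_V) + P_V A_R^* A_R P_V,
\end{align*}
and $P_V A_R^* A_R P_V / N = (1 - 1/k) I_V + O(1/\sqrt{N})$ by direct variance computation, contributing only a deterministic mean shift that cancels in centered moments. The leading fluctuating term is $(2/k)\operatorname{Re}(P_V A_R P_V)$, whose entries in the basis $\{\hat v_j\}$ satisfy
\begin{align*}
(P_V A_R P_V)_{j_1 j_2} \ = \ \frac{k}{N}\sum_{\substack{i \equiv j_1 \pmod k \\ m \equiv j_2 \pmod k}} (A_R)_{im},
\end{align*}
which for $j_1 \ne j_2$ is a CLT-normalized sum of $(N/k)^2$ iid $\mathrm{CN}(0,1)$ random variables (constrained by $(P_V A_R P_V)_{j_1 j_2} = (P_V A_R P_V)_{j_2 j_1}$ from $A_R^T = A_R$), and for $j_1 = j_2$ is identically zero since $A_R$ vanishes on the checkerboard positions. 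Taking the real part gives a $k \times k$ hollow real symmetric matrix with $N(0, 1/2)$ off-diagonal entries, equivalently $(1/\sqrt{2})\, H_k$ in distribution for $H_k$ a $k \times k$ hollow GOE. Multiplying by $2/k$ produces $(\sqrt{2}/k) H_k$, so the $r$-th centered moments of the limiting blip measure equal $(\sqrt{2}/k)^r$ times the $r$-th centered moments of $H_k$, as claimed.

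The main obstacle is the rigorous control of the weighting function $f_n(k^2 B / N^2)$, a polynomial in $B$ of degree $4n(N) \to \infty$: one must use $2^{4n(N)} = o(N)$ to show that the bulk's contribution to $\operatorname{tr}[f_n(k^2 B/N^2) \cdot ((B - (N^2/k^2) I)/N)^r]$ truly vanishes at any fixed $r$, and simultaneously verify that the second-order Schur complement corrections to the blip eigenvalues are absorbed into the mean shift (being $O(1)$ after normalization by $N$) without affecting the centered moments. Almost-sure (as opposed to in-expectation) convergence then follows from a standard variance bound on the weighted $r$-th centered moment, in parallel with the corresponding argument for the Hermitian $k$-checkerboard ensemble in \cite{RMT2016}.
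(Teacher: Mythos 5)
Your proposal reaches the correct answer by a genuinely different route from the paper. The paper never linearizes: it applies the moment method directly to the weighted trace $\E\bigl[\mathrm{Tr}\,f_n(k^2A^*A/N^2)\cdot(\cdots)^r\bigr]$, expands into cyclic products, and classifies them with the block/configuration combinatorics imported from \cite{RMT2016} (Lemmas \ref{lemma:rthMomentEBSSSM} and \ref{lemma:classContribution}); the only new ingredient there is the conjugate-pairing constraint $\E[a_{ij}a_{ij}]=0$, $\E[\overline{a_{ij}}a_{ij}]=1$, which produces the factor $(1/2)^{r_1/2}$ and hence the $\sqrt{2}$ in the scaling. You instead identify $A_D=(N/k)P_V$, compress $B=A^*A$ to the $k$-dimensional blip eigenspace, and read off the fluctuation matrix $\tfrac{2}{k}\operatorname{Re}(P_VA_RP_V)\to(\sqrt{2}/k)H_k$ by the CLT. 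Your computations here are correct (including the hollowness of the diagonal and the variance $1/2$ of the real parts), and your route is more conceptual: it explains \emph{where} the hollow GOE and the $\sqrt{2}$ come from, and largely bypasses the $S$-class combinatorics. What it costs is the perturbative bookkeeping that the paper's trace expansion avoids entirely.

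Two of the steps you defer deserve more than a flag. First, the Schur-complement correction $\tfrac{k^2}{N^2}P_VBP_{V^\perp}BP_V\approx P_VA_RP_{V^\perp}A_R^*P_V$ is of order $N$ \emph{before} the normalization by $N$ --- the same order as the fluctuations you are keeping --- so a norm bound is useless; you must prove entrywise concentration, namely that $P_VA_RA_R^*P_V=(1-1/k)N\,I_V+O(N^{1/2})$ and that $P_VA_RP_VA_R^*P_V=O(N^{1/2})$, so that the correction is a deterministic multiple of $I_V$ plus $o(N)$. This is true and provable by a second-moment computation, but it is the real content of your "absorbed into the mean shift" claim. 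Second, the closing appeal to "a standard variance bound" for almost sure convergence cannot work as written: by your own analysis the $r$\textsuperscript{th} weighted moment converges in distribution to $\tfrac1k\mathrm{Tr}\bigl((\sqrt{2}/k)H_k\bigr)^r$, a non-degenerate random variable for $r\geq 2$, so its variance does not vanish. What your argument (and, for that matter, the paper's, which likewise only computes $\lim_N\E[M_{A,N}^{s^2(r)}]$) actually establishes is convergence of the \emph{expected} moments to $(\sqrt{2}/k)^r\tfrac1k\E_k[\mathrm{Tr}\,B^r]$, which is the sense in which the limiting measure in the theorem is to be understood.
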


Note that this implies that the blip distribution of the squared singular values converges to the distribution of the hollow GOE scaled by $\sqrt{2}/k$. This is visualized in the $k=2$ case in Figure \ref{fig:singVal blip}.

\begin{figure}[h]
\scalebox{.8}{\includegraphics{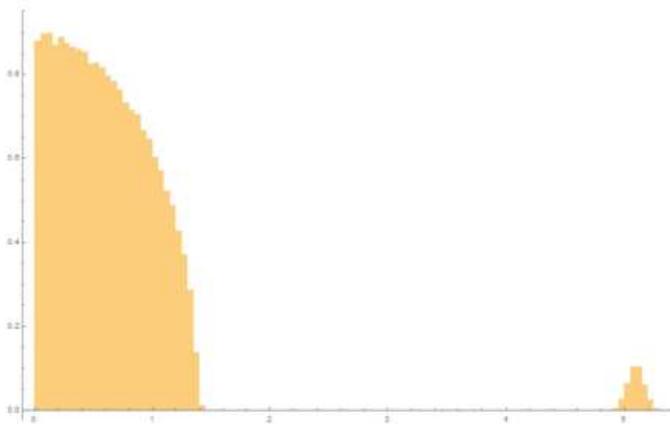}}
\caption{Normalized singular values of $100\times100$ complex symmetric $2$-checkerboard ensemble, $2000$ trials. Note the bulk and blip. This has not been re-scaled to display a quarter circle rather than a quarter ellipse.}
\label{fig:singVal blip}
\end{figure}

We also describe the bulk and blip behavior of the eigenvalues. In the preceding two results, analysis of the singular values was done via the method of moments, taking advantage of the Hermiticity associated with singular values.  
Since Girko in 1984 \cite{Girko1984}, however, work on such non-Hermitian ensembles has proceeded through the $\log$ potential and his Hermitization trick, with the limiting circular law distribution of fully random complex matrices being fully proven by Tao and Vu in 2010 \cite{TaoVu2010}. In analogy with the real method of moments, continuity of the log potential, closely related with the Stieltjes transform, plays a surrogate role to moment continuity theorems.

As short-hand, we refer to the ensemble with iidrv mean $0$ variance $1$ complex entries as the \emph{complex asymmetric ensemble}. Associated measures that arise below are denoted with a superscript ``asym.'' Similarly, measures that arise below in association with a complex symmetric checkerboard ensemble will be denoted with a superscript ``check.''

We also proceed with the $\log$ potential and Hermitization, and will show that, up to an explicit scaling factor and an assumption on the least singular values, our structured checkerboard ensembles also have a bulk that converges to a circular law.

\begin{theorem}[Eigenvalue Bulk - Complex Symmetric $k$-checkerboard]
\label{thm:complexSym bulk}
Consider a sequence of $N\times N$ random matrices $A_N$ from the complex symmetric $k$-checkerboard ensemble, with normalized spectral distribution $\mu_{\frac{1}{\sqrt{N}}A_N}$. Assume appropriate control of the least singular values as in Assumption \ref{assumption:singVal bounds}. Then, as $N\rightarrow\infty$, we have almost sure convergence $\mu_{\frac{1}{\sqrt{N}}A_N}\rightarrow\mu_R^\text{{\rm circ}}$ for $\mu_R^\text{{\rm circ}}$ the uniform measure on the disc centered at the origin with radius $R:=\sqrt{1-1/k}$.
\end{theorem}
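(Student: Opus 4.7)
The plan is to apply Girko's Hermitization trick together with the singular value analysis of Theorem \ref{thm:singVal bulk}, extended to shifted matrices. Set $B_N(z) := A_N/\sqrt{N} - zI$ with singular values $s_1(z) \geq \cdots \geq s_N(z) \geq 0$. Girko's identity yields
\begin{align}
U_{\mu_{A_N/\sqrt{N}}}(z) \;=\; \frac{1}{N}\log\bigl|\det B_N(z)\bigr| \;=\; \frac{1}{N}\sum_{j=1}^N \log s_j(z),
\end{align}
so the strategy is to show this converges almost surely to the log potential $U_{\mu_R^{\text{circ}}}(z)$ of the target uniform disc measure for almost every $z \in \C$. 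The logarithmic potential continuity theorem (in the formulation used by Tao--Vu in their proof of the circular law) then upgrades this pointwise convergence to almost sure weak convergence $\mu_{A_N/\sqrt{N}} \to \mu_R^{\text{circ}}$.

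The first main step is to identify the limit $\nu_z^\infty$ of the empirical singular value distribution of $B_N(z)$. Decompose $A_N = M_N + C_N$, where $M_N$ carries the i.i.d.\ random entries on positions $i\not\equiv j\pmod k$ and $C_N$ is the deterministic matrix with $1$'s on positions $i\equiv j\pmod k$. The matrix $C_N$ is a direct sum of $k$ rank-one all-ones blocks, one per residue class mod $k$, so $\mathrm{rank}(C_N)=k$; the shift $zI$ similarly has low-rank effect on bulk singular-value moments. Extending the moment-method combinatorics of Theorem \ref{thm:singVal bulk} to $B_N(z)^*B_N(z)$ by expanding into closed walks in $M_N, M_N^*, C_N, C_N^*, z, \bar z$, only walks using exclusively random $M_N$-entries contribute at leading order, with the $C_N$ and $zI$ insertions giving lower-order $O(1)$ corrections per trace. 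The limit $\nu_z^\infty$ therefore coincides with the singular-value limit of $\tilde G_N/\sqrt{N} - zI$ for a complex asymmetric i.i.d.\ ensemble $\tilde G_N$ of per-entry variance $1-1/k$, whose integrated logarithm is classically $U_{\mu_R^{\text{circ}}}(z)$ with $R = \sqrt{1 - 1/k}$.

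The main obstacle is passing from weak convergence of singular values to convergence of $\frac{1}{N}\sum_j \log s_j(z)$, since $\log$ is unbounded at $0$. The upper tail is handled by high-probability operator-norm bounds on $A_N/\sqrt{N}$ available from the moment computations. The lower tail is exactly where Assumption \ref{assumption:singVal bounds} enters, controlling the probability that the smallest singular values of $B_N(z)$ fall below a polynomial threshold. Combining these with truncation of $\log$ near $0$, uniform integrability, and a Borel--Cantelli argument for the almost-sure upgrade gives
\begin{align}
\frac{1}{N}\sum_{j=1}^N \log s_j(z) \;\longrightarrow\; U_{\mu_R^{\text{circ}}}(z) \quad\text{a.s., for a.e.\ } z\in\C,
\end{align}
after which logarithmic potential continuity completes the proof. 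The most delicate step is the combinatorial accounting in step two: one must carefully verify that the rank-$k$ checkerboard deterministic block and the shift $zI$ do not perturb the bulk singular-value limit, adapting the leading-order walk analysis from \cite{RMT2016} to the shifted and non-Hermitian setting.
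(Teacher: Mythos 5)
Your overall architecture --- Girko's Hermitization, a moment computation on $B_N(z)^*B_N(z)$ matching the shifted complex i.i.d.\ ensemble of per-entry variance $1-1/k$, uniform integrability of $\log$ with Assumption \ref{assumption:singVal bounds} controlling the origin, and the log potential continuity theorem --- is the same as the paper's. However, there is a genuine gap in your second step. You assert that in the walk expansion of $\frac{1}{N}\E\left[\mathrm{Tr}\left(B_N(z)^*B_N(z)\right)^r\right]$ the deterministic checkerboard part $C_N$ contributes only ``lower-order $O(1)$ corrections per trace.'' This is false: $C_N/\sqrt{N}$ has $k$ singular values of size $\sqrt{N}/k$, so for instance the pure-$C_N$ term contributes $\frac{1}{N}\,\mathrm{Tr}\bigl((C_N^*C_N/N)^r\bigr) = N^{r-1}/k^{2r-1}$, which diverges for $r\geq 2$. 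The moments of the Hermitization of the full (non-hollow) ensemble therefore do not converge, and the moment method cannot be applied to it directly. The paper's fix is to first pass to the hollow ensemble using the finite-rank perturbation result for empirical spectral distributions (\cite[Theorem~1.3]{RMT2016} applied to $B_{N,z}$), compute moments only there (Theorem \ref{thm:log Moments}), and then separately show in Lemma \ref{lemma:log singularities} --- via Weyl's inequality for singular values --- that the $k$ outlier singular values contribute $O\bigl(\tfrac{k}{N}\log N\bigr)\rightarrow 0$ to the log integral. The same issue defeats your upper-tail step as written: $\norm{A_N/\sqrt{N}}_{\mathrm{op}}\sim \sqrt{N}/k$ is not bounded, so an operator-norm bound on $A_N/\sqrt{N}$ alone does not give uniform integrability at infinity; you need the count of how many singular values are that large.

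A secondary inaccuracy: the shift $zI$ is full rank, and its insertions are emphatically not lower order --- they produce the leading-order $|z|^{2j}$ dependence of the limiting moments (e.g.\ $M_z^{(1)}=1+|z|^2$ in the paper's normalization). Your final comparison against $\tilde G_N/\sqrt{N}-zI$ is the correct target, but the intermediate claim that the $zI$ insertions are negligible contradicts it. The correct statement is that the $zI$-insertions contribute identically for the checkerboard and i.i.d.\ ensembles up to the variance rescaling, which is what the signed-tree-walk count in the paper's Lemma \ref{lemma:tree, complex} and Corollary \ref{cor:checkerboard tree} establishes. With the finite-rank reduction inserted before the moment computation and the outlier singular values accounted for in the uniform-integrability step, your proof goes through and coincides with the paper's.
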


See Figure \ref{fig:imagestoseelater} for a visualization of the bulk behavior in a more general setting. (The bulk corresponds to the large circular mass in the center.) This involves a careful combinatorial reduction that connects our complex symmetric checkerboard ensemble to the asymmetric case, via an interpretation of the Hermitianized moments as counting walks on certain trees.


We also describe the position of the split-limiting eigenvalue blip, which will be naturally stated in the context of more general checkerboard ensembles.

\begin{definition}\label{def:gencheckone} We define a \emph{generalized $k$-checkerboard ensemble} to be an ensemble of matrices $A$ with entries either real/complex random variables or deterministic constants, that satisfy $a_{ij}=a_{mn}$ if $i\equiv m\pmod{k}$ and $j\equiv n\pmod{k}$, and such that for fixed $i,j$, $a_{ij}$ is always ``equal'' over all matrices in the ensemble (``equal'' in the sense that the entry in that position is always either the same deterministic value or random variable). The qualifiers symmetric/asymmetric refer to the structure we place on both the random variables and the deterministic entries, and real/complex refer to the random variables used.
\end{definition}

Note that the complex symmetric $k$-checkerboard ensemble from Definition \ref{def:complexSymCheck} is an example of a generalized $k$-checkerboard ensemble, where the deterministic entries are all $1$ and we set $a_{ij}=1$ when $i\equiv j\pmod{k}$. Indeed, many of the above results hold in this more general context as well.
\begin{example} This depicts a generalized $3$-checkerboard asymmetric ensemble, when the entries $*$ are iidrv complex random variables and the $w_i$ are fixed in value and position over the ensemble:
$$
\begin{pmatrix}
w_1 & * & w_2 & w_1 & * & w_2 \\
* & * & * & * & * & * & \cdots\\
* & w_3 & * & * & w_3 & * \\
w_1 & * & w_2 & w_1 & * & w_2 \\
* & * & * & * & * & * & \cdots\\
* & w_3 & * & * & w_3 & * \\
 & \vdots & & & \vdots &
\end{pmatrix}.
$$

\label{ex:genCheck}
\end{example}

\begin{definition}\label{def:genchecktwo} A generalized $k$-checkerboard ensemble is said to be \emph{$m$-regular} if, for any $N\equiv 0\pmod{k}$, there are $Nm/k$ deterministic entries in every row of all $N\times N$ matrices in the ensemble.
\end{definition}

For example, the complex symmetric $k$-checkerboard ensemble from Definition \ref{def:complexSymCheck} is $1$-regular, while the ensemble described in Example \ref{ex:genCheck} is not $m$-regular for any $m$. In this scenario, we find that the bulk results for singular values and the eigenvalues will also hold, up to scaling.

\begin{corollary}
\label{cor:gen singVal bulk}
Consider an $m$-regular generalized complex symmetric $k$-checkerboard ensemble. In analogy with Theorem \ref{thm:singVal bulk}, when $N\rightarrow\infty$ the squared singular values have moments
\begin{align}
M_r\ =\ \left(\frac{R}{2}\right)^{2r} C_r
\end{align}
for $C_r = \frac1{r+1}\ncr{2r}{r}$ the $r$\textsuperscript{{\rm th}} Catalan number and $R=2\sqrt{1-m/k}$, which shows that the bulk of the singular values converges almost surely to a quarter-circle distribution of radius $R$, with the circle's center at the origin.
\end{corollary}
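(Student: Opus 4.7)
The plan is to follow the moment-method argument used for Theorem~\ref{thm:singVal bulk}, tracking how the increased density of deterministic entries in an $m$-regular ensemble modifies the limiting moments. Concretely, I would expand
\begin{align}
M_r^{(N)} \;=\; \frac{1}{N^{r+1}}\,\E\,\text{tr}\bigl((A^*A)^r\bigr)
\end{align}
as a sum over closed walks of length $2r$ on the bipartite index set $[N]\times[N]$, with each step contributing either a matrix entry or its conjugate. Since the random entries have mean zero, only walks whose random edges are paired off contribute at leading order, and the dominant pair-matchings are the non-crossing ones; these are counted by the $r$\textsuperscript{th} Catalan number $C_r$, exactly as in the Gaussian case.

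Next, I would exploit $m$-regularity to compute the effective per-pair variance. By the $k$-periodic structure, each row has exactly $m$ residue classes (mod $k$) of deterministic column positions and $k-m$ of random ones; when the dominant walks are summed over residues modulo $k$ (with the remaining indices giving the usual $N^{r+1}$ factor), the fraction of index choices producing a random entry is $1-m/k$. For the $k$-checkerboard of Theorem~\ref{thm:singVal bulk} this recovers the $1-1/k$ factor. A tree-type walk of length $2r$ has $r$ independent paired edges, each contributing $(1-m/k)$ rather than $1$, so the leading contribution becomes $\bigl(1-m/k\bigr)^{r} C_r = (R/2)^{2r} C_r$ with $R = 2\sqrt{1-m/k}$.

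The remaining steps are routine adaptations of the proof of Theorem~\ref{thm:singVal bulk}: verify that non-tree walks and mixed walks (containing both paired random edges and deterministic edges in unpaired positions) are suppressed by at least one power of $N$, using the standard bookkeeping that each unpaired or crossing edge costs a factor of $1/N$; carry out a second-moment computation to obtain $\operatorname{Var}(M_r^{(N)}) = \bigO{N^{-2}}$ and upgrade convergence in expectation to almost sure convergence via Borel--Cantelli; and finally identify $(R/2)^{2r} C_r$ as the moments of the quarter-circular law of radius $R$ supported on $[0,R]$.

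The main obstacle is the residue-counting step: the deterministic entries are not averaged away by expectation, so one must confirm that they nevertheless drop out of the leading-order combinatorics by exactly the anticipated factor. This reduces to checking that summing over index residues modulo $k$ uniformly weights the random positions across all tree walks, which is precisely what $m$-regularity (uniform row density of deterministic entries) guarantees. Once this uniform weighting is established, the argument collapses onto the structure of Theorem~\ref{thm:singVal bulk}, with the sole change being the replacement of the single deterministic residue class per row by $m$ such classes.
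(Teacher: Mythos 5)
Your tree-walk combinatorics and the residue count are exactly the paper's argument: closed walks on trees with $r+1$ nodes counted by $C_r$, and, by $m$-regularity, $k-m$ admissible residue classes for each child node given its parent, yielding the factor $(1-m/k)^r=(R/2)^{2r}$. However, there is a genuine gap in your treatment of the deterministic entries. You propose to keep them in the matrix and argue that ``mixed walks'' containing deterministic edges are suppressed by powers of $1/N$ via the usual degree-of-freedom bookkeeping. That is false: deterministic entries are not mean zero, need not be paired, and cost nothing in expectation. For instance, already for $r=2$ in the $1$-regular case, the cyclic products in which all four entries are deterministic require only $i_1\equiv i_2\equiv i_3\equiv i_4\pmod{k}$, contributing $N^4/k^3$ terms of expectation $1$, so that $N^{-3}\E\left[\mathrm{Tr}\left((A^*A)^2\right)\right]$ diverges like $N/k^3$. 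This is not a technicality to be absorbed into error terms --- it is the blip: the deterministic part $P$ has $O(k)$ singular values of order $N$, whose $2r$\textsuperscript{th} powers dominate $\mathrm{Tr}\left((A^*A)^r\right)$ for $r\geq 2$. The raw moments of the full ensemble therefore do not converge, and no residue-counting argument can rescue them.

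The necessary missing step, which the paper performs first, is to write $A=M+P$ with $P$ the deterministic part (a repeating block matrix of rank at most $k$), observe that $(M+P)^*(M+P)$ is then a finite-rank perturbation of $M^*M$, and invoke the rank inequality (\cite[Theorem~1.3]{RMT2016}) to conclude that the limiting \emph{bulk} squared singular value distribution is unchanged. Only after passing to this hollow ensemble --- where the deterministic positions hold zeros, so any walk touching one contributes exactly zero --- does the moment method apply, and there your counting goes through verbatim. With that reduction inserted at the start, the rest of your outline (Catalan count of tree walks, the $(1-m/k)$ factor per edge from $m$-regularity, and the variance/Borel--Cantelli upgrade to almost sure convergence) matches the paper's proof.
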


\begin{corollary}
\label{cor:gen eigen bulk}
In analogy with Theorem \ref{thm:complexSym bulk}, consider a sequence of $N\times N$ random matrices $A_N$ from the $m$-regular complex symmetric $k$-checkerboard ensemble, with normalized spectral distribution $\mu_{\frac{1}{\sqrt{N}}A_N}$. Assume appropriate control of the least singular values in appropriate analogy to Assumption \ref{assumption:singVal bounds}. Then, as $N\rightarrow\infty$, we have almost sure convergence $\mu_{\frac{1}{\sqrt{N}}A_N}\rightarrow\mu_R^\text{{\rm circ}}$ for $\mu_R^\text{{\rm circ}}$ the uniform measure on the disc centered at the origin with radius $R:=\sqrt{1-m/k}$.
\end{corollary}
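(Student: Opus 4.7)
The plan is to run the same Hermitization and log-potential argument used for Theorem \ref{thm:complexSym bulk}, with only the variance constant altered from $1-1/k$ to $1-m/k$. Concretely, writing the logarithmic potential $U_\mu(z) = -\int_\C \log|z-w|\,d\mu(w)$, it suffices by Girko's Hermitization and the standard continuity criterion for the log potential to show that for Lebesgue-a.e.\ $z\in\C$, almost surely
\[
\frac{1}{N}\sum_{j=1}^N \log \sigma_j\Bigl(\tfrac{1}{\sqrt{N}} A_N - zI\Bigr) \ \longrightarrow \ \int_\C \log|z-w|\, d\mu_R^{\mathrm{circ}}(w),
\]
where $\sigma_j(\cdot)$ denotes the $j$th singular value. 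This reduces to two ingredients: (a) weak almost sure convergence of the empirical singular value distribution of $\tfrac{1}{\sqrt{N}} A_N - zI$ to the correct deterministic limit, and (b) uniform integrability of $\log s$ against these empirical distributions, which is exactly what the assumed analog of Assumption \ref{assumption:singVal bounds} provides by excluding degeneration at the smallest singular values.

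For (a), the moment-method argument behind Corollary \ref{cor:gen singVal bulk} extends directly to the shifted matrix. The shift by $-zI$ only perturbs the diagonal and contributes terms that are negligible after $1/\sqrt{N}$ normalization; combinatorially, the walks on pair-partition trees controlling the unshifted moments pick up at most $O(1)$ diagonal corrections per vertex, which vanish in the limit. The effective variance entering the quarter-circle is $(k-m)/k$, since in each row of an $m$-regular generalized $k$-checkerboard matrix exactly $Nm/k$ of the $N$ entries are deterministic while the remaining $N(k-m)/k$ are mean-zero variance-one random variables; summing variances and dividing by $N$ gives $1 - m/k$. Thus the shifted squared singular value distribution has the same limit as that of $\tfrac{1}{\sqrt{N}}G_N - zI$, where $G_N$ is a complex asymmetric iid matrix with variance $1-m/k$ entries. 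This limiting singular value law is classical, and its logarithmic moment matches $\int_\C\log|z-w|\,d\mu_R^{\mathrm{circ}}(w)$ for $R = \sqrt{1-m/k}$; this is also consistent with the rescaling that sends a singular-value quarter-circle of radius $2R$ to an eigenvalue disc of radius $R$.

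The main obstacle, as already in Theorem \ref{thm:complexSym bulk}, is controlling the least singular value of $\tfrac{1}{\sqrt{N}}A_N - zI$ with enough probability to justify the exchange of limit and integral in the log potential. Once the stated analog of Assumption \ref{assumption:singVal bounds} is in place, the combinatorial reduction to the asymmetric case --- showing that the $m$-regular deterministic block structure does not affect the bulk limit of the log potential --- proceeds verbatim as in Theorem \ref{thm:complexSym bulk}, the only change being the variance constant. Standard log-potential continuity then yields almost sure convergence of $\mu_{\tfrac{1}{\sqrt{N}} A_N}$ to the uniform measure on the disc of radius $R = \sqrt{1-m/k}$, completing the proof.
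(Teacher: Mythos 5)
Your overall architecture is the same as the paper's: Hermitize via the log potential, reduce to (a) convergence of the empirical squared singular value law of $\tfrac{1}{\sqrt{N}}A_N - zI$ for each fixed $z$ and (b) uniform integrability of $\log x$ supplied near the origin by the assumed analog of Assumption \ref{assumption:singVal bounds}, and identify the variance constant $1-m/k$ by counting the $N(k-m)/k$ random entries per row. However, your justification of (a) contains a genuine error. You claim the shift by $-zI$ ``only perturbs the diagonal and contributes terms that are negligible after $1/\sqrt{N}$ normalization,'' with the walks picking up $O(1)$ diagonal corrections that ``vanish in the limit.'' This is false: the shift is applied \emph{after} the $1/\sqrt{N}$ normalization, so expanding $\bigl(\tfrac{1}{\sqrt{N}}A_N - zI\bigr)^*\bigl(\tfrac{1}{\sqrt{N}}A_N - zI\bigr)$ produces the order-one term $|z|^2 I$ and cross terms $\tfrac{\overline{z}}{\sqrt{N}}A_N$, $\tfrac{z}{\sqrt{N}}A_N^*$, all of which contribute at leading order. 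Indeed the limiting moments are $M_z^{(r)}=\sum_{j=0}^r c_j^{(r)}|z|^{2j}$ (e.g.\ $M_z^{(1)}=1+|z|^2$), which manifestly depend on $z$; if the shift were negligible, the limit $\nu_z$ would be independent of $z$ and the Hermitization would yield no information about the eigenvalue distribution at all.

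What the paper actually does (Theorem \ref{thm:log Moments} together with Corollary \ref{cor:checkerboard tree}) is carry all the $z$-dependent terms through: the $r$\textsuperscript{th} Hermitianized moment is expanded over maps $\Psi$ into pieces weighted by powers of $z$, $\overline{z}$, and $|z|^2$, and for each fixed $\Psi$ the surviving index choices are counted as signed closed walks on trees. The point is not that the shift is negligible but that, for every $\Psi$, this count for the $m$-regular checkerboard ensemble equals the count for the iid ensemble after dividing by $R^{2A+B}$ --- because there are $N$ admissible choices for the root index and $(1-m/k)N$ for each subsequent tree node, exactly your row-counting heuristic, but deployed edge-by-edge rather than as a global variance substitution. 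Your conclusion (that $\nu_{N,z}^{\text{check}}$ has the same limit as for the shifted iid matrix of matching variance) is the correct target; the argument reaching it needs to be the moment-matching one, not a perturbative dismissal of the shift. Separately, you should also note that uniform integrability at infinity is not covered by the assumption on least singular values: it requires Weyl's inequality applied to the rank-at-most-$k$ deterministic perturbation from the hollow ensemble, showing at most $k$ squared singular values of size $O(N)$ and the rest $O(1)$, as in Lemma \ref{lemma:log singularities}.
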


In the absence of Hermitian perturbation results, the characterization of a blip with different limiting behavior is not so readily obtainable for complex distributions. The techniques we use to characterize the complex eigenvalue blip will be markedly more involved than a short perturbation argument.

Fix a generalized $k$-checkerboard asymmetric ensemble.\footnote{We take the ensemble to be asymmetric instead of symmetric to accommodate general asymmetric patterns for the deterministic entries, see for example Example \ref{ex:genCheck}.} Note that any generalized $k$-checkerboard matrix $A$ can be decomposed as $A=M+P$ where $M$ is a generalized $k$-checkerboard matrix with all deterministic entries set to $0$, and $P$ is finite rank (at most $k$), completely deterministic, and composed of repeating blocks of some fixed $k\times k$ matrix $B$ (determined by the ensemble); we will use this notation when discussing the blip for generalized checkerboard matrices.

\begin{example}
\label{ex:Bmat}
For example, the $3\times 3$ matrix $B$ associated with the ensemble in Example \ref{ex:genCheck} is
$$B=\begin{pmatrix} w_1 & 0 & w_2 \\
0 & 0 & 0 \\
0 & w_3 & 0
\end{pmatrix}.$$
\end{example}
For $A_N$ an $N\times N$ matrix from the ensemble, we expect a vanishing proportion of the eigenvalues growing of order $N$ (referred to as the blip) and the remaining eigenvalues of size $N^{1/2}$ (referred to as the bulk) as this would correspond, heuristically, to the behavior of the singular values as in Proposition \ref{prop:singVal regimes}. One also expects, heuristically, that the spectral distribution should follow the distribution of the matrix $P$ up to an error of size $O(N^{1/2})$ from the matrix $M$, as occurs in the real case. Roughly speaking, we expect a clump of eigenvalues whose size is around the order of $N^{1/2}$ at each eigenvalue of $P$, with the bulk consisting of all the clumps associated to the zero eigenvalues of $P$, which has fixed rank at most $k$ as $N\rightarrow\infty$. The blip distribution, then, should reflect the distribution of the nonzero eigenvalues of $B$.

This heuristic seems to follow numerical simulation. See Figure \ref{fig:imagestoseelater} (left), which corresponds to an ensemble with matrix $B$ having eigenvalues chosen from roots of unity with appropriate multiplicity.

We give a justification for this heuristic and numerical understanding of the blip. To extract the blip position, we thus modify the empirical spectral measure $\mu_{A_N}$ using two types of renormalization -- dividing the matrix by $N$ so that the location of the blip is of constant order as $N\rightarrow\infty$, while the bulk is vanishing as $O(N^{-1/2})$, and multiplying the total measure by $N$ so that the measure of the blip remains constant rather than vanishing.

\begin{definition}
Let $A_N$ be an $N\times N$ matrix. Define the renormalized measure $\tilde{\mu}_{A_N}:=(N/k)\mu_{\frac{k}{N}A_N}$, where $\mu_{A_N}$ is the empirical spectral measure of $A_N$ (on $\C$).
\end{definition}

We wish to extract an almost sure limiting measure $\tilde{\mu}_{A_N}\rightarrow\tilde{\mu}$ as $N\rightarrow\infty$ over sequences of matrices $\{A_N\}$ from the ensemble. However, we expect such a measure $\tilde{\mu}$ to have a singularity at $0$, since each $\tilde{\mu}_{A_N}$ has total measure $N/k$, the bulk of which is of size $O(N^{-1/2})$, going to $0$ as $N\rightarrow\infty$.

To avoid this singularity, we will instead restrict our measures by excising small neighborhoods at the origin.


\begin{notation*}
For $\epsilon>0$, let $\mathcal{B}_\epsilon=\{z:|z|\leq\epsilon\}\subset \C$ and $\Omega_\epsilon=\C\setminus \mathcal{B}_\epsilon$.
\end{notation*}

With some abuse of notation, we use $\tilde{\mu}_N$ to denote both the full measure on $\C$ and the measure restricted to $\Omega_\epsilon$ where appropriate. Instead of convergence of $\tilde{\mu}_{A_N}\rightarrow\tilde{\mu}$ on $\C$, we restrict to $\Omega_{\epsilon}$ to avoid the limiting singularity at $0$.

Unfortunately, even the existence of a limiting measure associated to appropriate normalized measures extracting blip behavior is not clear -- one might hope to proceed through the log potential, though certain normalization conditions will yield singularities that present serious obstacles. We show that, assuming a limiting measure exists, the limiting measure must indeed be characterized by the spectral distribution of $B$.

\begin{theorem}
\label{thm:complex blip}
Assume, restricted to $\Omega_\epsilon$, that $\tilde{\mu}_N\rightarrow\tilde{\mu}$ almost surely for every $\epsilon>0$. Then for any fixed $\epsilon>0$ smaller than all eigenvalues of $B$, $\tilde{\mu}$ must be the spectral measure of $B$ restricted to $\Omega_\epsilon$.
\end{theorem}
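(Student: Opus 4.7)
My strategy is to leverage the decomposition $A_N = M_N + P_N$, with $M_N$ carrying the mean-zero randomness and $P_N$ deterministic of rank at most $k$, and then to identify $\tilde{\mu}$ via an outlier-eigenvalue analysis. Given the hypothesis $\tilde{\mu}_N \to \tilde{\mu}$ a.s.\ on $\Omega_\epsilon$, I will reduce the theorem to showing that, a.s., for every continuous $\phi$ compactly supported in $\Omega_\epsilon$,
\[
\int \phi \, d\tilde{\mu}_N \;=\; \frac{1}{k}\sum_{i=1}^N \phi\bigl(\lambda_i((k/N)A_N)\bigr) \;\longrightarrow\; \frac{1}{k}\sum_{j=1}^k \phi(\lambda_j(B)) \;=\; \int \phi\,d\mu_B.
\]
Since $\phi$ vanishes on $\mathcal{B}_\epsilon$ and $\epsilon$ is smaller than all $|\lambda_j(B)|$, this in turn reduces to the outlier statement: a.s., the eigenvalues of $(k/N)A_N$ lying in $\Omega_\epsilon$ converge as a multiset to the eigenvalues of $B$.

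To prove this outlier statement, I plan to use the matrix determinant lemma to expose the low-rank structure. After reordering indices so residue classes modulo $k$ are contiguous, write $P_N = UBU^T$ with $U$ the $N \times k$ indicator matrix of residue classes; then $U^TU = (N/k)\,I_k$. Setting $w = (N/k) z$, the MDL gives
\[
\det(A_N - wI) \;=\; \det(M_N - wI) \cdot \det\bigl(I_k + BU^T(M_N - wI)^{-1} U\bigr).
\]
For $z \in \Omega_\epsilon$, the a.s.\ bound $\|M_N\|_{\mathrm{op}} = O(\sqrt{N})$ (standard for iidrv non-Hermitian ensembles) makes $M_N - wI$ invertible for $N$ large. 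Expanding the resolvent as a Neumann series in $M_N/w$ and using $U^TU = (N/k)\,I_k$ to extract the leading term yields
\[
U^T(M_N - wI)^{-1} U \;=\; -\frac{1}{z}\,I_k + o(1)
\]
uniformly on compact subsets of $\Omega_\epsilon$, with the higher corrections $U^T M_N^j U / w^{j+1}$ controlled a.s.\ by moment bounds on $\mathrm{tr}(M_N^j)$. Hence $\det\bigl(I_k + BU^T(M_N - wI)^{-1} U\bigr) \to \det(I_k - B/z)$ uniformly on compact subsets of $\Omega_\epsilon$, and the zeros of the limit are precisely the eigenvalues of $B$.

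Since the poles of the second factor (at the eigenvalues of $(k/N)M_N$) lie in $\mathcal{B}_\epsilon$ for $N$ large, the eigenvalues of $(k/N)A_N$ in $\Omega_\epsilon$ coincide with the zeros in $\Omega_\epsilon$ of that factor. Hurwitz's theorem for holomorphic functions then identifies these $k$ zeros with the $k$ eigenvalues of $B$, completing the outlier statement and the proof. \textbf{The main obstacle} I expect is promoting the pointwise resolvent convergence to the uniform-on-compacta convergence required by Hurwitz, together with a counting argument to rule out extra outliers escaping into $\Omega_\epsilon$. Unlike the Hermitian case where Weyl's inequalities directly constrain eigenvalue movement under low-rank perturbations, here one must achieve simultaneous a.s.\ control of the operator norm $\|M_N\|_{\mathrm{op}}$ and of the traces $\mathrm{tr}(M_N^j)/N^j$ across $z$ in compact subsets of $\Omega_\epsilon$, which forms the technical crux.
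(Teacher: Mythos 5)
Your proposal is correct in outline but takes a genuinely different route from the paper. The paper works measure-theoretically: it first shows (via the pushforward of $\tilde{\mu}_N$ under $f(z)=zp(z)$, $p$ the characteristic polynomial of $B$, together with the bound $\norm{f(\tfrac{k}{N}A_N)}_{\mathrm{op}}=\bigO{N^{-1/2+\delta}}$) that the limit $\tilde{\mu}$ must be discrete and supported on the nonzero eigenvalues of $B$; it then computes the holomorphic moments $\tfrac{k^r}{N^{r+1}}\E[\mathrm{Tr}(A_N^r)]$ via the trace lemma, controls the error from excising $\mathcal{B}_\epsilon$, and invokes a linear-independence argument in sequence space to show that finite discrete measures supported away from $0$ are determined by their holomorphic moments. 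Your route instead localizes the outlier eigenvalues directly: writing $P_N=UBU^T$ with $U^TU=(N/k)I_k$, the matrix determinant lemma plus the Neumann expansion of $(M_N-wI)^{-1}$ reduces everything to the uniform convergence $\det(I_k+BU^T(M_N-wI)^{-1}U)\rightarrow\det(I_k-B/z)$ on compacta of $\Omega_\epsilon$, and Hurwitz/the argument principle then pins down the $\Omega_\epsilon$-eigenvalues of $\tfrac{k}{N}A_N$ as converging, with multiplicity, to the nonzero spectrum of $B$. The technical crux you flag is genuinely routine given the a.s.\ bound $\norm{M_N}_{\mathrm{op}}=\bigO{N^{1/2+\delta}}$ (which the paper also uses): the cleanest control of the corrections is $\norm{U^TM_N^jU}\leq\norm{U}^2\norm{M_N}_{\mathrm{op}}^j=(N/k)\norm{M_N}_{\mathrm{op}}^j$, which is uniform in $z\in\Omega_\epsilon$ and summable against $|w|^{-j-1}$ with $|w|\geq\epsilon N/k$ --- your suggestion to use moment bounds on $\mathrm{tr}(M_N^j)$ is a slight detour, since traces do not bound the compressed matrices $U^TM_N^jU$ as directly as the operator norm does; and zeros escaping to infinity are excluded because the second determinant factor tends to $1$ uniformly as $|z|\rightarrow\infty$. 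What your approach buys is significant: it proves convergence of $\tilde{\mu}_N$ on $\Omega_\epsilon$ outright rather than assuming it (the paper explicitly leaves existence of the limit as a hypothesis), and it delivers eigenvalue multiplicities without the detour through moment-determinacy of discrete measures. What the paper's approach buys is robustness: it never needs to invert $M_N-wI$ or track a determinant identity, only trace asymptotics and operator-norm bounds, which is why it extends verbatim to the general (not necessarily block-constant-rewritable) deterministic patterns considered there.
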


For example, with ensemble as in Example \ref{ex:genCheck}, this theorem states that the blip is described by the measure $\tilde{\mu}$ which will be the restriction to $\Omega_\epsilon$ of the spectral measure of the $3\times3$ matrix listed in Example \ref{ex:Bmat}.

\begin{remark}
In the theorem statement, we have neglected distinguishing $\tilde{\mu}$ restricted to $\Omega_\epsilon$ for different $\epsilon$, since $\tilde{\mu}$ on $\Omega_\epsilon$ restricts to $\tilde{\mu}$ on $\Omega_{\epsilon'}$ when $0<\epsilon'<\epsilon$.
\end{remark}

The basic idea is to show first that the limiting measure must be discrete and finitely supported on the nonzero eigenvalues of $B$, and to then show that holomorphic moments (calculated from the eigenvalue trace lemma) are enough to characterize discrete distributions, while also controlling the error from computing moments on $\Omega_\epsilon$ instead of all of $\C$.

As a corollary, this gives us better control on the total measure of the bulk, in analogy with the case of real eigenvalues.

\begin{corollary}
\label{cor:bulk mass}
Write $k'$ for the number of nonzero eigenvalues of $B$, with multiplicity. The bulk of the spectral measure $\mu_{A_N}$ consists of $N-k'$ eigenvalues of order $N^{1/2+\delta}$ for any $\delta>0$. That is, $\tilde{\mu}_{A_N}$ almost surely has total measure $N-k'$ on $\mathcal{B}_{N^{-1/2+\delta}}$ as $N\rightarrow\infty$.
\end{corollary}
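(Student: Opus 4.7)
The strategy is to combine the discrete blip description from Theorem \ref{thm:complex blip} with Weyl-type singular value inequalities to pin down the $N - k'$ bulk eigenvalues.

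From Theorem \ref{thm:complex blip}, for any fixed $\epsilon$ smaller than the smallest nonzero eigenvalue modulus of $B$, the almost sure limit $\tilde{\mu}|_{\Omega_\epsilon}$ is the spectral measure of $B$ restricted to $\Omega_\epsilon$, which is a sum of $k'$ atoms each of mass $1/k$. Since the integer-valued count $k \cdot \tilde{\mu}_N(\Omega_\epsilon) = \#\{i : |\lambda_i(A_N)| > \epsilon N/k\}$ converges to $k'$ almost surely, it equals exactly $k'$ for all large $N$. This exhibits exactly $k'$ ``blip'' eigenvalues of $A_N$, each of modulus at least $cN$ for a constant $c > 0$ depending on the smallest nonzero eigenvalue modulus of $B$.

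To bound the remaining $N - k'$ eigenvalues by $N^{1/2+\delta}$, decompose $A_N = M_N + P_N$, with $M_N$ the random part (deterministic entries zeroed) and $P_N = J_{N/k} \otimes B$ deterministic of rank $r \leq k$. Standard Bai--Yin-type estimates yield $\|M_N\|_{\mathrm{op}} = O(\sqrt{N})$ almost surely. Weyl's singular value inequality $\sigma_{i+j-1}(A_N) \leq \sigma_i(M_N) + \sigma_j(P_N)$ at $j = r+1$ gives $\sigma_{r+i}(A_N) \leq \sigma_i(M_N) = O(\sqrt{N})$ for all $i \geq 1$. Now invoke Weyl's log-majorization $\prod_{i=1}^{k'+1} |\lambda_i(A_N)| \leq \prod_{i=1}^{k'+1} \sigma_i(A_N)$. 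In the principal case $k' = r$ (true for the complex symmetric $k$-checkerboard ensemble of Definition \ref{def:complexSymCheck} and its $m$-regular analogues with $B$ of full rank), the right side is $O(N^{k'}) \cdot O(\sqrt{N}) = O(N^{k'+1/2})$, while the left side is at least $(cN)^{k'} |\lambda_{k'+1}(A_N)|$. Hence $|\lambda_{k'+1}(A_N)| = O(\sqrt{N}) \leq N^{1/2+\delta}$ for $N$ large, and since eigenvalues are indexed by decreasing modulus, the subsequent $N - k'$ bulk eigenvalues satisfy the same bound.

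The principal obstacle is the case $k' < r$, where $B$ has a nontrivial nilpotent part. Then $P_N$ has $r$ singular values of order $N$ but only $k'$ nonzero eigenvalues; the Weyl log-majorization bound degrades to $|\lambda_{k'+1}(A_N)| = O(N)$, too weak for the bulk claim. Closing this gap would demand a finer tool---e.g., a direct resolvent perturbation argument using the finite-rank decomposition $A_N = M_N + P_N$ together with the $O(\sqrt{N})$ control on $\|M_N\|_{\mathrm{op}}$---but for the ensembles of primary interest in this paper, $k' = r$ and the argument above suffices.
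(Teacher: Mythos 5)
Your argument is essentially the paper's: Theorem \ref{thm:complex blip} pins down $k'$ eigenvalues of modulus $\gtrsim N$ from below, Weyl's singular value inequality applied to the decomposition $A_N=M_N+P_N$ together with $\norm{M_N}_{\mathrm{op}}=O(N^{1/2+\delta})$ controls all but the top $\mathrm{rank}(P_N)$ singular values, and the majorization $\prod_{i\leq m}\abs{\lambda_i}\leq\prod_{i\leq m}\sigma_i$ transfers this to $\abs{\lambda_{k'+1}}$. The obstacle you flag when $k'<\mathrm{rank}(B)$ is genuine, but it is not yours alone: the paper's proof silently assumes it away by asserting that only the $k'$ largest singular values are $O(N)$, which via Weyl requires $\mathrm{rank}(P)=k'$, i.e., that $B$ has no nilpotent part. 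This already fails for the paper's own Example \ref{ex:genCheck}, whose matrix $B$ has rank $2$ but a single nonzero eigenvalue, so your explicit acknowledgment of the restriction is, if anything, more careful than the source.
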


Sections \ref{section:complexCheck} and \ref{section:genCheck} give proofs for these results. In Section \ref{section:complexCheck}, we prove Theorem \ref{thm:singVal bulk} and Theorem \ref{thm:singVal blip}, the bulk and blip results of the singular values for complex symmetric checkerboard ensembles, as well as Theorem \ref{thm:complexSym bulk}, our bulk result for the complex eigenvalue distribution of complex symmetric checkerboard ensembles. In Section \ref{section:genCheck}, we prove singular value and eigenvalue bulk analogs in Corollary \ref{cor:gen singVal bulk} and Corollary \ref{cor:gen eigen bulk} for generalized checkerboard matrices, and prove Theorem \ref{thm:complex blip} and Corollary \ref{cor:bulk mass} to describe the complex blip behavior. We conclude with some conjectural observations concerning generalized checkerboard matrices and related ensembles in Subsection \ref{section:conjectures}. Some terminology and auxiliary material can be found in Appendix \ref{appendix:terminology}.


\section{Complex checkerboard ensembles}
\label{section:complexCheck}

We first establish the existence of two squared singular value regimes with a matrix perturbation result.

\begin{proposition}
\label{prop:singVal regimes}
As $N\rightarrow\infty$, the squared singular values of $k$-checkerboard complex symmetric matrices almost surely fall into two regimes: $N-k$ of the squared singular values are $\bigO{N^{1+\epsilon}}$, and $k$ of the squared singular values are $N^2/k^2+\bigO{N^{3/2+\epsilon}}$, for any $\epsilon>0$.
\end{proposition}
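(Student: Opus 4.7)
The plan is to exploit the decomposition $A=M+P$ where $P$ is the deterministic matrix with $P_{ij}=1$ when $i\equiv j\pmod k$ and $P_{ij}=0$ otherwise, and $M$ is the ``hollow'' part carrying all of the randomness (the mean-zero complex symmetric entries, with $0$'s placed along the residue-diagonals where $P$ has $1$'s). Weyl's inequality for singular values, namely $|\sigma_i(A)-\sigma_i(P)|\le\|M\|_{\mathrm{op}}$ for every $i$, is available even without Hermiticity (this is the only Weyl-type estimate we need), so the two regimes will fall out once we (i) compute the singular values of $P$ exactly and (ii) establish the almost-sure bound $\|M\|_{\mathrm{op}}=O(N^{1/2+\epsilon})$.

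For step (i), observe that permuting rows and columns by their residue mod $k$ conjugates $P$ to $I_k\otimes J_{N/k}$, where $J_{N/k}$ is the all-ones matrix. Hence the eigenvalues of $P$ are $N/k$ with multiplicity $k$ and $0$ with multiplicity $N-k$; since $P$ is real symmetric, the singular values coincide with the absolute eigenvalues, giving $\sigma_1(P)=\cdots=\sigma_k(P)=N/k$ and $\sigma_{k+1}(P)=\cdots=\sigma_N(P)=0$. Equivalently, $P^{*}P=(N/k)P$, which immediately yields the squared singular values $N^2/k^2$ with multiplicity $k$ and $0$ with multiplicity $N-k$.

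For step (ii), $M$ is a complex symmetric Wigner-type matrix with iid mean-zero, variance-one off-diagonal entries (and $0$'s on the residue-diagonals, which only lowers the norm). Standard operator-norm estimates for Wigner ensembles give $\|M\|_{\mathrm{op}}\le C\sqrt{N}$ with overwhelming probability, and hence $\|M\|_{\mathrm{op}}=O(N^{1/2+\epsilon})$ almost surely via a Borel--Cantelli argument applied along the sequence $N$ (using sufficient moment bounds on the entries). The main obstacle is really just invoking the right black-box: we need an almost sure, not just in-probability, bound. If one does not wish to cite a sharp result, the trivial bound $\|M\|_{\mathrm{op}}\le\|M\|_F=O(N^{1/2+\epsilon})$ a.s.\ from a Chebyshev/Borel--Cantelli estimate on $\sum_{i,j}|M_{ij}|^2$ also suffices, since we only need polynomial slack.

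Combining (i) and (ii) via Weyl, for $i\le k$ we get
\begin{align}
\sigma_i(A) \;=\; \frac{N}{k}+O(N^{1/2+\epsilon}),
\qquad
\sigma_i(A)^2 \;=\; \frac{N^2}{k^2}+O(N^{3/2+\epsilon}),
\end{align}
while for $i>k$ we get $\sigma_i(A)\le\|M\|_{\mathrm{op}}=O(N^{1/2+\epsilon})$, hence $\sigma_i(A)^2=O(N^{1+\epsilon})$ after relabeling $\epsilon$. This is precisely the split into the $k$ blip squared singular values near $N^2/k^2$ and the $N-k$ bulk squared singular values of order $N^{1+\epsilon}$ claimed in the proposition.
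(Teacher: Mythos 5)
Your proposal follows essentially the same route as the paper: decompose $A=M+P$, compute the spectrum of $P$ exactly (the paper likewise notes $P$ has $k$ singular values equal to $N/k$ and the rest zero), invoke an almost-sure bound $\norm{M}_{\mathrm{op}}=\bigO{N^{1/2+\epsilon}}$ (the paper gets this by generalizing Lemma B.3 of \cite{RMT2016}, a moment-method estimate), and apply Weyl's inequality for singular values. Your main argument is correct, and your explicit identification of $P$ with $I_k\otimes J_{N/k}$ is a nice touch of added detail.

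One claim in your write-up is false, however: the ``trivial bound'' $\norm{M}_{\mathrm{op}}\le\norm{M}_F=\bigO{N^{1/2+\epsilon}}$ does not hold. Since $\norm{M}_F^2=\sum_{i,j}|m_{ij}|^2$ is a sum of roughly $N^2(1-1/k)$ terms each of mean $1$, one has $\norm{M}_F\sim N\sqrt{1-1/k}$ almost surely, which is far too large to feed into Weyl's inequality (it would swamp the $N/k$ gap entirely). So you cannot avoid a genuine operator-norm estimate here; you must use a moment-method bound on $\E[\mathrm{Tr}((M^*M)^p)]$ for $p$ growing with $N$ (as in the cited Lemma B.3) or an equivalent Wigner-type result, together with Borel--Cantelli, which is exactly the black box your primary route invokes. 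With that fallback deleted, the proof stands.
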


\begin{proof}
A $k$-checkerboard matrix $A$ can be decomposed as $M+P$, where
\begin{align}
m_{i,j}\ =\
\begin{cases}
a_{i,j} & \text{if } i\not\equiv j \pmod{k}
\\
0 & \text{otherwise}
\end{cases}
\hspace{2 cm}
p_{i,j}\ =\
\begin{cases}
0 & \text{if } i\not\equiv j \pmod{k}
\\
a_{i,j} & \text{otherwise.}
\end{cases}
\end{align}

A straightforward generalization of \cite[Lemma~B.3]{RMT2016} in the context of our above argument for the square singular values bulk shows that as $N\rightarrow\infty$, $\norm{A}_{\mathrm{op}}=\bigO{N^{1/2+\epsilon}}$ almost surely. Since $P$ has $k$ singular values at $N/k$, and $N-k$ eigenvalues at $0$, Weyl's inequality for singular values implies that almost surely, $N-k$ of the singular values are $\bigO{N^{1/2+\epsilon}}$, and $k$ of the singular values are $N/k+\bigO{N^{1/2+\epsilon}}$. This implies the proposition for squared singular values.
\end{proof}

We modify the combinatorics and weighting function from \cite{RMT2016} to extract the limiting distribution of the blip for the squared singular values of complex symmetric checkerboard matrices.



\subsection{Singular values of complex checkerboard matrices: bulk}\ \\

In this subsection we establish the limiting bulk measure for singular values of complex symmetric $k$-checkerboard matrices.

We use the method of moments. We wish to match the moments of our limiting squared singular value distribution with the moments of the quarter-circular distribution.

\begin{proposition}
\label{prop:quartercircle moments}
Let $X$ be the random variable with probability density function a quarter circle supported on $[0,2]$ of radius $2$ with circle-center at the origin (normalized by $\pi$ so that it is a probability distribution). Then the random variable $X^2$ has its $r$\textsuperscript{{\rm th}} moment, $M_r$, equal to the $r$\textsuperscript{{\rm th}} Catalan number
\begin{align}
C_r\ :=\ \frac{1}{r+1}\binom{2r}{r}.
\end{align}
\end{proposition}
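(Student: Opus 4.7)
The plan is to compute the integral directly. The quarter-circle density on $[0,2]$ is $f(x) = \tfrac{1}{\pi}\sqrt{4-x^2}$ (since the area under $y=\sqrt{4-x^2}$ on $[0,2]$ is $\pi$, giving the correct normalization factor). So I need to verify
\[
M_r \ = \ \int_0^2 x^{2r}\cdot \frac{1}{\pi}\sqrt{4-x^2}\,dx \ = \ C_r.
\]

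First I would substitute $x = 2\sin\theta$, which converts the radical to $2\cos\theta$ and $dx$ to $2\cos\theta\,d\theta$, giving
\[
M_r \ = \ \frac{4\cdot 4^r}{\pi}\int_0^{\pi/2} \sin^{2r}\theta\,\cos^2\theta\,d\theta.
\]
This is a standard Beta integral: $\int_0^{\pi/2}\sin^{2r}\theta\cos^2\theta\,d\theta = \tfrac{1}{2}B(r+\tfrac{1}{2},\tfrac{3}{2}) = \tfrac{\Gamma(r+1/2)\Gamma(3/2)}{2\,\Gamma(r+2)}$. Plugging in $\Gamma(3/2)=\tfrac{\sqrt{\pi}}{2}$ and the duplication identity $\Gamma(r+1/2) = \tfrac{(2r)!\sqrt{\pi}}{4^r r!}$ then yields
\[
\int_0^{\pi/2}\sin^{2r}\theta\cos^2\theta\,d\theta \ = \ \frac{\pi\,(2r)!}{4^{r+1}\,r!\,(r+1)!},
\]
so that $M_r = \dfrac{(2r)!}{r!(r+1)!} = C_r$, exactly as claimed.

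A conceptually cleaner alternative I would mention as a remark: the stated quarter-circle density is exactly the law of $|Y|$ where $Y$ follows the Wigner semicircle distribution on $[-2,2]$ with density $\tfrac{1}{2\pi}\sqrt{4-x^2}$. Since only even powers appear, $\E[X^{2r}] = \E[|Y|^{2r}] = \E[Y^{2r}]$, and the latter is the classical Wigner moment, known to equal $C_r$.

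There is no real obstacle here; the only thing to watch is getting the normalization right (it is $1/\pi$, not $2/\pi$, because the area of a quarter disk of radius $2$ is $\pi$), and correctly applying the Gamma duplication identity. Both routes (direct Beta computation or the reduction to the semicircle) are standard and short.
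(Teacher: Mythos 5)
Your computation is correct, and it is more self-contained than what the paper actually does. The paper's proof is essentially your closing ``remark'': it simply cites the known fact (e.g.\ from Bai--Silverstein) that the even $2r$\textsuperscript{th} moments of both the semicircular and quarter-circular distributions equal $C_r$, and then observes that the $r$\textsuperscript{th} moment of $X^2$ is the $2r$\textsuperscript{th} moment of $X$. Your main argument instead carries out the integral explicitly: the normalization $1/\pi$ is right (the quarter disk of radius $2$ has area $\pi$), the substitution $x=2\sin\theta$ and the Beta-function evaluation
\[
\int_0^{\pi/2}\sin^{2r}\theta\,\cos^2\theta\,d\theta \ = \ \tfrac{1}{2}B\left(r+\tfrac12,\tfrac32\right) \ = \ \frac{\pi\,(2r)!}{4^{r+1}\,r!\,(r+1)!}
\]
are correct, and the factors of $4^{r+1}$ and $\pi$ cancel to give $M_r=(2r)!/\bigl(r!\,(r+1)!\bigr)=C_r$. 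What your route buys is a verification that does not lean on an external reference; what the paper's route buys is brevity and a direct link to the semicircle moments already in play elsewhere in the argument. Either is acceptable here.
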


\begin{proof}
The even $2r$\textsuperscript{{\rm th}} moments of both the semicircular distribution and the quarter-circular distribution of $X$ are known to equal the $r$\textsuperscript{{\rm th}} Catalan number, see for example \cite{BaiSilverstein2010}. The proposition then follows by noting that the $r$\textsuperscript{{\rm th}} moment of the random variable $X^2$ is also the $2r$\textsuperscript{{\rm th}} moment of random variable $X$.
\end{proof}

\begin{proof}[Proof of Theorem \ref{thm:singVal bulk}]
Before we can apply the method of moments, we must first consider the perturbation as in the proof of Proposition \ref{prop:singVal regimes}, which exhibits complex symmetric $k$-checkerboard matrices as a finite rank (i.e., fixed as $N\rightarrow\infty$) perturbation from the corresponding hollow complex symmetric $k$-checkerboard matrices. Then, since $M+P$ a finite rank perturbation of $M$ implies that $(M+P)^*(M+P)$ is a finite rank perturbation of $M^*M$, we can apply Theorem 1.3 of \cite{RMT2016} to find that the complex symmetric $k$-checkerboard ensemble and the corresponding hollow ensemble have the same limiting squared singular value distribution. We thus apply the method of moments below to the hollow ensemble.

By the eigenvalue trace lemma and linearity of expectation, the $r$\textsuperscript{{\rm th}} moment of the bulk squared singular spectral measure is computed as
\begin{align}
\E\left[\nu_{A,N}^{s^2(r)}\right] & \ = \ \E\left[\int  \nu_{A,N}^{s^2}(x)x^r\;dx\right] \ = \ \frac1N \ \E\left[\sum_{\sigma \text{ eigenvalue } B}\left(\frac{\sigma}{N}\right)^r\right]\nonumber\\
& \ = \ N^{-r-1}\E\left[\sum_{\sigma \text{ eigenvalue } B}\sigma^r\right] \ = \ N^{-r-1}\E\left[\mathrm{Tr}(B^k)\right]\nonumber\\
& \ = \ N^{-r-1}\sum_{1\le i_1,\cdots, i_r\le N}\E[b_{i_1,i_2}\cdots b_{i_r,i_1}],
\end{align}
where the entries of $B=A^*A$ are given by
\begin{align}
b_{ij} \ = \ \sum_{1\le k\le N} \overline{a_{ki}} a_{kj} \ = \ \sum_{1\le k\le N} \overline{a_{ik}} a_{kj}
\end{align}
by symmetry. Hence
\begin{align}
\E\left[\nu_{A,N}^{s^2(r)}\right] & \ = \ N^{-r-1}\sum_{1\le i_1\le\cdots\le i_r}\E[b_{i_1 i_2}\cdots b_{i_r i_1}]\nonumber\\
& \ = \ N^{-r-1}\sum_{1\le i_1,\cdots, i_r\le N}\sum_{1\le k_1,\ldots,k_r\le N}\E[\overline{a_{i_1k_1}}a_{k_1i_2}\cdots\overline{a_{i_rk_r}}a_{k_ri_1}]\nonumber\\
& \ = \ N^{-r-1}\sum_{1\le i_1,\cdots, i_{2r}\le N} \E[\overline{a_{i_1i_2}}a_{i_2i_3}\cdots\overline{a_{i_{2r-1}i_{2r}}}a_{i_{2r}i_1}]. \label{eq:singVal moments}
\end{align}
Each term $\zeta_I = \overline{a_{i_1i_2}}a_{i_2i_3}\cdots\overline{a_{i_{2r-1}i_{2r}}}a_{i_{2r}i_1}$ in the sum corresponds to a cyclic sequence $I = i_1\cdots i_{2r}$. Then $I$ may be associated to a closed walk on the complete graph with vertices labeled by the elements of the set $\{i_1, ..., i_{2r}\}$ in the order that the vertices are visited. Define the \emph{weight} of $I$ to be the number of distinct entries of $I$. If the weight of $I$ is at least $r+2$, then there is a factor $a$ in $\zeta_I$ independent from all the rest, and thus the expectation $\E[\zeta_I]=0$ (recall we have zeroed out all deterministic entries because we are considering the \emph{hollow} ensemble, and the random variables are all mean $0$).

The sequences of weight at most $r$ contribute negligibly, $o(N^{r+1})$. This is because the sequences may be partitioned into a finite number of equivalence classes by the isomorphism class of the corresponding walk. An isomorphism class of weight $t\le r$ then gives rise to $O(N^t)$ walks of weight $t$ by choosing labels for the distinct nodes in any such walk.

Closer analysis is required for a sequence $I$ of weight $r+1$. First, the walk corresponding to $I$ visits $r + 1$ distinct nodes and traverses $r$ distinct edges. Hence the walk consists of $2r$ steps on a tree with $r+1$ nodes.

Note that $\E[\zeta_I]$ contributes to the sum precisely when all the factors, $a_{ij}$, are matched with their conjugates, $\overline{a_{ij}}$, in which case $\E[\zeta_I]=1$. Indeed, if $a_{ij}$ is an entry of a complex  symmetric $k$-checkerboard matrix $A$ with $i\not\equiv j\pmod{k}$, note that $\E\left[a_{ij}a_{ij}\right]=\E\left[\overline{a_{ij}}\overline{a_{ij}}\right]=0$ while $\E\left[\overline{a_{ij}}a_{ij}\right]=1$. This is because if $a_{ij}\sim X+iY$ for $X,Y$ iidrv mean zero variance $1/2$ random variables, then
\begin{align}
\label{eq:complex expectation1}
\E\left[a_{ij}a_{ij}\right]&\ =\ \E\left[X^2\right]+2i\E\left[X\right]\E\left[Y\right]-\E\left[Y^2\right]\ =\ 0
\\
\label{eq:complex expectation2}
\E\left[\overline{a_{ij}}a_{ij}\right]&\ =\ \E\left[X^2\right]+\E\left[Y^2\right]\ =\ 1.
\end{align}
Thus it suffices to count the number of sequences $I$ satisfying the above condition. In the graph correspondence, the condition on $I$ is equivalent to the walk traversing each tree edge exactly twice, where, for an edge corresponding to $\{i,j\}$ in $I$, one traversal corresponds to $a_{i,j}$ and the other traversal to $\overline{a_{ij}}$ in $\zeta_I$. For a given edge $e$ and the corresponding subwalk $w$ between the first and second traversal of $e$, each edge in $w$ must be traversed and later retraced in the reverse direction, since trees are acyclic. Thus $w$ has an even number of steps, so the two traversals of $e$ occur on steps of opposite parity.
\begin{remark} This implies that the same result holds for the corresponding asymmetric ensemble, since this parity requirement ensures that the combinatorics must be the same in both cases.
\end{remark}
This corresponds in $\zeta_I$ to matching $a_{ij}$  with $\overline{a_{ij}}$; if the steps occurred with the same parity, then $a_{ij}$ would be matched with $a_{ij}$ (or $\overline{a_{ij}}$ with $\overline{a_{ij}}$), resulting in zero expectation. In summary, it suffices to count the number of non-isomorphic trees on $r+1$ nodes with a given starting node, and a given absolute order on the leaves---there is a bijection between such walks and such \emph{ordered} trees given by the order in which the leaves are visited in the walk.

As is well known, there are $C_r$ ordered trees on $r + 1$ nodes, where $C_r$ is the $r$\textsuperscript{{\rm th}} Catalan number \cite{Stan1999}. There is a further restriction: since $a_{ij}=0$ if $i\equiv j\pmod{k}$, an appearance of any such term in the cyclic product will contribute zero expectation. We may then label the nodes in the tree in such a way that no two adjacent nodes have the same congruence in $N^{r+1}\left(\frac{k-1}{k}\right)^r + o(N^{r+1})$ ways. Thus we have
\begin{align}
\E\left[\nu_{A,N}^{s^2(r)}\right] & \ = \ N^{-r-1} \ \left( \sum_{\textrm{weight }I < r+1} + \sum_{\textrm{weight }I = r+1} + \sum_{\textrm{weight }I > r+1} \right)\E[\zeta_I]\nonumber\\
& \ = \ N^{-r-1}\left(o(N^{r+1}) + \textrm{C}_r\left(N^{r+1}\left(\frac{k-1}{k}\right)^r + o(N^{r+1})\right) +0\right)\nonumber\\
& \ = \ \textrm{C}_r\left(\frac{k-1}{k}\right)^r + o(1).
\end{align}
Hence we have proved
\begin{align}
\lim_{N\to\infty} \E\left[\nu_{A,N}^{s^2(r)}\right]\ =\ \textrm{C}_r\left(\frac{k-1}{k}\right)^r\ =\ \textrm{C}_r\left(\frac{R}{2}\right)^{2r},
\end{align}
which are the moments of the (square of the) quarter circle distribution of radius $R = 2\sqrt{1-1/k}$ as in Proposition \ref{prop:quartercircle moments}, which suffices.
\end{proof}


\subsection{Singular values of complex checkerboard matrices: blip}\ \\

Throughout this entire subsection, we follow the notation and terminology in \cite{RMT2016}. For convenience, the relevant terminology from that paper is collected in Appendix \ref{appendix:terminology}. We analyze the blip using the method of moments.

\begin{lemma}
\label{lemma:rthMomentEBSSSM}
The expected $r$\textsuperscript{{\rm th}} moment of the EBSSSM is given by
\begin{align}
\E\left[M_{A,N}^{s^2(r)}\right]\ =\ \frac{1}{k^{r+1}}\sum_{j=0}^{2n}\binom{2n}{j}\sum_{i=0}^{r+j} \binom{r+j}{i}(-1)^{r-i}\left(\frac{k}{N}\right)^{4n+2i+r}\E\left[\mathrm{Tr}(A^*A)^{2n+i}\right].
\end{align}
\end{lemma}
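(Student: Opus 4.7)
The plan is a direct algebraic expansion of the definition of the EBSSSM, followed by the eigenvalue trace lemma. By the sifting property of the delta measures comprising $\mu_{A,N}^{s^2}$, the $r$\textsuperscript{th} moment is
\begin{align*}
M_{A,N}^{s^2(r)} \;=\; \frac{1}{k}\sum_{\sigma} f_n\!\left(\frac{k^2\sigma}{N^2}\right)\!\left(\frac{\sigma - N^2/k^2}{N}\right)^{r},
\end{align*}
where $\sigma$ ranges over the eigenvalues of $A^*A$. Setting $y := k^2\sigma/N^2$ rewrites each summand as a pure power of $N/k$ times the single polynomial $y^{2n}(y-2)^{2n}(y-1)^{r}$ in $y$, since $\frac{1}{N}(\sigma - N^2/k^2) = \frac{N}{k^2}(y-1)$ and $f_n(y) = y^{2n}(y-2)^{2n}$.

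The crux is to expand this polynomial in powers of $y$ in a way that produces the binomial coefficient $\binom{r+j}{i}$ appearing in the claim. That coefficient suggests grouping $(y-2)^{2n}$ with $(y-1)^{r}$ first, rather than expanding each of the three factors in $y$ separately. The identity $(y-2) = (y-1) - 1$ gives
\begin{align*}
(y-2)^{2n}(y-1)^{r} \;=\; \sum_{j=0}^{2n}\binom{2n}{j}(-1)^{2n-j}(y-1)^{r+j},
\end{align*}
after which $(y-1)^{r+j}=\sum_{i=0}^{r+j}\binom{r+j}{i}(-1)^{r+j-i}y^i$. The compound sign $(-1)^{2n-j}(-1)^{r+j-i}$ simplifies to $(-1)^{r-i}$ because $2n$ is even, so multiplying by $y^{2n}$ yields the clean expansion
\begin{align*}
y^{2n}(y-2)^{2n}(y-1)^r \;=\; \sum_{j=0}^{2n}\sum_{i=0}^{r+j}\binom{2n}{j}\binom{r+j}{i}(-1)^{r-i}\,y^{2n+i}.
\end{align*}

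To finish, I undo the substitution using $y^{2n+i} = (k/N)^{4n+2i}\sigma^{2n+i}$ and invoke the eigenvalue trace lemma to identify $\sum_\sigma \sigma^{2n+i}$ with $\mathrm{Tr}((A^*A)^{2n+i})$. Collecting the outer prefactor $\tfrac{1}{k}(N/k^2)^r$ with the $(k/N)^{4n+2i}$ from the substitution gives the claimed combined power of $k/N$, and passing expectation through the finite double sum by linearity completes the proof. There is no analytic obstacle; the argument is short algebraic bookkeeping, and the only mildly clever step is the choice to expand in $(y-1)$ at the intermediate stage, dictated by the shape of the binomial factor $\binom{r+j}{i}$ in the target formula.
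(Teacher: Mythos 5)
Your proposal is correct and follows essentially the same route as the paper: both proofs expand the weight polynomial by writing $(y-2)=(y-1)-1$, apply two binomial expansions to produce the $\binom{2n}{j}\binom{r+j}{i}$ structure, and finish with the eigenvalue trace lemma (your substitution $y=k^2\sigma/N^2$ is only a cosmetic repackaging of the paper's manipulations in $\sigma$). One caveat: your bookkeeping actually yields the exponent $4n+2i-r$ rather than the $4n+2i+r$ printed in the lemma statement, so your claim that it ``gives the claimed combined power'' is not literally true as stated --- but the paper's own proof also ends with $4n+2i-r$, so the statement contains a sign typo in that exponent and your computation is the correct one.
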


\begin{proof} By definition
\begin{align}
\E\left[M_{A,N}^{s^2(r)}\right]&\ =\ \frac{1}{k}\E\left[\sum_\sigma f_n\left(\frac{k^2\sigma}{N^2}\right)\left(\frac{1}{N}\left(\sigma-\frac{N^2}{k^2}\right)\right)^r\right] \nonumber
\\
&\ =\ \frac{1}{k}N^{-r}\E\left[\sum_\sigma f_n\left(\frac{k^2\sigma}{N^2}\right)\left(\sigma-\frac{N^2}{k^2}\right)^r\right] \nonumber
\\
&\ =\ \frac{1}{k}N^{-r}\E\left[\sum_\sigma \left(\frac{k^2\sigma}{N^2}\right)^{2n}\left(\frac{k^2\sigma}{N^2}-1-1\right)^{2n}\left(\sigma-\frac{N^2}{k^2}\right)^r\right] \nonumber
\\
&\ =\ \frac{1}{k}N^{-r}\left(\frac{k^2}{N^2}\right)^{2n} \E\left[\sum_\sigma \sigma^{2n} \sum_{j=0}^{2n} \binom{2n}{2n-j}(-1)^{2n-j}\left(\frac{k^2\sigma}{N^2}-1\right)^j\left(\sigma-\frac{N^2}{k^2}\right)^r\right]\nonumber
\\
&\ =\ \frac{1}{k}N^{-r}\left(\frac{k^2}{N^2}\right)^{2n}\sum_{j=0}^{2n}\binom{2n}{j}\sum_{i=0}^{r+j} \binom{r+j}{i}\left(-\frac{N^2}{k^2}\right)^{r-i}\E\left[\sum_\sigma \sigma^{2n+i}\right] \nonumber
\\
&\ =\ \frac{1}{k^{r+1}}\sum_{j=0}^{2n}\binom{2n}{j}\sum_{i=0}^{r+j} \binom{r+j}{i}(-1)^{r-i}\left(\frac{k}{N}\right)^{4n+2i-r}\E\left[\mathrm{Tr}(A^*A)^{2n+i}\right],
\end{align}
where the last equality follows by the eigenvalue trace lemma.
\end{proof}

Observe that the $(i,j)$\textsuperscript{{\rm th}} entry of $A^*A$ is given by $\sum_{m=1}^N \overline{a_{m i}}a_{m j}=\sum_{m=1}^N \overline{a_{i m}}a_{m j}$ (using the symmetry condition of $A$). By definition of the trace
\begin{align}
\E\left[\mathrm{Tr}(A^*A)^\eta\right]\ =\ \sum_{\substack{1\leq i_1,\ldots,i_\eta\leq N \\ 1\leq m_1,\ldots m_\eta\leq N}} \E\left[\overline{a_{i_1 k_1}}a_{m_1 i_2}\overline{a_{i_2 m_2}}a_{k_2 i_3}\cdots \overline{a_{i_\eta m_\eta}}a_{m_\eta i_1}\right].
\end{align}
Terms of the form $$\E\left[\mathrm{Tr}(A^*A)^\eta\right]=\sum_{\substack{1\leq i_1,\ldots,i_\eta\leq N \\ 1\leq m_1,\ldots m_\eta\leq N}} \E\left[\overline{a_{i_1 m_1}}a_{m_1 i_2}\overline{a_{i_2 m_2}}a_{m_2 i_3}\cdots \overline{a_{i_\eta m_\eta}}a_{m_\eta i_1}\right]$$ will be our cyclic products. Degrees of freedom arguments allow us to restrict our attention to ``configurations'' of ``$1$-blocks'' and ``$2$-blocks.'' See Appendix \ref{appendix:terminology} for terminology taken from \cite{RMT2016} and \cite[Lemma~3.13]{RMT2016} for proof of the claim. We compute the contribution to the expectation $\E\left[\mathrm{Tr}(A^*A)^\eta\right]$.

\begin{lemma}
\label{lemma:classContribution}
The total contribution to $\E\left[\mathrm{Tr}(A^*A)^\eta\right]$ of an $S$-class $C$ with $r_1$ $1$-blocks and $(|S|-r_1)$ $2$-blocks is
\begin{align}
p(\eta)\binom{|S|}{r_1}(k-1)^{|S|-r_1}\left(\frac{1}{2}\right)^{r_1/2}\E_k\mathrm{Tr}B^{r_1}\left(\left(\frac{N}{k}\right)^{2\eta-|S|}
+O\left(\left(\frac{N}{k}^{2\eta-|S|-1}\right)\right)\right),
\end{align}
where
\begin{align}
p(\eta)\ =\ \frac{(2\eta)^{|S|}}{|S|!}+O(\eta^{|S|-1})
\end{align}
and the expectation $\E\left[\mathrm{Tr}B_k^{m_1}\right]$ is taken over the $k\times k$ hollow GOE as defined in Definition \ref{def:hollow GOE}.
\end{lemma}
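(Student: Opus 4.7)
The plan is to expand $\E[\mathrm{Tr}(A^*A)^\eta]$ as a sum over cyclic products, invoke the earlier configuration reduction to restrict to configurations built purely from $r_1$ 1-blocks and $|S|-r_1$ 2-blocks, and then decompose the contribution of a fixed $S$-class into three roughly independent counts: (a) placing blocks inside the cyclic word of length $2\eta$, (b) assigning residue classes mod $k$ compatibly with the checkerboard structure, and (c) counting the free index choices and the random/deterministic expectations attached to each block.

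First, I would begin from the expansion set up just before the lemma,
\[
\E[\mathrm{Tr}(A^*A)^\eta] \;=\; \sum_{\substack{1 \leq i_1,\ldots,i_\eta \leq N \\ 1 \leq m_1,\ldots,m_\eta \leq N}} \E\!\left[\overline{a_{i_1 m_1}}\,a_{m_1 i_2}\,\overline{a_{i_2 m_2}}\,a_{m_2 i_3}\cdots \overline{a_{i_\eta m_\eta}}\,a_{m_\eta i_1}\right].
\]
By \cite[Lemma 3.13]{RMT2016} adapted to this complex setting via \eqref{eq:complex expectation1} and \eqref{eq:complex expectation2}, only configurations in which every random factor is absorbed into a 1-block or a 2-block contribute at leading order; all other pairings either vanish in expectation (because unpaired $a_{ij}$ or $\overline{a_{ij}}$ terms give zero mean) or are of strictly lower order in $N$.

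Fix an $S$-class $C$ with $r_1$ 1-blocks and $|S|-r_1$ 2-blocks. I would then count the contribution in four pieces. (i) The number of ways to embed the $|S|$ blocks of $C$ into the cyclic word of length $2\eta$ is a standard ``choose positions'' count, contributing $\binom{2\eta}{|S|} = (2\eta)^{|S|}/|S|! + O(\eta^{|S|-1})$, which is exactly the factor $p(\eta)$; selecting which $r_1$ of the $|S|$ blocks are 1-blocks gives the extra $\binom{|S|}{r_1}$. (ii) Each 2-block corresponds to a random entry $a_{ij}$ subject to $i\not\equiv j\pmod k$, yielding $k-1$ admissible residue-class pairs for its two endpoints and hence the factor $(k-1)^{|S|-r_1}$. (iii) After the $|S|$ block-induced identifications, the remaining $2\eta - |S|$ free vertex classes each contribute $N/k$ choices of a specific representative inside the chosen residue class, which gives $(N/k)^{2\eta-|S|}$ with the claimed $O((N/k)^{2\eta-|S|-1})$ error from boundary coincidences. (iv) The expectation from each 2-block is $\E[\overline{a_{ij}}a_{ij}] = 1$ by \eqref{eq:complex expectation2}, while the cyclic structure of the $r_1$ 1-blocks, once its vertices are identified with their residue classes in $\{1,\ldots,k\}$, encodes a closed trace on the reduced $k\times k$ skeleton; summing over all such reduced configurations matches, term-for-term, the moment expansion of the hollow GOE and thus yields $\E_k\mathrm{Tr}(B^{r_1})$. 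The factor $(1/2)^{r_1/2}$ arises from the variance mismatch: the real and imaginary parts of each $a_{ij}$ carry variance $1/2$ rather than the variance $1$ used for the off-diagonal entries of the hollow GOE, and the parity constraint forcing matched-conjugate pairings in the cyclic trace makes $r_1$ even so that this scaling is well-defined.

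Multiplying the factors from (i)--(iv) and absorbing all lower-order contributions into a single $O((N/k)^{2\eta-|S|-1})$ term produces the stated formula. The main obstacle is step (iv), namely identifying the reduced cyclic contribution of the $r_1$ 1-blocks with the hollow GOE moment $\E_k\mathrm{Tr}(B^{r_1})$ and verifying the precise variance rescaling $(1/2)^{r_1/2}$. This is the one step that genuinely uses the complex symmetric structure of the ensemble and the pairing rules from \eqref{eq:complex expectation1} and \eqref{eq:complex expectation2}; steps (i)--(iii) are essentially a bookkeeping transcription of the real checkerboard argument in \cite{RMT2016}.
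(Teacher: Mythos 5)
Your decomposition into (i) block placement giving $p(\eta)\binom{|S|}{r_1}$, (ii) residue choices giving $(k-1)^{|S|-r_1}$, (iii) the free-index count giving $(N/k)^{2\eta-|S|}$, and (iv) the reduction of the $1$-block cycle to $\E_k\mathrm{Tr}(B^{r_1})$ is exactly the paper's argument, and those pieces are fine. The problem is your justification of the factor $(1/2)^{r_1/2}$, which you yourself flag as the one step that genuinely uses the complex structure. You attribute it to a ``variance mismatch'' between the variance-$1/2$ real and imaginary parts of $a_{ij}$ and the variance-$1$ entries of the hollow GOE. But there is no such mismatch: by \eqref{eq:complex expectation2}, $\E[\overline{a_{ij}}a_{ij}]=\E[X^2]+\E[Y^2]=1$, so a correctly matched conjugate pair contributes exactly $1$, the same normalization as the hollow GOE. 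If the factor really came from the second moment of the entries, it would already be absorbed into the pair expectations and there would be nothing left over to produce $(1/2)^{r_1/2}$.

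The actual source of the factor is combinatorial. Each $1$-block is a single term of the cyclic product and is therefore either a conjugated term $\overline{a}$ or an unconjugated term $a$, determined by the parity of its position in the length-$2\eta$ word. Two matched $1$-blocks contribute $\E[\overline{a_{ij}}a_{ij}]=1$ only when they land on positions of opposite conjugacy type; if they have the same type the contribution is $\E[a_{ij}a_{ij}]=0$ by \eqref{eq:complex expectation1}. The count $p(\eta)=\binom{2\eta}{|S|}+O(\eta^{|S|-1})$ places blocks without regard to this constraint, and asymptotically exactly half of the placements of each matched pair of $1$-blocks satisfy it (for $2$-blocks the constraint is automatic, since the two adjacent terms are already a conjugate pair). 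With $r_1/2$ matched pairs of $1$-blocks this yields the factor $(1/2)^{r_1/2}$ as a fraction of surviving configurations, not as a rescaling of variances. Your argument as written would not survive scrutiny at this step, so you need to replace the variance explanation with this conjugacy-parity count.
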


\begin{proof}
The quantity $p(\eta)$ expresses the number of ways to set the position of $|S|$ blocks (which we have established must be $1$-blocks or $2$-blocks) among a cyclic product of length $2\eta$ which arises from $\E\left[\mathrm{Tr}(A^*A)^\eta\right]$. We can estimate $p(\eta)=\binom{2\eta}{|S|}+\bigO{\eta^{|S|-1}}$. The term $\binom{2\eta}{|S|}$ counts the number of ways to choose positions of the blocks (ignoring overlap). The error term $\bigO{\eta^{|S|-1}}$ counts the number of ways in which some two blocks will be less than one term apart, which will occur non-generically as $\eta\rightarrow\infty$.

Next, the term $\binom{|S|}{r_1}$ counts the number of ways to choose which of the $|S|$ blocks are $1$-blocks (equivalently, the number of ways to choose which of the $|S|$ blocks are $2$-blocks). As in \cite[Proposition~3.14]{RMT2016}, the congruence classes modulo $k$ of all the indices is completely determined by the choices of congruence class for the indices of the $r_1$ $1$-blocks, and the following $(k-1)^{|S|-r_1}$ choices of congruence class for the shared index of each $2$-block. The $r_1$ $1$-blocks form a cyclic product of length $r_1$, and the number of ways of choosing the congruence classes modulo $k$ of their indices is equivalent to an expectation of the form
\begin{align}
\sum_{1\leq i_1,\ldots,i_{r_1}\leq k} \E\left[b_{i_1 i_2}b_{i_2 i_3}\cdots b_{i_{r_1}i_1}\right]\ =\ \E\left[\mathrm{Tr}B_k^{r_1}\right]
\end{align}
for $B_k$ as defined prior to Theorem \ref{thm:singVal blip}. This is because the number of valid choices of index congruence classes corresponds precisely to the number of ways to match terms in a length $r_1$ cyclic product, with the restriction that consecutive indices cannot be equal (which would correspond to a deterministic entry in the original checkerboard matrix, and not a $1$-block type entry). Further details in the argument for this reduction are similar to those in the proof of \cite[Proposition~3.14]{RMT2016}.

However, our extension to singular values requires a modification to the combinatorics. As in \eqref{eq:complex expectation1} and \eqref{eq:complex expectation2}, we see that the paired entries in our cyclic product must be matched in conjugate pairs if they are to contribute to the expectation. This is automatically the case for every $2$-block, since the two terms side by side are already conjugate pairs. However, as $\eta\rightarrow\infty$ and $|S|$ remains fixed, this will be true with probability $1/2$ for each pair of $1$-blocks, and since there are $r_1/2$ pairs of $1$-blocks, we see that the number of valid configurations should be scaled by $\left(\frac{1}{2}\right)^{r_1/2}$.

The last piece of the expression in Lemma \ref{lemma:classContribution} is the term $\left(\left(\frac{N}{k}\right)^{2\eta-|S|}+O\left(\left(\frac{N}{k}^{2\eta-|S|-1}\right)\right)\right)$, which arises from the degree of freedom count $2\eta-|S|-1$ for the indices once we have fixed their congruence classes, and the big $O$ error term arises from the lower degree of freedom terms we are ignoring, when only considering configurations of $1$-blocks and $2$-blocks.
\end{proof}

\begin{proof}[Proof of Theorem \ref{thm:singVal blip}]
We can now compute the $r$\textsuperscript{{\rm th}} moment $\E\left[M_{A,N}^{s^2(r)}\right]$ of the EBSSSM. By a combinatorial lemma \cite[Lemma~3.16]{RMT2016} copied as Lemma \ref{lemma:combinatorial} in Appendix \ref{appendix:terminology}, we see that only $S$-classes of size $r$ contribute: if $|S|>r$, then the contribution vanishes in the limit of $N\rightarrow\infty$ by a degree of freedom count, and if $|S|<r$, the contribution cancels via the combinatorial lemma. Then the outer sum in Lemma \ref{lemma:rthMomentEBSSSM} collapses to only the $j=0$ term. We can substitute Lemma \ref{lemma:classContribution} into Lemma \ref{lemma:rthMomentEBSSSM} (after adding a sum over the parameter $r_1$ which counts the number of $1$-blocks in our $S$-class):
\begin{align}
\lim_{N\rightarrow\infty} \E\left[M_{A,N}^{s^2 (r)}\right]&\ =\ \frac{1}{k^{r+1}}\sum_{i=0}^r\binom{r}{i}(-1)^{r-i}\sum_{r_1=0}^r\frac{(4n+2i)^r}{r!}\binom{r}{r_1}(k-1)^{r-r_1}\E_k\left[\mathrm{Tr}\left(\frac{1}{\sqrt{2}}B\right)^{r_1}\right] \nonumber
\\
&\ =\ \frac{1}{k^{r+1}}2^r\sum_{r_1=0}^r\binom{r}{r_1}(k-1)^{r-r_1}\E_k\left[\mathrm{Tr}\left(\frac{1}{\sqrt{2}}B\right)^{r_1}\right].
\label{eq:r centered moms}
\end{align}
To compute the $r$\textsuperscript{{\rm th}} centered moments, we need the first moment:
\begin{align}
\lim_{N\rightarrow\infty}\E\left[M_{A,N}^{s^2 (1)}\right]&\ =\ \frac{1}{k^2}\sum_{i=0}^{1} (-1)^{1-i}\binom{1}{i}(4n+2i)k(k-1)\ =\ \frac{2(k-1)}{k}.
\end{align}
Thus the $r$\textsuperscript{{\rm th}} centered moments are given by
\begin{align}
M_c^{s^2(r)}
&\ =\ \lim_{N\rightarrow\infty}\E\left[\int\left(x-\mu_{A,N}^{s^2(1)}\right)^r d\mu_{A,N}^{s^2}\right] \nonumber
\\
&\ =\ \sum_{r_1=0}^r\binom{r}{r_1}\left(\frac{-2(k-1)}{k}\right)^{r-r_1}\lim_{N\rightarrow\infty}\E\left[\mu_{A,N}^{s^2(r_1)}\right].
\end{align}
Substituting the expression from Equation (\ref{eq:r centered moms}) gives
\begin{align}
M_c^{s^2(r)} \ =\ \sum_{r_1=0}^r\binom{r}{r_1}(-1)^{r-r_1}\left(\frac{2}{k}\right)^r\frac{1}{k}\sum_{i=0}^{r_1}\binom{r_1}{i}(k-1)^{r-i} \E_k\left[\mathrm{Tr}\left(\frac{1}{\sqrt{2}}B\right)^i\right]. \nonumber
\end{align}
Next, using the identity $\binom{r}{r_1}\binom{r_1}{i}=\binom{r}{i}\binom{r-i}{r_1-i}$, we obtain
\begin{align}
M_c^{s^2(r)}&\ =\ \sum_{i=0}^r \binom{r}{i}(k-1)^{r-i}\left(\frac{2}{k}\right)^r\frac{1}{k}\E_k\left[\mathrm{Tr}\left(\frac{1}{\sqrt{2}}B\right)^i\right]\sum_{r_1=i}^r\binom{r-i}{r_1-i}(-1)^{r-r_1} \nonumber
\\
&\ =\ \sum_{i=0}^r \binom{r}{i}(k-1)^{r-i}\left(\frac{2}{k}\right)^r\frac{1}{k}\E_k\left[\mathrm{Tr}\left(\frac{1}{\sqrt{2}}B\right)^i\right](-1)^r\delta_{mi} \nonumber
\\
&\ =\ \left(\frac{\sqrt{2}}{k}\right)^r\frac{1}{k}\E_k\left[\mathrm{Tr}(B)^r\right]
\end{align}
which proves Theorem \ref{thm:singVal blip} via the moment method.
\end{proof}

\subsection{Eigenvalues of complex checkerboard matrices: bulk}\ \\

The standard Hermitization process via the $\log$ potential is done as follows (see for example \cite{Tao2011}). Given a sequence of $N\times N$ random matrices $A_N$ with normalized spectral distribution $\mu_{\frac{1}{\sqrt{N}}A_N}$ on $\C$, we have the logarithmic potential
\begin{align}
f_N(z)\ :=\ \int_{\C}\log\abs{w-z}d\mu_{\frac{1}{\sqrt{N}}A_N}(z).
\end{align}
The key tool is the logarithmic potential continuity theorem.

\begin{proposition}[Log Potential Continuity Theorem, see \cite{Tao2011}]
\label{prop:log continuity}
If for almost every $z\in\C$, $f_N(z)$ converges almost surely to
\begin{align}
f(z)\ :=\ \int_{\C}\log\abs{w-z}d\mu(w)
\end{align}
for some probability measure $\mu$, then $\mu_{\frac{1}{\sqrt{N}}A_N}$ converges almost surely to $\mu$ \cite{Tao2011}.
\end{proposition}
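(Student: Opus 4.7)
The plan is to recover $\mu$ (and each $\mu_{\frac{1}{\sqrt{N}}A_N}$) from its logarithmic potential by taking the distributional Laplacian, so that pointwise a.e.\ convergence of the potentials upgrades to weak-$\ast$ convergence of the measures. The starting observation is the classical identity
\[
\Delta_z \log|w-z| \;=\; 2\pi\,\delta_w(z)
\]
in the sense of distributions on $\C$. Integrating against $\mu$ and $\mu_{\frac{1}{\sqrt{N}}A_N}$ respectively gives $\Delta f = 2\pi\mu$ and $\Delta f_N = 2\pi\mu_{\frac{1}{\sqrt{N}}A_N}$ distributionally; in particular, for any test function $\phi\in C_c^\infty(\C)$,
\[
\int_{\C}\phi\, d\mu_{\frac{1}{\sqrt{N}}A_N}
\;=\; \frac{1}{2\pi}\int_{\C} f_N(z)\,\Delta\phi(z)\, dm(z),
\]
and similarly for $\mu$, where $m$ is planar Lebesgue measure. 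Thus it suffices to pass to the limit under the integral on the right-hand side.

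First I would verify that $f_N$ and $f$ are genuine locally integrable functions (not merely defined pointwise a.e.), using that $z\mapsto \log|w-z|$ is in $L^1_{\mathrm{loc}}(\C)$ uniformly in $w$ on compact sets, so $f_N$ is locally integrable for every realization. Next, I would upgrade the hypothesis ``$f_N(z)\to f(z)$ a.s.\ for a.e.\ $z$'' to $L^1_{\mathrm{loc}}$ convergence of $f_N$ to $f$. For this step I would use the bound
\[
f_N(z) \;=\; \frac{1}{N}\log\bigl|\det\bigl(\tfrac{1}{\sqrt{N}}A_N - zI\bigr)\bigr|
\;=\;\frac{1}{N}\sum_{j=1}^{N}\log \sigma_j\!\bigl(\tfrac{1}{\sqrt{N}}A_N - zI\bigr),
\]
so that $|f_N(z)|$ is controlled by (i) the logarithm of the operator norm of $\tfrac{1}{\sqrt{N}}A_N - zI$, giving a uniform upper bound on compact $z$-sets, and (ii) the logarithm of the least singular value, which supplies the delicate lower bound. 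For fixed compact $K\subset\C$ one has
\[
\int_K |f_N(z)|\, dm(z) \;\le\; C_K\bigl(1+\log^+\|A_N\|_{\mathrm{op}}\bigr) + \int_K \bigl|\log \sigma_{\min}(\tfrac{1}{\sqrt{N}}A_N-zI)\bigr|\, dm(z),
\]
and a Markov/tail argument on $\sigma_{\min}$ gives the uniform integrability needed to invoke Vitali's convergence theorem, upgrading a.e.\ convergence to $L^1_{\mathrm{loc}}$ convergence almost surely.

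With $f_N\to f$ in $L^1_{\mathrm{loc}}$, the integral $\int f_N \Delta\phi\, dm$ converges to $\int f\,\Delta\phi\, dm$ for every $\phi\in C_c^\infty(\C)$, since $\Delta\phi$ is bounded and compactly supported. Combining with the displayed identity yields
\[
\int\phi\, d\mu_{\frac{1}{\sqrt{N}}A_N}\;\longrightarrow\;\int\phi\, d\mu
\]
almost surely, for every $\phi\in C_c^\infty(\C)$. A standard density argument (approximating continuous compactly supported $\phi$ by smooth compactly supported ones, together with tightness, which follows from the $O(N^{1/2+\epsilon})$ operator-norm bound on $\tfrac{1}{\sqrt{N}}A_N$ that forces the spectrum to stay in a bounded region with high probability) extends this to weak-$\ast$ convergence of probability measures.

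The main obstacle is the uniform integrability of $\log\sigma_{\min}(\tfrac{1}{\sqrt{N}}A_N - zI)$ in $z$ on compact sets. This is precisely why the paper imposes Assumption \ref{assumption:singVal bounds} in Theorem \ref{thm:complexSym bulk}: quantitative least-singular-value estimates for the Hermitization are what prevent the log potential from concentrating mass near singularities and what justify interchanging limits. Everything else in the argument is soft functional analysis once that quantitative input is in hand.
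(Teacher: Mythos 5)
The paper does not prove this proposition at all: it is quoted as a black box from Tao's book, so there is no internal proof to compare against. Judged on its own terms, your write-up has the right skeleton for a direct proof (recover $\mu_{\frac{1}{\sqrt{N}}A_N}$ and $\mu$ as $\frac{1}{2\pi}\Delta f_N$ and $\frac{1}{2\pi}\Delta f$ and pass to the limit in $\int f_N\,\Delta\phi\,dm$), and the identity $\int\phi\,d\mu_{\frac{1}{\sqrt{N}}A_N}=\frac{1}{2\pi}\int f_N\Delta\phi\,dm$ is correct since each $f_N$ is locally integrable. The final density/vague-to-weak step is also fine.

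The gap is in your Step 3, which is the entire content of the theorem. The proposition's only hypothesis is that $f_N(z)\to f(z)$ for a.e.\ $z$, almost surely; your upgrade to $L^1_{\mathrm{loc}}$ convergence invokes operator-norm bounds and, more seriously, uniform integrability of $z\mapsto\log\sigma_{\min}(\tfrac{1}{\sqrt{N}}A_N-zI)$ over compact $z$-sets. Neither is a hypothesis of the proposition, so you are proving a different (weaker) statement; and even in the paper's application these inputs are not available in the form you need --- Assumption \ref{assumption:singVal bounds} controls, for each \emph{fixed} $z$, the integral of $\log x$ against the Hermitized singular value measure $\nu_{N,z}$ (it is used in Lemma \ref{lemma:log singularities} to establish the \emph{hypothesis} of this proposition, namely that $f_N(z)$ converges at each $z$), whereas your Vitali argument needs uniform integrability in the $z$-variable, which is a different and unproven estimate. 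The correct proof must extract everything from the pointwise hypothesis plus the structural fact that $f_N$ is $\frac{1}{N}$ times the log-modulus of a monic degree-$N$ polynomial, hence subharmonic; one then uses either the compactness principle for subharmonic functions (locally bounded above and not converging locally uniformly to $-\infty$ implies $L^1_{\mathrm{loc}}$-precompactness, so pointwise a.e.\ convergence upgrades to $L^1_{\mathrm{loc}}$ convergence) or, as Tao does, vague sequential compactness of the empirical measures on the Riemann sphere together with truncation of the logarithm and unicity of a measure with a given logarithmic potential. Without one of these mechanisms, pointwise a.e.\ convergence of the potentials simply does not imply distributional convergence, and your argument does not close.
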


Thus to show that $\mu_{\frac{1}{\sqrt{N}}A_N}$ converges almost surely to the uniform measure $\mu^\text{{\rm circ}}$ on the unit disk, it suffices to show that the $\log$ potential $f_N(z)$ converges to the corresponding $\log$ potential of $\mu^\text{{\rm circ}}$. For the checkerboard ensembles, we instead show that the re-scaled measure $\mu_{\frac{1}{R\sqrt{N}}A_N}$ converges to $\mu^\text{{\rm circ}}$, where $R=\sqrt{1-1/k}$.

We can reduce the study of $f_N(z)$ to the spectra of Hermitian matrices by rewriting
\begin{align}
f_N(z)&\ =\ \frac{1}{N}\sum_{j=1}^N \log\abs{\frac{\lambda_j(A_N)}{\sqrt{N}}-z} \nonumber
\\
&\ =\ \frac{1}{N}\log\abs{\mathrm{det}\left(\frac{1}{\sqrt{N}}A_N-zI\right)} \nonumber
\\
&\ =\ \frac{1}{2}\int_0^\infty \log x~d\nu_{N,z}(x),
\end{align}
where $d\nu_{N,z}(x)$ is the spectral measure of the Hermitian matrix \be \left(\frac{1}{\sqrt{N}}A_N-zI\right)^*\left(\frac{1}{\sqrt{N}}A_N-zI\right)\ee for $I$ the $N\times N$ identity matrix. This uses the fact that \be\abs{\det A} \ = \ \prod_{j=1}^N \abs{\lambda_j(A)}\ = \ \prod_{j=1}^N \lambda_j(A^*A)^{1/2}.\ee We will analyze the spectral measure of the Hermitian matrices \be\left(\frac{1}{\sqrt{N}}A_N-zI\right)^*\left(\frac{1}{\sqrt{N}}A_N-zI\right)\ee with the method of moments.

We first have to control some convergence issues for our checkerboard ensembles, which arise from singularities of the logarithm at $0$ and $\infty$. The singularity at $\infty$ is considerably easier to control than the one at $0$, as we will see below. Our result is conditional on the following assumption on the least singular values being sufficiently far from $0$, whose role will be made explicit in the lemma that follows.

\begin{assumption}
\label{assumption:singVal bounds}
We say that a measure $\nu_{N,z}$ satisfies this assumption if
\begin{align}
\lim_{T\rightarrow\infty}\sup_{N\geq 1}\int_{\substack{\abs{\log x}\geq T \\ 0<x\leq1}} \log x ~ d\nu_{N,z}\ = \ 0.
\end{align}
\end{assumption}
\textit{Remark.} In the complex asymmetric case, Tao and Vu showed in 2010 that this assumption is satisfied via a polynomial bound on the least singular value, and a count of the other small singular values via the Talgrand concentration inequality, see \cite{BordenaveChafai2012}. This difficulty of controlling the singularity at $0$ has traditionally been the case with the complex asymmetric ensemble: Girko formulated the logarithmic potential approach in 1984 \cite{Girko1984}, but the circular law for the asymmetric ensemble remained unsolved until the behavior of the least singular values was sufficiently controlled by Tao and Vu in 2010 \cite{TaoVu2010}. Our complex symmetric checkerboard ensemble presents difficulties for the control of the small singular values - the symmetric condition (as opposed to entries being iidrv) causes the determinant to be a quadratic function of the rows (as opposed to linear in the iidrv case), and the checkerboard structure adds further complications. See for example \cite{CostelloTaoVu2006} for a discussion of the complications introduced by imposing a symmetric structure on the matrices. Some recent work has been done on extending the polynomial bound on the least singular value to complex symmetric matrices \cite{Nguyen2011} and asymmetric structured ensembles \cite{Cook2016}, but we are not aware of any adequate generalization's of Tao's and Vu's small singular value count to non iidrv matrices. We can, however, control the singular values at $\infty$ for our checkerboard ensemble.

\begin{lemma}
\label{lemma:log singularities}
Fix $z$. For matrices $A_N$ from either the complex asymmetric ensemble or the complex symmetric $k$-checkerboard ensemble, the convergence $\nu_{N,z}\rightarrow\nu_z$ implies the convergence of the corresponding $\log$ potentials
\begin{align}
\int_0^\infty \log x~d\nu_{N,z}(x)\ \rightarrow\ \int_0^\infty\log x~d\nu_z (x)
\end{align}
\end{lemma}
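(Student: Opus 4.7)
The plan is a uniform-integrability argument combined with weak convergence. For a parameter $T>1$, I split
\begin{align}
\int_0^\infty \log x\; d\nu_{N,z}(x)
\;=\; \int_0^{1/T}\log x\; d\nu_{N,z}
\;+\; \int_{1/T}^T \log x\; d\nu_{N,z}
\;+\; \int_T^\infty \log x\; d\nu_{N,z}.
\end{align}
On the compact middle region $[1/T,T]$, the function $\log x$ is bounded and continuous, so weak convergence of $\nu_{N,z}$ to $\nu_z$ gives convergence of the middle integral, provided $T$ and $1/T$ are chosen outside the at most countable set of atoms of $\nu_z$. The goal is then to show that the two tails are uniformly small in $N$ once $T$ is taken sufficiently large, so that one can first send $N\to\infty$ on the middle piece and then $T\to\infty$.

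The tail at $0$ is handled directly by Assumption \ref{assumption:singVal bounds}, whose content is precisely that $\sup_N \int_0^{1/T}|\log x|\,d\nu_{N,z}\to 0$ as $T\to\infty$; a standard lower-semicontinuity argument transfers this bound to the limit $\nu_z$. The tail at $\infty$ relies on a first-moment bound: by the trace identity,
\begin{align}
\int x\,d\nu_{N,z}
\;=\; \frac{1}{N}\mathrm{Tr}\bigl(H_N\bigr)
\;=\; \frac{1}{N^2}\sum_{i,j}|a_{ij}|^2 \;+\; |z|^2 \;+\; \bigO{N^{-1/2}},
\end{align}
where $H_N = \bigl(A_N/\sqrt N - zI\bigr)^*\bigl(A_N/\sqrt N - zI\bigr)$. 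Both ensembles have $(1/N^2)\sum_{i,j}|a_{ij}|^2=\bigO{1}$ almost surely: by standard concentration for the iidrv complex asymmetric ensemble, and directly for the complex symmetric checkerboard ensemble where the random entries have bounded second moments and the deterministic entries are constantly $w=1$. Hence $\sup_N \int x\,d\nu_{N,z}<\infty$ almost surely.

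For any fixed $\alpha\in(0,1)$ the elementary inequality $\log x\le C_\alpha x^{\alpha}$ for $x\geq 1$, together with the Markov-type bound $\int_T^\infty x^{\alpha}\,d\nu\le T^{\alpha-1}\int x\,d\nu$ valid for $T\geq 1$, yields
\begin{align}
\int_T^\infty \log x\,d\nu_{N,z}
\;\leq\; C_\alpha T^{\alpha-1}\!\int x\,d\nu_{N,z}
\;=\; \bigO{T^{\alpha-1}}
\end{align}
uniformly in $N$, which tends to $0$ as $T\to\infty$. Combining the three estimates (sending $N\to\infty$ first on the middle piece, then $T\to\infty$) gives the claimed convergence, and the same chain of inequalities applied to $\nu_z$ via Portmanteau-style $\liminf$ bounds shows that the limit integral $\int_0^\infty\log x\,d\nu_z$ is itself finite. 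The principal difficulty is the tail at $0$, which is precisely why Assumption \ref{assumption:singVal bounds} is imposed; this is the standard obstacle in non-Hermitian circular-law proofs (cf.\ Tao and Vu for the asymmetric case), while by comparison the tail at $\infty$ is tame because $\log$ grows so slowly and $\nu_{N,z}$ has uniformly bounded first moment.
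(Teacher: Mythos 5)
Your proof is correct and follows the same overall strategy as the paper's: both are uniform-integrability arguments in which the middle range is handled by weak convergence against the (locally) bounded continuous function $\log x$, the tail at the origin is delegated entirely to Assumption \ref{assumption:singVal bounds}, and the only substantive work is the tail at infinity. The difference lies in that last step. The paper controls the tail at infinity via Weyl's inequalities for singular values, using the explicit split structure of the spectrum: all but $k$ of the squared singular values of $\frac{1}{R\sqrt{N}}A_N-zI$ are $O(1)$ almost surely, while the remaining $k$ sit at height $\sim N/(R^2k^2)$ and carry total mass $k/N$, so their contribution to $\int_{\log x\geq T}\log x\,d\nu_{N,z}$ is roughly $(k/N)\log N$, which is uniformly small because the region $\{\log x\geq T\}$ only sees the blip when $N\gtrsim e^T$. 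Your route instead bounds the first moment $\int x\,d\nu_{N,z}=\frac{1}{N}\mathrm{Tr}(H_N)$ by the trace identity and then applies $\log x\leq C_\alpha x^\alpha$ together with a Markov-type truncation. This is slightly more elementary and more robust: it needs only that $\frac{1}{N^2}\sum_{i,j}|a_{ij}|^2$ is almost surely bounded, not the location or multiplicity of the blip, and so it transfers verbatim to the generalized ensembles of Section \ref{section:genCheck}, whereas the paper must rerun the Weyl estimate there (as in the proof of Corollary \ref{cor:gen eigen bulk}). The modest price is the bookkeeping needed to justify the almost sure first-moment bound (a strong law for $\frac{1}{N^2}\sum_{i,j}|a_{ij}|^2$) and the usual care about atoms of $\nu_z$ at the truncation points, both of which you flag appropriately.
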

assuming Assumption \ref{assumption:singVal bounds}.
\begin{proof}
The condition we need is uniform integrability. For a Borel function $f:E\rightarrow\R$ and a sequence $\{\eta_N(x)\}_{N\geq1}$ of probability measures on $\R^+$, we say that $f$ is \emph{uniformly integrable} with respect to that sequence of measures if
\begin{align}
\lim_{T\rightarrow\infty}\sup_{N\geq1}\int_{\abs{f}\geq T} \abs{f}~d\eta_N\ =\ 0.
\end{align}
If $f$ satisfies this condition with respect to the sequence $\{\eta_N(x)\}_{N\geq1}$ and is continuous, and the sequence of measures converges weakly $\eta_N\rightarrow\eta$ for some probability measure $\eta$, then
\begin{align}
\lim_{N\rightarrow\infty} \int_{E} f~d\eta_N\ =\ \int_{E} f~d\eta.
\end{align}
For further detail see \cite{BordenaveChafai2012}.

In our case, we will have (for fixed $z$) $E=\R^+$, $\eta_N=\nu_{N,z}$, $\eta=\nu_z$, and $f(x)=\log x$. Since $\log x$ has singularities at $0$ and $\infty$, in order to satisfy uniform integrability we need to control the behavior of the measures $\nu_{N,z}$ at $0$ and $\infty$. To emphasize this we split the integral:
\begin{align}
\int_{\R^+} \log x~d\nu_{N,z}\ =\ \int_0^1 \log x~d\nu_{N,z}+\int_1^\infty \log x~d\nu_{N,z}.
\end{align}

At infinity, we must treat the asymmetric and complex symmetric checkerboard ensembles differently. For the complex asymmetric ensemble, we note that the squared singular values of $\frac{1}{\sqrt{N}}A_N-zI$ are $O(1)$ with probability 1, which trivially suffices for uniform integrability (for sufficiently large $T$, $\nu_{N,z}$ has $0$ mass wherever $x\geq 1$ and $\log x\geq T$). To control the checkerboard ensemble at infinity, we use Weyl's inequalities for singular values to see that the squared singular values of $\frac{1}{R\sqrt{N}}A_N-zI$ have $(N-k)/N$ mass that is $O(1)$, and $k/N$ mass that is $N/(R^2 k^2)+O\left(N^{1/2}\right)$. This also gives uniform integrability:
\begin{align}
\lim_{T\rightarrow\infty}\sup_{N\geq1}\int_{\substack{\abs{\log x}\geq T \\ x\geq1}} \abs{\log x}~d\nu_{N,z}^\text{{\rm check}}\ \leq\ \lim_{T\rightarrow\infty} C\frac{k}{T}\log T\ =\ 0.
\end{align}

The behavior at the origin (the least singular values) is more difficult to control. However, using Assumption \ref{assumption:singVal bounds}, which is known to be satisfied for the complex asymmetric ensemble \cite{TaoVu2010}, we have uniform integrability for both the the complex asymmetric ensemble and our complex symmetric $k$-checkerboard ensemble.\end{proof}


Girko showed that, for the complex asymmetric ensemble, the corresponding $\nu_{N,z}^\text{{\rm asym}}$ converges almost surely to an explicit measure $\nu_z$ that satisfies
\begin{align}
\frac{1}{2}\int_0^\infty \log x~d\nu_z(x)\ =\ \int_{\C}\log\abs{w-z}~d\mu^\text{{\rm circ}}(w),
\end{align}
which shows the circular law, i.e., $\mu_{\frac{1}{\sqrt{N}}A_N}\rightarrow\mu^\text{{\rm circ}}$ almost surely, via $\log$ potential continuity and Lemma \ref{lemma:log singularities}.

Thus, to show that the checkerboard ensembles have an eigenvalue bulk that converges to $\mu^\text{{\rm circ}}$, it suffices to show that our measures converge $\nu_{N,z}^\text{{\rm check}}\rightarrow\nu_z$ as well. This is accomplished as follows.

\begin{theorem}
\label{thm:log Moments}
Let $\nu_{N,z}^\text{{\rm asym}}$ and $\nu_{N,z}^\text{{\rm check}}$ be probability measures with
\begin{align}
\frac{1}{2}\int_0^\infty \log x~d\nu_{N,z}^\text{{\rm asym}}(x)&\ =\ \int_{\C}\log\abs{w-z}~d\mu_{\frac{1}{\sqrt{N}}A_N}^\text{{\rm asym}}(w)
\nonumber\\
\text{and}\hspace{2 cm}
\frac{1}{2}\int_0^\infty \log x~d\nu_{N,z}^\text{{\rm check}}(x)&\ =\ \int_{\C}\log\abs{w-z}~d\mu_{\frac{1}{R\sqrt{N}}A_n}^\text{{\rm check}}(w)
\end{align}
obtained as above, where $\mu_{\frac{1}{\sqrt{N}}A_N}^\text{{\rm asym}}$ and $\mu_{\frac{1}{R\sqrt{N}}A_n}^\text{{\rm check}}$ are the spectral measures of matrices normalized as labeled for $R=\sqrt{\frac{k-1}{k}}$, and where the $A_N$ are drawn from complex asymmetric ensemble and the complex symmetric $k$-checkerboard ensemble, respectively. Then as $N\rightarrow\infty$, $\nu_{N,z}^\text{{\rm asym}}$ and $\nu_{N,z}^\text{{\rm check}}$ both converge to a distribution with the same $r$\textsuperscript{{\rm th}} moments $M_z^{(r)}$ given by
\begin{align}
M_z^{(r)}\ =\ \sum_{j=0}^r c_j^{(r)} \abs{z}^{2j}
\end{align}
for some coefficients $c_j^{(r)}$, where $0\leq c_j^{(r)}\leq 4^r C_r$ with $C_r$ the $r$\textsuperscript{{\rm th}} Catalan number.

Since for almost all $z$, $\nu_{N,z}^\text{{\rm asym}}$ is known to converge as $N\rightarrow\infty$ to a measure $\nu_z$ independent of $N$, this allows us to conclude that $\nu_{N,z}^\text{{\rm check}}\rightarrow\nu_z$ as well.
\end{theorem}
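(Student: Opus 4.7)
The strategy is the method of moments applied to $\nu_{N,z}^{\text{asym}}$ and $\nu_{N,z}^{\text{check}}$. Set $\alpha := 1$ in the asymmetric case and $\alpha := R = \sqrt{(k-1)/k}$ in the checkerboard case, and write
\begin{equation}
B_N \ := \ \left(\tfrac{1}{\alpha\sqrt N}A_N - zI\right)^{\!*}\!\left(\tfrac{1}{\alpha\sqrt N}A_N - zI\right),
\end{equation}
so that $\int_0^\infty x^r\, d\nu_{N,z}(x) = \tfrac{1}{N}\mathrm{Tr}(B_N^r)$ by the eigenvalue trace lemma. The first step is to decompose $B_N$ into four ``ingredient'' matrices,
\begin{equation}
B_N \ = \ X_1 + X_2 + X_3 + X_4, \quad X_1 := \tfrac{A_N^*A_N}{\alpha^2 N},\ X_2 := -\tfrac{zA_N^*}{\alpha\sqrt N},\ X_3 := -\tfrac{\overline z A_N}{\alpha\sqrt N},\ X_4 := |z|^2 I,
\end{equation}
and apply multilinearity to expand $\mathrm{Tr}(B_N^r) = \sum_{\mathbf t\in\{1,2,3,4\}^r}\mathrm{Tr}(X_{t_1}\cdots X_{t_r})$. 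For a fixed sequence $\mathbf t$ with multiplicities $(s_1,s_2,s_3,s_4)$, this trace factors as $(-1)^{s_2+s_3}z^{s_2}\overline z^{s_3}|z|^{2s_4}$ times a cyclic sum of index products in $a_{ij}$ and $\overline{a_{ij}}$. Using \eqref{eq:complex expectation1}--\eqref{eq:complex expectation2}, a nonvanishing expectation requires every $a$-entry to pair with an $\overline a$-entry; counting $a$ versus $\overline a$ occurrences (each $X_1$ contributes one of each, $X_2$ contributes an $\overline a$, $X_3$ contributes an $a$) yields $s_1+s_3 = s_1+s_2$, i.e., $s_2 = s_3$. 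In particular, the sign factor $(-1)^{s_2+s_3} = 1$, and the limiting moment $M_z^{(r)}$ is a polynomial in $|z|^2$ of degree at most $r$.

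Next I would run the standard degrees-of-freedom argument, in the spirit of the proof of Theorem \ref{thm:singVal bulk}: as $N\to\infty$, the dominant contribution to each trace comes from cyclic index sequences that trace out closed walks on trees with the maximal number of nodes. Such tree walks are enumerated by Catalan numbers, and summing the at most $4^r$ ingredient sequences $\mathbf t$ yields the bound $c_j^{(r)} \le 4^r C_r$ on the coefficient of $|z|^{2j}$. Non-negativity of $c_j^{(r)}$ is immediate from the positivity of the sign factor combined with non-negativity of the tree-walk counts.

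To match the two ensembles' moment polynomials, I would resolve two discrepancies. (i) The symmetry $a_{ij} = a_{ji}$ in the checkerboard case would a priori permit additional conjugate pairings (e.g., $a_{ij}$ paired with $\overline{a_{ji}}$ via symmetry); however, the parity argument in the remark following \eqref{eq:singVal moments} shows that on tree walks the two traversals of any given edge must have opposite parity, and so these extra pairings do not contribute at leading order. (ii) Zeroing out random entries at positions $i\equiv j\pmod k$ contributes a factor of $((k-1)/k)^r$ to the dominant tree-walk count, which is precisely absorbed by $\alpha^{-2r} = (k/(k-1))^r$. Together these give identical limiting polynomials $M_z^{(r)}$ in the two cases. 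Since $c_j^{(r)}\le 4^r C_r$, Carleman's condition holds and the limiting measure is uniquely determined by its moments; combined with the known convergence $\nu_{N,z}^{\text{asym}} \to \nu_z$ due to Girko and Tao--Vu, we conclude $\nu_{N,z}^{\text{check}} \to \nu_z$ as well.

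\textbf{The main obstacle} is step (i) above: one must carefully extend the parity argument from Theorem \ref{thm:singVal bulk} --- where it was established for traces of the form $\mathrm{Tr}((A^*A)^r)$ --- to mixed traces with interleaved $X_2$, $X_3$, and $X_4$ factors. In particular, one must verify that no combination of such interleavings produces a dominant (i.e., tree-level) pairing available in the symmetric ensemble but absent in the asymmetric one, so that the two combinatorial counts agree up to the overall $R^{-2r}$ rescaling. The $X_4$ ingredient introduces ``identity edges'' (which force consecutive indices to coincide) that must be incorporated into the tree-walk framework; once this bookkeeping is settled, matching of moments and the conclusion follow.
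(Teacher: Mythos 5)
Your proposal follows essentially the same route as the paper: the paper expands the entries of $B_{N,z}$ into the same four ingredients via a map $\Psi\colon\{1,\dots,r\}\to\{\alpha,\beta,\gamma,\delta\}$, interprets the surviving terms as signed closed walks on trees (which is precisely the bookkeeping you flag as the main obstacle, with the opposite-sign condition on the two traversals of each edge ruling out the extra symmetric pairings $\E[a_{ij}\overline{a_{ji}}]$ at tree level), and matches the congruence-class restriction against the $R$-normalization. The one imprecision is that the congruence factor for a given ingredient sequence is $((k-1)/k)^{s_1+(s_2+s_3)/2}$ rather than a uniform $((k-1)/k)^r$; it must cancel the normalization $\alpha^{-(2s_1+s_2+s_3)}$ term by term (as it does, since the exponent is twice the number of tree edges), not just for the all-$X_1$ term, but this does not affect the conclusion.
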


\begin{proof}
For any matrix $A_N$ (here $A_N$ can denote either a complex asymmetric matrix, or a hollow complex checkerboard matrix) write $$B_{N,z}\ :=\ \left(\frac{1}{\sqrt{N}}A_N-zI\right)^*\left(\frac{1}{\sqrt{N}}A_N-zI\right).$$ When studying the measures $\nu_{N,z}^\text{{\rm check}}$, it suffices to consider instead ``hollow'' checkerboard matrices $A_N$ where, for $i\equiv j\pmod{k}$ we set $a_{i,j}=0$ instead of $1$. This is because the original checkerboard ensemble is a rank $k$ perturbance of this modified ensemble, i.e., we can write $A_N=A_N^\text{hollow}+P$ for a finite rank matrix $P$, which will also amount to a finite rank perturbance of the matrix $B_{N,z}$ to the analogous $B_{N,z}^\text{hollow}$. Then by \cite[Theorem~1.3]{RMT2016} the two measures converge to the same distribution almost surely.

Returning to the general setting, note that $B_{N,z}$ has entries
\begin{align}
b_{ij}&\ =\ \left(\sum_{m=1}^N \frac{1}{\sqrt{N}}a_{im}-\delta_{im}z\right)^*\left( \frac{1}{\sqrt{N}}a_{mj}-\delta_{mj}z\right) \nonumber
\\
&\ =\ \frac{1}{N}\sum_{m=1}^N \overline{a_{mi}}a_{mj}-\frac{1}{\sqrt{N}}a_{ij}\overline{z}-\frac{1}{\sqrt{N}}\overline{a_{ji}}z+\delta_{ij}|z|^2
\label{eq:expanded entries}
\end{align}
for $\delta_{ij}$ the Kronecker delta ($\delta_ij = 1$ if $i=j$ and is 0 otherwise).

By the eigenvalue trace lemma, the $r$\textsuperscript{{\rm th}} moment $M_{N,z}^{(r)}$ of $\nu_{N,z}$ is given by the following \emph{cyclic product}:
\begin{align}
M_{N,z}^{(r)}\ =\ \frac{1}{N}\sum_{1\leq i_1,\ldots,i_r\leq N} \E[b_{i_1 i_2} b_{i_2 i_3}\cdots b_{i_r i_1}].
\end{align}

By linearity of expectation we can expand the above expectation in terms of $a_{ij}$ using \eqref{eq:expanded entries} and re-indexing. This is done via the following notation. Take $\Psi:\{1,2,\ldots,r\}\rightarrow\{\alpha,\beta,\gamma,\delta\}$ to represent an arbitrary map of sets, and write $A=\card{\Psi^{-1}(\alpha)}$, $B=\card{\Psi^{-1}(\beta)}$, $C=\card{\Psi^{-1}(\gamma)}$, and $D=\card{\Psi^{-1}(\delta)}$. Then $M_{N,z}^{\text{asym~}(r)}$ and $M_{N,z}^{\text{check~}(r)}$ can be expressed by
\begin{align}
\label{hermitianized moments}
M_{N,z}^{\text{asym~}(r)}&\ =\ \sum_{\Psi} \left(\frac{(-1)^{B+C}\overline{z}^{B}z^C\abs{z}^{2D}}{N^{A+B/2+1}}\sum_{\substack{1\leq i_1,\ldots,i_r\leq N \\ 1\leq m_j\leq N \text{~for~} j\in\Psi^{-1}(\alpha)}}\E\left[\xi_1^{(\Psi(1))} \xi_2^{(\Psi(2))} \cdots \xi_r^{(\Psi(r))}\right]\right)
\nonumber\\
M_{N,z}^{\text{check~}(r)}&\ =\ \sum_{\Psi} \left(\frac{(-1)^{B+C}\overline{z}^{B}z^C\abs{z}^{2D}}{R^{2A+B}N^{A+B/2+1}}\sum_{\substack{1\leq i_1,\ldots,i_r\leq N \\ 1\leq m_j\leq N \text{~for~} j\in\Psi^{-1}(\alpha)}}\E\left[\xi_1^{(\Psi(1))} \xi_2^{(\Psi(2))} \cdots \xi_r^{(\Psi(r))}\right]\right)
\end{align}
where
\begin{align}
\xi_j^{(\Psi(j))}\ =\
\begin{cases}
\overline{a_{m_j i_j}}a_{m_j i_{j+1}} & \Psi(j)=\alpha
\\
a_{i_j i_{j+1}} & \Psi(j)=\beta
\\
\overline{a_{i_{j+1} i_j}} & \Psi(j)=\gamma
\\
\delta_{i_j i_{j+1}} & \Psi(j)=\delta.
\end{cases}
\end{align}

We also define notation to write down the conjugacy structure of the above expectation. That is, for fixed $\Psi$, after we factor out the Kronecker deltas $\delta_{i_j i_{j+1}}$ by collapsing the indices $i_j$ and $i_{j+1}$ together, the cyclic product above will become a product of $2A+B$ terms of the form $a_{ij}$ (entries of $A_N$) or $\overline{a_{ij}}$ (conjugates of entries of $A_N$.) Each $a_{ij}$ or $\overline{a_{ij}}$ will be referred to as a \emph{term}. Then, for $1\leq j'\leq 2A+B$ ($j'$ indices meant to enumerate $m_j$ indices as well), let $\psi^+$ denote those indices $j'$ such that the $j'$\textsuperscript{{\rm th}} term in the product is an entry from $A_N$, and let $\psi^-$ denote those indices $j'$ such that the $j'$\textsuperscript{{\rm th}} term in the product is an entry from $\overline{A_N}$.

\begin{example}
\label{ex:Psi function}
The function $\{1,2,3,4,5,6,7\}\xrightarrow{\Psi} \{\beta,\beta,\alpha,\delta,\gamma,\gamma,\alpha\}$ corresponds to the cyclic product $\mathbb{E}\left[\left(a_{i_1 i_2}\right) \left(a_{i_2 i_3}\right) \left(\overline{a_{m_3 i_3}} a_{m_3 i_4}\right) \left(\delta_{i_4 i_5}\right) \left(\overline{a_{i_6 i_5}}\right) \left(\overline{a_{i_7 i_6}}\right) \left(\overline{a_{m_7 i_7}} a_{m_7 i_1}\right) \right]$.
\end{example}

We can now give a different characterization of the above expectations. Below, we use the phrase \emph{step} to refer to each traversal of an edge in a walk on a graph. First we specialize to the complex asymmetric case.

\begin{lemma}
\label{lemma:tree, complex}
For $A_N$ from the complex asymmetric ensemble and for fixed $\Psi$, as $N\rightarrow\infty$ the expectation
\begin{align}
\frac{1}{N^{A+B/2+1}}\sum_{\substack{1\leq i_1,\ldots,i_r\leq N \\ 1\leq m_j\leq N \text{~for~} j\in\Psi^{-1}(\alpha)}}\E\left[\xi_1^{(\Psi(1))} \xi_2^{(\Psi(2))} \cdots \xi_r^{(\Psi(r))}\right].
\end{align}
counts the number of non-isomorphic closed walks on trees with $A+B/2+1$ nodes such that
\begin{itemize}
\item if $B/2$ is not an integer then this quantity is understood to be zero,
\item each edge is traversed exactly twice (once in each direction),
\item each step is given a \emph{sign} $\pm$, and
\item if some edge is traversed first on the $j'_1$\textsuperscript{{\rm th}} step and then later on the $j'_2$\textsuperscript{{\rm th}} step, then either $j'_1$ has positive sign and $j'_2$ has negative sign, or vice versa.
\end{itemize}
We refer to such walks as \emph{signed closed walks on trees}.
\end{lemma}

\begin{proof}
First, note that in such a cyclic product there will be $2A+B$ terms and $2A+B$ indices, once we have collapsed the indices corresponding to Kronecker delta terms $\delta_{ij}$. Suppose we set some of the indices to be equal, so that there are $\ell$ free indices for $1\leq \ell\leq 2A+B$. If $\ell< A+B/2+1$, then the contribution vanishes in the limit $N\rightarrow\infty$. Else, define a graph on $\ell$ vertices by drawing an edge between vertices $i$ and $j$ if there exists a term with indices $i$ and $j$ (respecting the given identification of vertices/indices). There must be at least $\ell-1$ edges, and if $\ell>A+B/2+1$, then there are more than $A+B/2$ distinct edges. By construction, two distinct edges are given by two independent terms. Thus, there are more than $A+B/2$ terms in our cyclic product, so that there exists one term independent from the rest. Note that all terms are drawn from mean $0$ distributions, so that if any one term is independent from the rest, then the entire expectation is immediately zero. So we conclude that, for a nonzero expectation, our graph will contain $A+B/2+1$ vertices and $A+B/2$ edges, and thus is a tree on $A+B/2+1$ nodes. We show first that every cyclic product of nonzero expectation corresponds to a walk satisfying the above conditions.

\begin{figure}[h]
\centering
\scalebox{2}
{
\begin{tikzpicture}
\node (A) at (-1,1){};
\node (B) at (-2, 0){};
\node (C) at (0, 0){};
\node (D) at (-1, -1){};
\node (E) at (1, -1){};
\node (Alb) at (-1,1.2){$\scriptscriptstyle i_1=i_7$};
\node (Blb) at (-2,-.3){$\scriptscriptstyle m_7$};
\node (Clb) at (.7,.2){$\scriptscriptstyle i_2=m_3=i_6$};
\node (Dlb) at (-1,-1.3){$\scriptscriptstyle i_3$};
\node (Elb) at (1,-1.3){$\scriptscriptstyle i_4=i_5$};
\node (e1) at (-.55,.35){\scalebox{.4}{1}};
\node (e2) at (-.65,-.45){\scalebox{.4}{2}};
\node (e3) at (-.45,-.65){\scalebox{.4}{3}};
\node (e4) at (.45,-.65){\scalebox{.4}{4}};
\node (e5) at (.65,-.45){\scalebox{.4}{5}};
\node (e6) at (-.35,.55){\scalebox{.4}{6}};
\node (e7) at (-1.55,.65){\scalebox{.4}{7}};
\node (e8) at (-1.45,.35){\scalebox{.4}{8}};

\filldraw[fill=black] (A) circle (1mm);
\filldraw[fill=black] (B) circle (1mm);
\filldraw[fill=black] (C) circle (1mm);
\filldraw[fill=black] (D) circle (1mm);
\filldraw[fill=black] (E) circle (1mm);
\path[line width=2]
    (-1,1) edge (-2, 0)
    (-1,1) edge (0, 0)
    (0, 0) edge (-1, -1)
    (0, 0) edge (1, -1)
    (-1,1) edge (0,0);

\path[-,line width=.5]
    (-.9,.7) edge (-.6,.4);
\path[-,-{Straight Barb[angle=60:2pt 2]}, line width=.5]
    (-.5,.3) edge (-.2,0);

\path[-,line width=.5]
    (-.3,-.1) edge (-.6,-.4);
\path[-,-{Straight Barb[angle=60:2pt 2]}, line width=.5]
    (-.7,-.5) edge (-1,-.8);

\path[densely dotted,line width=.5]
    (-.8,-1) edge (-.5,-.7);
\path[densely dotted,-{Straight Barb[angle=60:2pt 2]}, line width=.5]
    (-.4,-.6) edge (0,-.2);

\path[-,line width=.5]
    (.1,-.3) edge (.4,-.6);
\path[-,-{Straight Barb[angle=60:2pt 2]}, line width=.5]
    (.5,-.7) edge (.8,-1);

\path[densely dotted,line width=.5]
    (1,-.8) edge (.7,-.5);
\path[densely dotted,-{Straight Barb[angle=60:2pt 2]}, line width=.5]
    (.6,-.4) edge (.2,0);

\path[densely dotted,line width=.5]
    (0,.2) edge (-.3,.5);
\path[densely dotted,-{Straight Barb[angle=60:2pt 2]}, line width=.5]
    (-.4,.6) edge (-.8, 1);

\path[densely dotted,line width=.5]
    (-1.2,1) edge (-1.5,.7);
\path[densely dotted,-{Straight Barb[angle=60:2pt 2]}, line width=.5]
    (-1.6,.6) edge (-2,.2);

\path[-,line width=.5]
    (-1.8,0) edge (-1.5,.3);
\path[-,-{Straight Barb[angle=60:2pt 2]}, line width=.5]
    (-1.4,.4) edge (-1,.8);

\end{tikzpicture}
}
\caption{The graph associated to the cyclic product from Example \ref{ex:Psi function} if we consider the matching $i_1=i_7$, $i_2=m_3=i_6$, and $i_4=i_5$ (which is valid, i.e., gives nonzero expectation). The arrows represent steps of the walk, numbered in order, with dotted arrows representing steps of negative sign and solid arrows representing steps of positive sign.}
\label{fig:tree}
\end{figure}
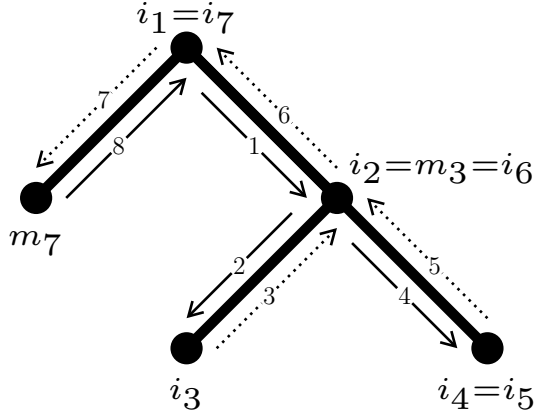

Define a walk on this tree as follows (see Figure \ref{fig:tree} for a useful example). Start at the index corresponding to $i:=i_1$. If the first term is $a_{ij}$, walk from vertex $i$ to vertex $j$, and assign this step a $+$ sign. If instead the first term is $\overline{a_{ji}}$, also walk from vertex $i$ to vertex $j$, but assign this step a $-$ sign. These are the only possibilities. By construction, the cyclic product above will yield a walk on the constructed tree with $A+B/2$ edges. Since we started with $2A+B$ edges, each edge is traversed exactly twice. For the third condition, note by the computations in \eqref{eq:complex expectation1} and \eqref{eq:complex expectation2} that two dependent terms (i.e., the two times we traverse the same edge) with a nonzero expected product must be of different conjugacy classes (i.e., one is an entry from $A_N$, and another is an entry from $\overline{A_N}$). This translates to the third condition above: since the sign of each step keeps track of the conjugacy class that gave that step, this means that the two steps for every edge must have opposite sign.

Now we claim that every walk on a tree with $A+B/2$ nodes satisfying the above conditions corresponds to a cyclic product with nonzero expectation. Given such a walk, with first step from $i$ to $j$, set the first term in the cyclic product to be $a_{ij}$ if the step has sign $+$, and $\overline{a_{ji}}$ if the step has sign $-$. Doing this for all steps will generate a cyclic product where every term is paired with exactly one other term, in a conjugate pair. To ensure we count only terms of nonzero expectation, we need to check that all conjugate pairs are of the form $\E[a_{ij}\overline{a_{ij}}]=1$ and not $\E[a_{ij}\overline{a_{ji}}]=0$, since we are working with a complex asymmetric ensemble, and $a_{ij}$ is independent from $a_{ji}$. However, the second situation cannot happen, since our walk occurs on a tree, i.e., if each edge is traversed exactly twice, the two steps must occur in opposite directions, so one step goes from $i$ to $j$, while the other goes from $j$ to $i$. Then, forcing the steps to have opposite signs implies our paired expectations are of the form $\E[a_{ij}\overline{a_{ij}}]=1$. This shows the bijection between our cyclic expectations and walks satisfying the above conditions.

Then, once we have fixed a valid identification of the starting $2A+B$ indices to $A+B/2+1$ free indices, each free index has $N$ choices (un-identified indices being assigned the same index will happen non-generically as $N\rightarrow\infty$ for $r$ fixed, and will vanish in the limit as lower order terms by degree of freedom counts as above), so each nonzero expectation will contribute $N^{A+B/2+1}$ which equals $1$ after dividing out by the normalizing factors of $N$ present above, which suffices for the lemma.
\end{proof}

\begin{corollary}
\label{cor:checkerboard tree}
For $A_N$ from the complex symmetric $k$-checkerboard ensemble and for fixed $\Psi$, Lemma \ref{lemma:tree, complex} holds for the analogous expectation:

\begin{align}
\frac{1}{R^{2A+B}N^{A+B/2+1}}\sum_{\substack{1\leq i_1,\ldots,i_r\leq N \\ 1\leq m_j\leq N \text{~for~} j\in\Psi^{-1}(\alpha)}}\E\left[\xi_1^{(\Psi(1))} \xi_2^{(\Psi(2))} \cdots \xi_r^{(\Psi(r))}\right].
\end{align}
\end{corollary}

\begin{proof}
The interpretation of cyclic products as walks as above for $a_{ij}$ entries of checkerboard matrices is similarly valid up to the last two paragraphs of the proof of Lemma \ref{lemma:tree, complex}.

It is clear that every identification of indices that gives a nonzero expectation for the asymmetric ensemble also gives a nonzero expectation for the symmetric ensemble. We should check that we can find no additional matchings from the symmetric condition. This is something we have already seen: above, we argued given a walk satisfying the above conditions and the associated cyclic product it generates, \emph{none} of the pairings in the cyclic product are of the form $\E[a_{ij}\overline{a_{ji}}]$, because of the restriction that steps on the same edge are of opposite sign.

However, we have an additional restriction that when choosing our indices, if there exists an edge between $i$ and $j$, then $i\not\equiv j\pmod{k}$. This is because if $i\equiv j\pmod{k}$ then $a_{ij}=0$, and the expectation is zero. The modification is clear from the interpretation of walks on a tree: since trees are acyclic, once we have fixed the congruence class of one vertex, (recalling $\ell=A+B/2+1$ there will be $(k-1)^{\ell-1}$ ways to choose congruence classes of the other vertices such that adjacent vertices on the tree do not share the same congruence class, so there are $k(k-1)^{\ell-1}$ ways to choose the congruence classes. After fixing the congruence classes there are $(N/k)^\ell$ ways to choose indices for each vertex, so we see there are $N^\ell\left(\frac{k-1}{k}\right)^{\ell-1}=N^{A+B/2+1}R^{2A+B}$ ways to choose the vertices that will give a nonzero expectation. Dividing by the normalizing factors of $N$ and $R$ in front then give the result.
\end{proof}

This shows that for all $z\in\C$, $\lim_{N\rightarrow\infty} M_{N,z}^{\text{asym~}(r)}=\lim_{N\rightarrow\infty} M_{N,z}^{\text{check~}(r)}$. A few observations reduce the moments to the form claimed in Theorem \ref{thm:log Moments}. Since above we saw that every term must be paired with exactly one other conjugate term to produce a nonzero expectation, we conclude $B=C$, so our moments must take the form claimed in Theorem \ref{thm:log Moments}.

\begin{remark} Note that the proof of Corollary \ref{cor:checkerboard tree} also goes through for the complex asymmetric $k$-checkerboard ensemble. Because of the restriction of matching terms in conjugate pairs, the symmetry condition is immaterial for this specific bulk calculation as is discussed above.
\end{remark}

\begin{corollary}
\label{cor:moment bound}
We have $0\leq c_j^{(r)}\leq 4^r C_r$ with $C_r$ the $r$\textsuperscript{{\rm th}} Catalan number.
\end{corollary}
\begin{proof}
Note that there are $4^r$ choices of $\Psi$. Then, since $A+B/2+1\leq r+1$ and the number of non-isomorphic closed walks on trees with $r+1$ nodes such that each edge is traversed exactly twice is given by $C_r$ (equivalently the number of ordered trees on $r+1$ nodes), we obtain the claimed upper bound via Lemma \ref{lemma:tree, complex} and Equation \eqref{hermitianized moments}. Note that this upper bound is far from tight, as we have not removed any walks corresponding to zero expectation, from the assignment of signs to steps.
\end{proof}

\begin{example}
The coefficients $c_j^{(r)}$ can be computed by hand. For example, there is one choice of $\Psi$ such that $B+C+2D=0$ (i.e., $A=r$), which corresponds to signed closed walks on trees with $r$ nodes. Then, as in the proof of Theorem \ref{thm:singVal bulk}, every closed walk on a tree where every even step has sign $+$ and every odd step has sign $-$ always satisfies the condition that the two steps on the same edge are of opposite sign, i.e., the number of valid walks is counted by the Catalan number $C_r$, so we conclude $c_0^{(r)}=C_r$. There is also only one choice of $\Psi$ such that $B+C+2D=2r$ (i.e., $D=r$), which corresponds to signed closed walks on trees with one node, of which there is one, so $c_r^{(r)}=1$.
\end{example}

\begin{example}
The first few moments are
\begin{align}
M_z^{(1)}&\ =\ 1+\abs{z}^2
\nonumber\\
M_z^{(2)}&\ =\ 2+3\abs{z}^2+\abs{z}^4
\nonumber\\
M_z^{(3)}&\ =\ 5+15\abs{z}^2+6\abs{z}^4+\abs{z}^6.
\end{align}
\end{example}

In particular, Corollary \ref{cor:moment bound} implies that Carleman's condition is satisfied for our moments at each fixed $z$, since for fixed $z$ the moments are bounded by some scaling of the Catalan numbers (which satisfy Carleman's condition as in Wigner's semicircle law). Thus the moments uniquely characterize the distribution for every fixed $z$, i.e., $\nu_{N,z}^\text{{\rm check}}\rightarrow\nu_z$ and we have proved Theorem \ref{thm:log Moments}.
\end{proof}

\begin{proof}[Proof of Theorem \ref{thm:complexSym bulk}]
Applying Proposition \ref{prop:log continuity} and Lemma \ref{lemma:log singularities} to Theorem \ref{thm:log Moments} suffices for the proof of Theorem \ref{thm:complexSym bulk}.
\end{proof}



\section{Generalized checkerboard ensembles}
\label{section:genCheck}

\subsection{Analogs of bulk and blip results for generalized checkerboard ensembles}\ \\

The ensembles that follow in this subsection will have random variables iidrv (up to a symmetry restriction).  Many of the main results above hold for generalized checkerboard ensembles as well.


\begin{proof}[Proof of Corollary \ref{cor:gen singVal bulk}]
As in the proof of Theorem \ref{thm:singVal bulk} we can reduce to the case where the deterministic entries are all zero by a perturbation argument, apply the eigenvalue trace lemma up to Equation \eqref{eq:singVal moments}, and also interpret the cyclic products as closed walks on trees with $r+1$ nodes. The number of closed walks traversing each edge twice on trees with $r+1$ nodes is again given by the number of ordered trees on $r+1$ nodes, which is $C_r$. As before, it is not true that all $N^{r+1}+O\left(N^r\right)$ choices of indices will yield a nonzero expectation in the cyclic product, since if $i$ and $j$ are adjacent indices then we cannot have $a_{ij}=0$. Before, this reduced to the condition $i\not\equiv j\pmod{k}$. In the $m$-regular $k$-checkerboard case, however, the correct condition is that, having fixed any congruence class modulo $k$ of some index $i$, there are $k-m$ congruence classes for the index $j$ such that $a_{ij}\not=0$. This is because $m$ is a constant dependent only on the entire ensemble. Thus we see that $(R/2)^{2r}N^{r+1}+O\left(N^r\right)$ choices of indices will yield a nonzero expectation in the cyclic product. Indeed, there are $N$ choices for the root of the tree, $N(1-m/k)$ choices for each of their children, $N(1-m/k)$ choices for each of the children in the next level down, etc., which shows that all moments of the squared singular value bulk are $M_r=\left(\frac{R}{2}\right)^{2r} C_r$ as claimed.
\end{proof}

\begin{example}
In general, a non-regular generalized $k$-checkerboard ensemble need not have singular values following a quarter-circular bulk. Consider the generalized $2$-checkerboard ensemble tiled with $2\times 2$ matrices of the form
\begin{align}
   \begin{pmatrix}
    1 & * \\
    * & * \\
   \end{pmatrix}
\end{align}
where, as before, each $*$ represents an iidrv complex random variable with mean $0$ variance $1/2$ that respects the symmetric structure . As in the proof of Theorem \ref{thm:singVal bulk} the bulk of this ensemble will converge to the bulk of the ensemble with the entries $1$ replaced with $0$. Brute force computing the small moments (counting by hand the appropriate walks on trees as is done to compute the quarter-circular bulk above) gives $M_1=3/4$, $M_2=10/8$, $M_3=42/16$, and $M_4=198/32$. In particular the bulk cannot be quarter-circular of radius $R$, which would correspond to moments $M_r=(R/2)^{2r} C_r$, e.g. $M_1=(1/2)(R^2/2)$, $M_2=(2/4)(R^4/4)$, $M_3=5/8(R^6/8)$, and $M_4=(14/16)(R^8/16)$.\footnote{We note that the numerators (starting with the $0-th$ moment) $2,3,10,42,198$ are the first five terms of the OEIS sequence A007226, which relates to counting certain ternary trees.}
\end{example}

\begin{proof}[Proof of Corollary \ref{cor:gen eigen bulk}]
The discussion of the log potential in the proof of Theorem \ref{thm:complexSym bulk} is done in the same way for our generalized checkerboard ensemble up to Lemma \ref{lemma:log singularities}. The proof of that lemma goes through as well, assuming the appropriate analogy to Assumption \ref{assumption:singVal bounds} (replace instances of the checkerboard ensemble with the generalized checkerboard ensemble), when we note that Weyl's inequality for singular values of $\frac{1}{\sqrt{N}}A_N-zI$ for fixed $z$ gives at least $N-k$ squared singular values of size $O\left(1\right)$, and at most $k$ squared singular values of size $O\left(N\right)$, in which case the argument in Lemma \ref{lemma:log singularities} for the behavior at $\infty$ can be applied as well. The qualifiers ``at least'' and ``at most'' come from the fact that the perturbation from the hollow ensemble (deterministic $1$'s replaced with $0$'s) for a generalized $k$-checkerboard is rank at most $k$, possibly less.

Then, the expansions of Theorem \ref{thm:log Moments} in the checkerboard context can also be done for the generalized checkerboard ensembles, with the necessary modifications arising in an analogous proof of Corollary \ref{cor:checkerboard tree}. As discussed in that proof, the restriction of symmetry on the ensemble does not affect the combinatorics that arise when counting signed closed walks on trees. Again, the modification comes when counting how many ways there are to choose indices, given a walk, since adjacent indices $i,j$ must have $a_{ij}\not=0$ for the expectation to be nonzero as discussed in the proof of Corollary \ref{cor:checkerboard tree}. Here, there are $N$ ways to choose the index of the root, $R^2N$ ways to choose the indices of the root's children, $R^N$ ways to choose the indices of those children's children, etc. which shows as in Corollary \ref{cor:checkerboard tree} that the bulk converges to a uniform disc centered at the origin scaled by $R$.
\end{proof}

\subsection{Eigenvalues of complex checkerboard matrices: blip}\ \\

The main steps in the proof of Theorem \ref{thm:complex blip} are to first restrict our attention to regions $\Omega_\epsilon$ to avoid the singularity at $0$, show that the distribution must be discrete and finitely supported, show that discrete distributions are characterized by (holomorphic) moments, compute moments via the eigenvalue trace lemma, and to control the error arising from restricting to $\Omega_\epsilon$.

\begin{lemma}
\label{lemma:discrete moment}
For all $\epsilon>0$, assume almost sure convergence to some measure $\tilde{\mu}_N\rightarrow\tilde{\mu}$ as measures on $\Omega_\epsilon$. With any fixed $\epsilon>0$, the measure $\tilde{\mu}$ on $\Omega_\epsilon$ must be a discrete measure with finite support.
\end{lemma}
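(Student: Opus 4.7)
The plan is to obtain a uniform (in $N$) upper bound on the number of atoms of $\tilde{\mu}_N$ in $\Omega_\epsilon$, and then pass to the weak limit via a Portmanteau/pigeonhole argument. Working almost surely throughout, I would first decompose $A_N = M_N + P_N$, where $P_N$ is the deterministic piece composed of repeating blocks of $B$ (rank at most $k$) and $M_N$ is the hollow analog. A Wigner-type bound of the type used in Proposition~\ref{prop:singVal regimes} gives $\|M_N\|_{\mathrm{op}} = O(N^{1/2+\delta})$ almost surely, so by Weyl's additive singular value inequality,
\[
\sigma_{j+k}(A_N) \;\leq\; \sigma_j(M_N) + \sigma_{k+1}(P_N) \;=\; \sigma_j(M_N) \;=\; O(N^{1/2+\delta}),
\]
while the top singular values of $A_N$ are controlled only by $\|A_N\|_{\mathrm{op}} = O(N)$.

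Next, I would apply the multiplicative Weyl inequality $\prod_{j=1}^{r}|\lambda_j(A_N)| \leq \prod_{j=1}^{r}\sigma_j(A_N)$ (both sides ordered by decreasing modulus) to convert this singular value count into an eigenvalue count. Suppose, for contradiction, that $k+1$ eigenvalues of $A_N$ satisfy $|\lambda_j| \geq \epsilon N/k$. Then the left-hand product is $\Omega(N^{k+1})$, whereas
\[
\prod_{j=1}^{k+1}\sigma_j(A_N) \;\leq\; \|A_N\|_{\mathrm{op}}^{k}\cdot\sigma_{k+1}(A_N) \;=\; O\bigl(N^{\,k+1/2+\delta}\bigr),
\]
a contradiction for any $\delta < 1/2$ and $N$ large. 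Consequently, at most $k$ eigenvalues of $A_N$ lie outside the disc of radius $\epsilon N/k$; after rescaling by $k/N$ and weighting by $1/k$, $\tilde{\mu}_N|_{\Omega_\epsilon}$ is almost surely a sum of at most $k$ point masses, each of weight $1/k$.

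Finally, I would pass to the weak limit. If $\tilde{\mu}|_{\Omega_\epsilon}$ had support containing $k+1$ distinct points $z_1,\ldots,z_{k+1}$, I would choose pairwise disjoint open neighborhoods $U_i\ni z_i$, compactly contained in $\Omega_\epsilon$, with $\tilde{\mu}(U_i)>0$. The open-set half of the Portmanteau theorem gives $\liminf_{N}\tilde{\mu}_N(U_i)\geq \tilde{\mu}(U_i)>0$, so for $N$ large each $U_i$ must contain an atom of $\tilde{\mu}_N$; by pigeonhole this violates the at-most-$k$-atoms bound. Hence $\tilde{\mu}|_{\Omega_\epsilon}$ is supported on at most $k$ points, and is therefore discrete and finitely supported. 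I expect the main obstacle to be the first step: extending the Wigner-type operator norm bound to the hollow generalized checkerboard $M_N$, whose entries are only iidrv up to the symmetry constraint and a possibly asymmetric deterministic pattern. Mass-escape issues near $\partial\mathcal{B}_\epsilon$ are a lesser concern, since the hypothesis supplies convergence on $\Omega_{\epsilon'}$ for every $\epsilon' > 0$, and one may shrink the radius slightly before invoking the Portmanteau bound.
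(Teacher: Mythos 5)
Your argument is correct, but it takes a genuinely different route from the paper. You obtain a uniform-in-$N$ bound of at most $k$ atoms of $\tilde{\mu}_N$ on $\Omega_\epsilon$ by combining the additive Weyl inequality for singular values (using that $P_N$ has rank at most $k$ to get $\sigma_{k+1}(A_N)=O(N^{1/2+\delta})$) with the multiplicative Weyl majorization of eigenvalue moduli by singular values, and then transfer the bound to the limit by Portmanteau and pigeonhole. The paper instead applies the polynomial $f(z)=zp(z)$, with $p$ the characteristic polynomial of $B$, to the matrix $\frac{k}{N}A_N$: Cayley--Hamilton gives $f\left(\frac{k}{N}P\right)=0$, so $f\left(\frac{k}{N}A_N\right)$ has operator norm $O(N^{-1/2+\delta})$ almost surely, and the pushforward $f_*\tilde{\mu}_N'$ concentrates at the origin, forcing the limit to be supported on the zeroes of $f$. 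The trade-off is this: your proof is more elementary, reuses exactly the singular-value machinery the paper already deploys in Proposition \ref{prop:singVal regimes}, Lemma \ref{lemma:moment epsilon}, and Corollary \ref{cor:bulk mass}, and gives the sharp count of at most $k$ support points; the paper's proof gives the stronger conclusion that the support lies in the set of nonzero eigenvalues of $B$, and that localization is invoked again in the proof of Theorem \ref{thm:complex blip} (to equate the moments of $\tilde{\mu}$ over $\Omega_\epsilon$ and $\Omega_{\epsilon'}$), so substituting your proof would require reworking that later step. The obstacle you flag --- the operator-norm bound $\norm{M_N}_{\mathrm{op}}=O(N^{1/2+\delta})$ for the hollow generalized checkerboard matrix --- is the same input the paper itself relies on (via Lemma B.3 of \cite{RMT2016}, or the Gershgorin argument in its footnote), so it is not an additional gap in your approach.
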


\begin{proof}
Consider $0<\epsilon'<\epsilon$ and write $\tilde{\mu}_N$ and $\tilde{\mu}_N'$ for the restrictions to $\Omega_\epsilon$ and $\Omega_{\epsilon'}$ respectively. Write also $\tilde{\mu}_N\rightarrow\tilde{\mu}$ and $\tilde{\mu}_N'\rightarrow\tilde{\mu}'$ via our convergence assumption. Note $\tilde{\mu}'$ restricts to $\tilde{\mu}$ on $\Omega_\epsilon$.

Write $p(x)$ for the (degree $k$) characteristic polynomial of $B$, and define $f(z):=z p(z)$, viewed as a function $f\colon\C\rightarrow\C$. Note $f$ has fixed degree as $N\rightarrow\infty$.

Consider the pushforward measures $f_*\tilde{\mu}_N'\rightarrow f_*\tilde{\mu}'$ on $f(\Omega_{\epsilon'})$. Note that $f_*\tilde{\mu}_N$ is also the spectral distribution of $f\left(\frac{k}{N} A_N\right)$ restricted to $f(\Omega_{\epsilon'})$ with total (restricted) measure scaled by $N/k$, and in particular has the same support. We can control the support of the spectral distribution of $f\left(\frac{k}{N} A_N\right)$ via its largest singular value, which majorizes all its eigenvalues. If $c$ is the constant term of $p(z)$, the Cayley-Hamilton theorem shows that $p\left(\frac{k}{N}P\right)-c$ is a block matrix consisting of repeating $k\times k$ identity blocks scaled by $-kc/N$. This then shows $f\left(\frac{k}{N} P\right)=0$. Note $f\left(\frac{k}{N} A_N\right)=f\left(\frac{k}{N} P\right)+O\left(\frac{k}{N} M\right)=O\left(\frac{k}{N} M\right)$, where the big $O$ term is a sum of mixed products of $\frac{k}{N}M$ and $\frac{k}{N}P$ at least linear in $\frac{k}{N}M$. The number of terms in this sum is fixed as $N\rightarrow\infty$ because $f$ has fixed degree. The largest singular value of $\frac{k}{N}P$ is $O(1)$ and the largest singular value of $\frac{k}{N} M$ is almost surely $O(N^{-1/2+\epsilon})$ for all $\epsilon>0$ via a standard method of moments argument, see the proof of Proposition \ref{prop:singVal regimes} and Lemma B.3 of \cite{RMT2016} (recall that $M$ has all deterministic entries set to zero).\footnote{This bound alternately follows from the Gershgorin Circle Theorem applied to $M$ -- central limit theorem on the at most $N$ random variables in each row gives Gershgorin discs of radii $O(N^{1/2})$ with centers at distance $O(1)$ from the origin.}

Thus the largest singular value of $f\left(\frac{k}{N}A_N\right)$ is almost surely $O\left(N^{-1/2+\epsilon}\right)$ for all $\epsilon>0$. In particular, as $N\rightarrow\infty$ the spectral measure of $f\left(\frac{k}{N} A_N\right)$ and thus $f_*\tilde{\mu}_N'$ almost surely has support contained in any fixed radius ball at the origin. Considering the neighborhood $f(\mathcal{B}_{\epsilon'})$ of the origin, we find that $f_*\tilde{\mu}_N'$ is almost surely supported on $f(\mathcal{B}_{\epsilon'})$ as $N\rightarrow\infty$. Fixing any $\delta>0$, we can then select $\epsilon'>0$ small such that $f_*\tilde{\mu}_N'$ vanishing outside $f(\mathcal{B}_{\epsilon'})$ implies that, on $\Omega_{\epsilon'}$, $\tilde{\mu}'_N$ vanishes outside balls of radius $\delta$ centered at the zeroes of $f$.\footnote{Note $\C\setminus f(\Omega_{\epsilon'})\subset f(\mathcal{B}_{\epsilon'})$ by surjectivity of $f$.}

Since $\tilde{\mu}_N'$ restricts to $\tilde{\mu}_N$ on $\Omega_{\epsilon}$ as long as $\epsilon'<\epsilon$, sending $\delta\rightarrow 0$ and then $N\rightarrow\infty$ shows that $\tilde{\mu}$, restricted to $\Omega_{\epsilon}$ is finitely supported at the (nonzero) zeroes of $f$.
\end{proof}

We would like to analyze the moments of $\tilde{\mu}_N\rightarrow\tilde{\mu}$. However, the formula from the eigenvalue trace lemma applies to moments of $\tilde{\mu}_N$ on all of $\C$ rather than restricted $\Omega_\epsilon$. We first control this error.

\begin{lemma}
\label{lemma:moment epsilon}
As $N\rightarrow\infty$, the measure $\tilde{\mu}_N$ restricted to $\mathcal{B}_{\epsilon}$ contributes at most $(1/k)(2k+2)\epsilon$ to the $r$\textsuperscript{{\rm th}} moments, for $r\geq 6$.
\end{lemma}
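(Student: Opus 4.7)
The plan is to reduce the restricted moment to a weighted sum of $(r-1)$st powers of the rescaled eigenvalues, pass from eigenvalues to singular values via Weyl's majorization inequality, and then invoke the split-regime singular value behavior from Proposition~\ref{prop:singVal regimes}.

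First, unpacking the definition $\tilde{\mu}_N = (1/k)\sum_i \delta_{\tilde{\lambda}_i}$ with $\tilde{\lambda}_i := (k/N)\lambda_i(A_N)$, the restricted $r$th moment satisfies
$$\left|\int_{\mathcal{B}_\epsilon} z^r\, d\tilde{\mu}_N\right| \ \leq\ \frac{1}{k}\sum_{|\tilde{\lambda}_i|\leq \epsilon}|\tilde{\lambda}_i|^r \ \leq\ \frac{\epsilon}{k}\sum_{i=1}^N |\tilde{\lambda}_i|^{r-1},$$
since on $\mathcal{B}_\epsilon$ one factor of $|\tilde{\lambda}_i|$ is at most $\epsilon$, and enlarging the index set from $\{|\tilde{\lambda}_i|\leq\epsilon\}$ to $\{1,\ldots,N\}$ only increases the sum. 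Next, Weyl's majorization inequality $\sum_i|\lambda_i|^p \leq \sum_i \sigma_i^p$ (valid for $p\geq 1$) yields
$$\sum_i|\tilde{\lambda}_i|^{r-1} \ \leq\ \left(\frac{k}{N}\right)^{r-1}\sum_i \sigma_i(A_N)^{r-1}.$$

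Now I would invoke Proposition~\ref{prop:singVal regimes} (applied in the generalized $k$-checkerboard setting via the decomposition $A_N = M + P$ from the definition preceding Theorem~\ref{thm:complex blip}, together with $\norm{M}_{\mathrm{op}} = \bigO{N^{1/2+\epsilon'}}$ and Weyl's inequality for singular values). Almost surely as $N \to \infty$, $k$ of the $\sigma_i(A_N)$ equal $N/k + \bigO{N^{3/4+\epsilon'}}$ while the remaining $N-k$ are $\bigO{N^{1/2+\epsilon'}}$. Summing:
$$\left(\frac{k}{N}\right)^{r-1}\sum_i \sigma_i(A_N)^{r-1} \ \leq\ k\bigl(1+\littleO{1}\bigr) \ +\ \bigO{k^{r-1}\, N^{(3-r)/2 + \epsilon'(r-1)}}.$$
The first term tends to $k$; for $r\geq 6$ and $\epsilon'>0$ sufficiently small, the exponent $(3-r)/2 + \epsilon'(r-1)$ is strictly negative, so the second term is $\littleO{1}$. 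Hence almost surely, for all sufficiently large $N$, $\sum_i|\tilde{\lambda}_i|^{r-1} \leq 2k+2$, and substituting into the first display gives the claim.

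The main subtlety is the identification of the blip singular values in Step~3: the clean constant $2k+2$ is immediate in the complex symmetric $1$-regular case where the blip singular values of $P$ are exactly $N/k$, but in the general setting these values are $(N/k)\sigma_j(B)$ and introduce a factor $\sum_j \sigma_j(B)^{r-1}$ in place of $k$. The structure of the proof is unchanged; the threshold $r\geq 6$ arises precisely to force the bulk contribution $k^{r-1}N^{(3-r)/2 + \epsilon'(r-1)}$ to be $\littleO{1}$ with enough slack to accommodate the small polynomial loss $N^{\epsilon'(r-1)}$ from Proposition~\ref{prop:singVal regimes}.
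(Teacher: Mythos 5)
Your argument is correct and does give the stated constant for the complex symmetric $k$-checkerboard ensemble, but it takes a genuinely different route from the paper, and the difference matters in the generality in which the lemma is actually invoked. The paper's proof is a counting argument: from the multiplicative Weyl majorization $\prod_{i\le m}|\lambda_i|\le\prod_{i\le m}\sigma_i$ and the singular value split it deduces that at most $2k+1$ eigenvalues can lie outside $\mathcal{B}_{N^{-1/5}}$ after rescaling, so the part of $\tilde{\mu}_N$ in the annulus between radius $N^{-1/5}$ and radius $\epsilon$ has total mass at most $(1/k)(2k+1)$ and contributes at most $(1/k)(2k+1)\epsilon$ trivially, while the part inside $\mathcal{B}_{N^{-1/5}}$ contributes $O(N\cdot N^{-r/5})=O(N^{-1/5})$ once $r\ge6$ --- this cutoff is where the threshold $6$ actually comes from, not from the bulk term as you suggest (your version of the bulk estimate already closes at $r\ge4$). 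You instead peel off a single factor of $\epsilon$ and control the full $(r-1)$st absolute moment via the additive majorization $\sum_i|\lambda_i|^p\le\sum_i\sigma_i^p$. That is clean and correct, but, as you acknowledge, in the generalized setting in which the lemma is used to prove Theorem \ref{thm:complex blip} the blip singular values are $(N/k)\sigma_j(B)+o(N)$, so your bound becomes $(\epsilon/k)\bigl(\sum_j\sigma_j(B)^{r-1}+o(1)\bigr)$ rather than $(1/k)(2k+2)\epsilon$: the constant now depends on $B$ and grows with $r$. This weaker bound still suffices downstream (the proof of Theorem \ref{thm:complex blip} only needs some constant times $\epsilon'$ for each fixed $r$, then sends $\epsilon'\to0$), but it does not establish the lemma with the constant as stated; the paper's counting trick is precisely what makes the constant uniform in $B$ and $r$.
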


\begin{proof}
Applying Weyl's inequalities as in Proposition \ref{prop:singVal regimes} shows that, almost surely, the $k$ largest singular values of $A_N$ are $O(N)$ and the remaining $N-k$ singular values are $O(N^{1/2+\epsilon})$. Recall that the product of the $m$ largest singular values majorizes the product of $m$ largest eigenvalues. The $m$ product of the $m$ largest eigenvalues of $A_N$ thus have growth $o\left(N^{k+(m-k)3/5}\right)$ as $N\rightarrow\infty$. Thus the total measure of $\tilde{\mu}_N$ outside of $\mathcal{B}_{N^{-1/5}}$ is bounded above by $(1/k)(2k+1)$ as $N\rightarrow\infty$ since $m(4/5)>k+(m-k)(3/5)$ when $m\geq 2k+1$. The measure $\tilde{\mu}_N$ restricted to the region outside of $\mathcal{B}_{N^{-1/5}}$ but within $\mathcal{B}_{\epsilon}$ thus contributes at most $(2k+1)\epsilon$ to any positive moments, while the measure $\tilde{\mu}_N$ restricted to $\mathcal{B}_{N^{-1/5}}$ contributes at most $N\cdot N^{-6/5}=O(N^{-1/5})$ to $r$\textsuperscript{{\rm th}} moments for $r\geq6$ as $N\rightarrow\infty$, which we bound by $(1/k)\epsilon$. Adding these two contribution gives the claimed bound.
\end{proof}

\begin{remark}
\label{remark:finite measure}
The part of the proof of Lemma \ref{lemma:moment epsilon} bounding the total measure outside $\mathcal{B}_{N^{-1/5}}$ also shows that $\tilde{\mu}$ viewed as a measure on $\Omega_\epsilon$ has finite total measure, for all $\epsilon>0$.
\end{remark}

We next show how the (holomorphic) moments characterize discrete distributions.

\begin{lemma}
\label{lemma:discreteMoments}
Any finite discrete measure $\mu$ on $\C$ with finite support in $\C\setminus\{0\}$ is uniquely determined by its $r$\textsuperscript{{\rm th}} integer moments for $r\geq \alpha$, for any fixed integer $\alpha>0$.
\end{lemma}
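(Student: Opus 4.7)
The plan is to reduce the claim to the invertibility of a Vandermonde matrix. Suppose $\mu$ and $\mu'$ are two finite discrete measures on $\C$ with finite support in $\C \setminus \{0\}$, and assume $\int z^r\,d\mu = \int z^r\,d\mu'$ for all integers $r \geq \alpha$. Form the signed difference $\nu := \mu - \mu'$. After combining common atoms, write $\nu = \sum_{i=1}^n d_i\,\delta_{w_i}$ where the $w_i \in \C\setminus\{0\}$ are distinct and each $d_i \in \C$. The hypothesis becomes
\begin{equation}
\sum_{i=1}^n d_i\, w_i^{r} \;=\; 0 \qquad \text{for every integer } r \geq \alpha.
\end{equation}

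Next, I would restrict attention to the $n$ consecutive values $r = \alpha, \alpha+1, \ldots, \alpha+n-1$ and rewrite the vanishing system as $V \vec{e} = 0$, where $\vec{e} = (d_1 w_1^\alpha, \ldots, d_n w_n^\alpha)^T$ and $V$ is the $n \times n$ matrix with entries $V_{ki} = w_i^{k}$ for $k = 0, 1, \ldots, n-1$. This is a Vandermonde matrix, whose determinant $\prod_{i<j}(w_j - w_i)$ is nonzero since the $w_i$ are distinct. Therefore $\vec{e} = 0$, so $d_i w_i^\alpha = 0$ for each $i$. Crucially, the hypothesis $w_i \neq 0$ now lets us divide through by $w_i^\alpha$ to conclude $d_i = 0$ for all $i$, whence $\nu = 0$ and $\mu = \mu'$.

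There is no substantive obstacle: the argument is purely linear-algebraic once the problem is recast as finitely many equations in the coefficients $d_i$. The only subtlety worth flagging is the role of the assumption that $0$ is not in the support; without it, shifting the index from $r$ to $r - \alpha$ would be impossible and the argument would fail (a point mass at $0$ contributes $0$ to every positive moment, so indeed such a mass would not be detectable from moments $r \geq \alpha \geq 1$). This is also why the conclusion holds uniformly for every fixed $\alpha > 0$: multiplying by the invertible diagonal factor $\mathrm{diag}(w_i^\alpha)$ never loses information as long as the atoms stay away from the origin.
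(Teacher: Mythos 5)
Your proof is correct. It performs the same essential reduction as the paper — both arguments come down to showing that the geometric sequences $(w_i^{r})_{r\geq\alpha}$ attached to distinct nonzero atoms are linearly independent, so that $\sum_i d_i w_i^{r}=0$ for all $r\geq\alpha$ forces all $d_i=0$ — but you justify that independence differently. The paper views each sequence $(z^{\alpha},z^{\alpha+1},\ldots)$ as an eigenvector of the left-shift operator on $\C^{\N}$ with eigenvalue $z$, and invokes linear independence of eigenvectors with distinct eigenvalues; you instead extract the $n$ consecutive equations $r=\alpha,\ldots,\alpha+n-1$ and invert an explicit Vandermonde matrix. Your route is more elementary and slightly more quantitative (it shows that $n$ consecutive moments already suffice, where $n$ is the number of atoms of $\mu-\mu'$), while the paper's is shorter once the eigenvector fact is taken as known. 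You also correctly isolate where the hypothesis $0\notin\operatorname{supp}\mu$ enters — dividing by $w_i^{\alpha}$ at the end — which is the same place it enters implicitly in the paper's argument (a zero eigenvalue would give the zero sequence, which is not an eigenvector).
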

\begin{proof}
Given distinct nonzero complex numbers $\{z_j\}_{j=1}^n$ with nonzero coefficients $\{\lambda_j\}_{j=1}^n$, and distinct nonzero complex numbers and $\{z_k'\}_{k=1}^m$ with nonzero coefficients $\{\lambda_k'\}_{k=1}^m$ such that
\begin{align}
\label{eq:momMatch}
\sum_{j=1}^n \lambda_j z_j^r\ =\ \sum_{k=1}^m \lambda_k' z_k'^r
\end{align}
for all $r\geq N$, we wish to show that $n=m$ and, for some appropriate permutation of the indices, $\lambda_j=\lambda_k'$ for $j=k$, and $z_j=z_k'$ for $j=k$.

We can associate to any complex number $z$ the sequence $\mathbf{\tilde{z}}=(z^\alpha,z^{\alpha+1},z^{\alpha+2},\ldots)\in\C^{\N}$. Then Equation (\ref{eq:momMatch}) becomes
\begin{align}
\label{eq:seqMatch}
\sum_{j=1}^n \lambda_j \mathbf{z_j}\ =\ \sum_{k=1}^m \lambda_k' \mathbf{z_k'}.
\end{align}
Under the left-shift linear operator in the sequence space $\C^{\N}$, note that $\mathbf{\tilde{z}}$ is a nonzero eigenvector with eigenvalue $z$. In particular, sequences associated with distinct complex numbers have distinct eigenvalues and are thus linearly independent. Applying this to Equation (\ref{eq:seqMatch}) gives the claim.
\end{proof}

\begin{lemma}
The expected $r$\textsuperscript{{\rm th}} moments $\E\left[\tilde{M}_N^{(r)}\right]$ converge to the $r$\textsuperscript{{\rm th}} moment of the spectral measure of the deterministic $k\times k$ matrix $B$ as $N\rightarrow\infty$.
\end{lemma}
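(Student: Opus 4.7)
The plan is to apply the eigenvalue trace lemma to write
\[
\E\bigl[\tilde{M}_N^{(r)}\bigr]\ =\ \frac{1}{k}\left(\frac{k}{N}\right)^r \E[\mathrm{Tr}(A_N^r)],
\]
decompose $A_N = M + P$ as in the setup preceding Theorem \ref{thm:complex blip}, expand $A_N^r = (M+P)^r$ into a sum of $2^r$ words in $\{M,P\}$, and show that only the pure-$P$ word contributes to the $r$-th moment in the $N\to\infty$ limit.

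The key structural observation is that $P$ factors as $P = E^T B E$, where $E$ is the $k\times N$ matrix with $E_{a,i}=1$ iff $i\equiv a\pmod{k}$ and zero otherwise. A direct calculation gives $EE^T=(N/k)I_k$, hence $P^c = (N/k)^{c-1}E^T B^c E$ for $c\ge 1$, and in particular $\mathrm{Tr}(P^r)=(N/k)^r\mathrm{Tr}(B^r)$. Combined with the prefactor, this contributes exactly $\frac{1}{k}\mathrm{Tr}(B^r)$, which is the $r$-th moment of the normalized spectral measure of the $k\times k$ matrix $B$.

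The main step is then to show every other word contributes $o(1)$ in expectation. A mixed word has both $M$'s and $P$'s, so using cyclic invariance of the trace we may arrange it as $w = M^{b_1}P^{c_1}\cdots M^{b_q}P^{c_q}$ with $b_i,c_i\ge 1$, $q\ge 1$, and $\ell:=\sum b_i\ge 1$. Substituting $P^{c_i}=(N/k)^{c_i-1}E^T B^{c_i} E$ collapses each $P$-block to yield
\[
\mathrm{Tr}(w)\ =\ (N/k)^{r-\ell-q}\,\mathrm{Tr}\bigl(\tilde{M}^{(b_1)}B^{c_1}\cdots\tilde{M}^{(b_q)}B^{c_q}\bigr),
\]
where $\tilde{M}^{(b)}:=EM^b E^T$ is a $k\times k$ matrix. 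From $\|E\|_{\mathrm{op}}^2=N/k$ and $\|M\|_{\mathrm{op}}=O(N^{1/2+\epsilon})$ (which holds in expectation for any fixed power via standard moment method arguments on sparse non-Hermitian random matrices), I would deduce $\E|\mathrm{Tr}(w)|=O((N/k)^{r-\ell}N^{\ell/2+\epsilon})$; after the prefactor $(k/N)^r/k$ this becomes $O(N^{-\ell/2+\epsilon})=o(1)$ for $\ell\ge 1$. The remaining pure-$M$ word $w=M^r$ is handled directly: $\E[\mathrm{Tr}(M^r)]=0$ for $r=1$ (mean-zero entries) and for $r=2$ (independence of $M_{ij}$ and $M_{ji}$ in the asymmetric ensemble), while for $r\ge 3$ the trivial bound $|\mathrm{Tr}(M^r)|\le N\|M\|_{\mathrm{op}}^r$ gives $o(1)$ after the prefactor. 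Summing all contributions yields $\E[\tilde{M}_N^{(r)}]\to\frac{1}{k}\mathrm{Tr}(B^r)$ as required.

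The main obstacle is ensuring the $L^p$-control on $\|M\|_{\mathrm{op}}$ for arbitrary fixed $p$, which is needed to promote almost-sure operator norm bounds to expectation bounds on $|\mathrm{Tr}(w)|$; this is in effect a consequence of the moment bounds on $\nu_{A,N}^{s^2}$ already established in the bulk analysis. The factorization $P=E^T B E$ is central because it cleanly separates the $(N/k)$-scale growth of $P$ from the fixed $O(1)$-scale $k\times k$ combinatorics that identify the limiting moment.
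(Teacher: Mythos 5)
Your proposal is correct, but it takes a genuinely different route from the paper. The paper works directly with the cyclic products of the full matrix: it writes $\E[\tilde{M}_N^{(r)}]=\frac{k^{r-1}}{N^{r}}\sum \E[a_{i_1i_2}\cdots a_{i_ri_1}]$ and runs a one-paragraph degree-of-freedom count --- any cyclic product containing an unmatched random entry vanishes by mean zero, while any matching of a random entry with another costs at least one index degree of freedom and hence contributes $O(1/N)$ after normalization; only the all-deterministic products survive, and these reduce modulo $k$ to $\frac{1}{k}\mathrm{Tr}(B^r)$. You instead expand $(M+P)^r$ into words and exploit the exact factorization $P=E^{T}BE$ with $EE^{T}=(N/k)I_k$. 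This is a nice structural observation: it identifies the limit $\frac{1}{k}\mathrm{Tr}(B^r)$ purely algebraically from the pure-$P$ word and gives a quantitative $O(N^{-1/2+\epsilon})$ rate for the error, whereas the paper's count only gives $O(1/N)$ for the random contributions without isolating where the main term comes from. The cost is the input you correctly flag: to bound the mixed words you need $\E[\norm{M}_{\mathrm{op}}^{\ell}]=O(N^{\ell(1/2+\epsilon)})$, and the almost-sure bound does not immediately give this; the trace bound $\E[\mathrm{Tr}((M^{*}M)^{p})]=O(N^{p+1})$ carries an extra factor of $N$, so for $\ell\in\{1,2\}$ the naive $L^{\ell}$ estimate is too weak. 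This is repaired by taking $p$ large and applying Lyapunov's inequality, $\E[\norm{M}_{\mathrm{op}}^{\ell}]\leq(\E[\mathrm{Tr}((M^{*}M)^{p})])^{\ell/(2p)}=O(N^{\ell/2+\ell/(2p)})$, which absorbs the spare $N$ into $N^{\ell\epsilon}$; alternatively $\ell=1$ dies exactly by $\E[M]=0$ and $\ell=2$ by a direct covariance count. One minor imprecision: $\E[\mathrm{Tr}(M^{2})]$ need not vanish exactly (the diagonal terms $\E[m_{ii}^{2}]$ survive for real entries), but it is $O(N)$, which is still $o(N^{2})$ and harmless. With these points addressed, your argument is complete and arguably more transparent about the origin of the limiting moment; the paper's is shorter and needs nothing beyond mean-zero, finite-moment entries.
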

\begin{proof}
Write $\tilde{M}_N^{(r)}$ for the $r$\textsuperscript{{\rm th}} moment of $\tilde{\mu}_N$. By the eigenvalue trace lemma, the expected $r$\textsuperscript{{\rm th}} moment is given by
\begin{align}
\E\left[\tilde{M}_N^{(r)}\right]\ =\ \frac{N}{k}\frac{1}{N}\frac{k^r}{N^r}\sum_{1\leq i_1,\ldots,i_r\leq N} \E\left[a_{i_1 i_2}\cdots a_{i_r i_1}\right].
\end{align}
We analyze this with a standard degree of freedom count. Fix a cyclic product $\E\left[a_{i_1 i_2} \cdots a_{i_r i_1}\right]$. Consider some $a_{ij}$ appearing in this cyclic product that corresponds to a random variable, i.e. not a deterministic entry of the matrix. If $a_{ij}$ is independent from the the other entries, the contributed expectation is zero since the random variables in the matrix ensemble have mean $0$. Else, $a_{ij}$ matches another term in the cyclic product, which loses at least one degree of freedom for the indices. In particular, such a contribution to $\E\left[\tilde{M}_N^{(r)}\right]$ must be $O(1/N)$ which vanishes as $N\rightarrow\infty$.

Thus, as $N\rightarrow\infty$, the quantity $\E\left[\tilde{M}_N^{(r)}\right]$ can be computed only considering cyclic products with all entries deterministic. Reducing modulo $k$ (and recalling that the deterministic entries of $k$-checkerboard matrices repeat modulo $k$)
\begin{align}
\E\left[\tilde{M}_N^{(r)}\right]\ =\ \frac{1}{k}\sum_{1\leq i_1,\ldots,i_r\leq k} \E\left[a_{i_1 i_2}\cdots a_{i_r i_1}\right]
\end{align}
where the expectations $\E\left[a_{i_1 i_2}\cdots a_{i_r i_1}\right]$ are understood to be taken over deterministic entries only. This is precisely the $r$\textsuperscript{{\rm th}} moment of the spectral measure of the matrix $B$ as claimed.
\end{proof}

\begin{proof}[Proof of Theorem \ref{thm:complex blip}]
Fix $\epsilon>0$ and consider $0<\epsilon'<\epsilon$. By Lemma \ref{lemma:moment epsilon} the $r$\textsuperscript{{\rm th}} moments of $\tilde{\mu}$ on $\Omega_{\epsilon'}$ differ from the $r$\textsuperscript{{\rm th}} moments of the spectral measure of $B$ by at most $(1/k)(2k+2)\epsilon'$, when $r\geq6$. Since $\tilde{\mu}$ on $\Omega_\epsilon$ restricts to $\tilde{\mu}$ on $\Omega_{\epsilon'}$ when $\epsilon'<\epsilon$, and $\tilde{\mu}$ is supported on the zeroes of $f$, the $r$\textsuperscript{{\rm th}} moments of $\tilde{\mu}$ on $\Omega_\epsilon$ are equal to the $r$\textsuperscript{{\rm th}} moments of $\tilde{\mu}$ on $\Omega_{\epsilon'}$ whenever $\epsilon$ is smaller than the smallest nonzero eigenvalue of $B$ (i.e. smallest nonzero zero of $f$). Sending $\epsilon'\rightarrow 0$ shows that the $r$\textsuperscript{{\rm th}} moments of $\tilde{\mu}$ on $\Omega_\epsilon$ must be equal to the $r$\textsuperscript{{\rm th}} moments of the spectral measure of $B$ restricted to $\Omega_\epsilon$ when $r\geq 6$. By Lemma \ref{lemma:discrete moment}, $\tilde{\mu}$ on $\Omega_\epsilon$ is a discrete measure with finite support at the nonzero zeroes of $f$. By Remark \ref{remark:finite measure} the total measure is finite. Lemma \ref{lemma:discreteMoments} shows that these two measures must be equal as claimed.
\end{proof}

\begin{proof}[Proof of Corollary \ref{cor:bulk mass}]
By Theorem \ref{thm:complex blip}, the $k'$ largest eigenvalues of $A_N$ are all of size $O(N)$. Applying Weyl's inequalities as in Proposition \ref{prop:singVal regimes} shows that, almost surely, the $k'$ largest singular values of $A_N$ are $O(N)$ and the remaining $N-k'$ singular values are $O(N^{1/2+\delta})$. The product of the $m$ largest singular values majorize the product of the $m$ largest eigenvalues, which suffices for the claim with a similar argument as in the proof of Lemma \ref{lemma:moment epsilon}.
\end{proof}

\subsection{Conjectures}\ \\
\label{section:conjectures}


Although in Section \ref{section:genCheck} we only analyzed ensembles resulting in ``discrete-type'' blip distributions, it is natural to ask whether we can naturally construct other ensembles where the resulting blip distribution will not be discrete. For example, if we modify the complex $k$-checkerboard ensemble and replace the deterministic $1$'s with complex numbers on the unit circle, drawn with uniform probability, our result trivially implies that the blip will consist of a ring of eigenvalues on the unit circle, in the same sense of Theorem \ref{thm:complex blip}; see Figure \ref{fig:imagestoseelater} (right).


One can attempt to construct non-discrete blip distributions in other ways, for example with an analogy of a generalized complex $k$-checkerboard ensemble with $B$ matrix having eigenvalues at the $k$-th roots of unity, except where $k=\sqrt{N}$ is growing as $N\rightarrow\infty$ over the squares. Heuristically, one expects the blip distribution to be a ring of vanishing thickness, in some sense, but one would need different techniques than those used to describe the discrete-type blip distributions characterized by Theorem \ref{thm:complex blip}.




As a final note, the bulk also seems to deviate from standard circular law behavior in more general ensembles when, for example the entries are no longer iidrv, and different entries are assigned different means or variances. Instead we observe some sharpened bulk distributions, that are distinctly non-uniform.

\begin{figure}[h]
    \centering
    \begin{minipage}{0.45\textwidth}
        \centering
        \scalebox{.6}{\includegraphics{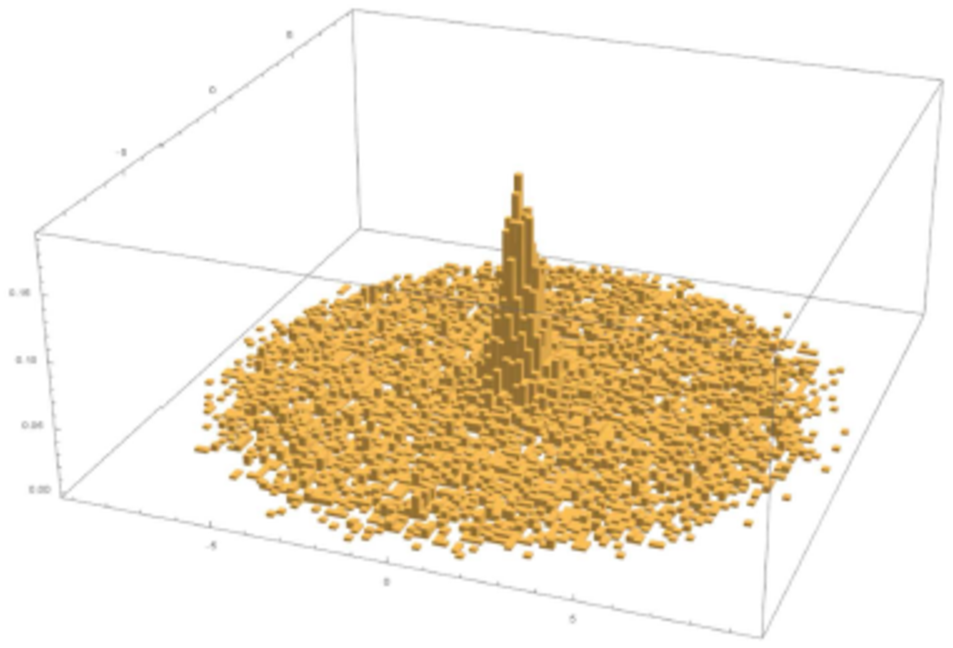}}
        \caption{Sharp bulk distribution, two different variance values for the random entries.}
    \end{minipage}\hfill
    \begin{minipage}{0.45\textwidth}
        \centering
\scalebox{.6}{\includegraphics{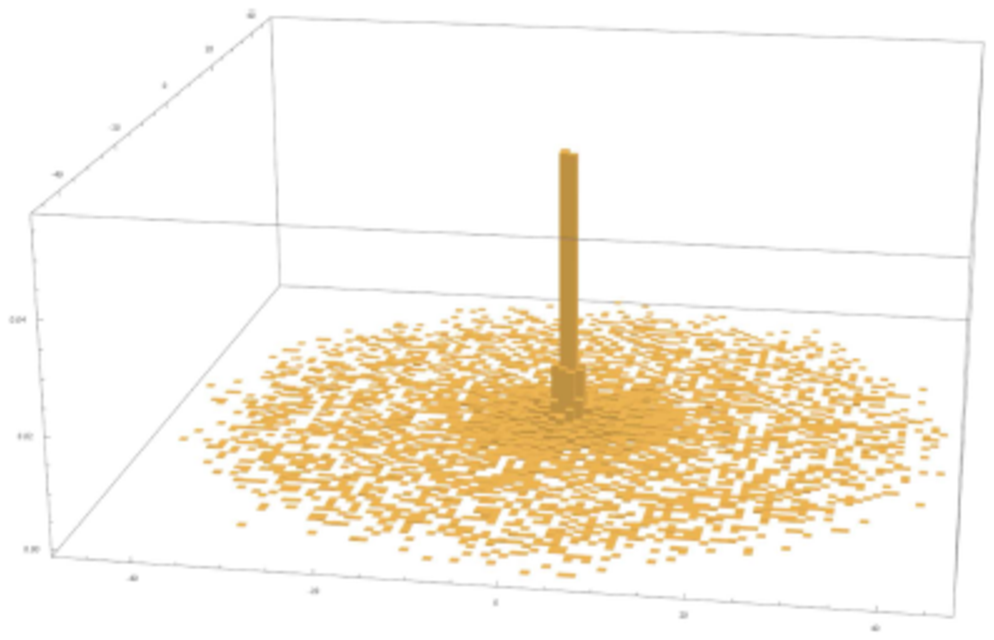}}
\caption{Three different variance values for the random entries. There appear to be three bulks ``stacked'' on top of each other.}
    \end{minipage}
\end{figure}


\appendix


\section{}
\subsection{Notation and terminology for cyclic products}
\label{appendix:terminology}
Throughout we have used some convenient terminology borrowed from \cite{RMT2016} to analyze the cyclic products. The definitions are copied and adapted below.

Recall that
\begin{equation}
\E\left[\mathrm{Tr}M^n\right] \ =\ \sum_{1 \leq i_1, \ldots, i_n \leq N} \E[m_{i_1 i_2}m_{i_2 i_3} \cdots m_{i_n i_1}].
\end{equation}
We refer to terms $\E[m_{i_1 i_2}m_{i_2 i_3} \cdots m_{i_n i_1}]$ as \textbf{cyclic products} and $m$'s as entries of cyclic products. Occasionally, some of our cyclic products appear in altered form, with certain terms $m_{i_j i_{j+1}}$ replaced instead with $m_{i_{j+1} i_{j}}$ or perhaps with complex conjugates $\overline{m_{i_j i_{j+1}}}$ or $\overline{m_{i_{j+1} i_{j}}}$, but we extend this terminology to those scenarios as well. In many of our moment arguments, we are interested in computing these cyclic products, which reduces to a combinatorics problem of understanding the contributions of different cyclic products. We develop the following vocabulary to classify types of cyclic products according to the aspects of their structure that determine overall contributions. 

\begin{definition}
A \textbf{term} refers to a single component $m_{i_j i_{j+1}}$ of the cyclic product.
\end{definition}

\begin{definition}\label{def of a Block}
A \textbf{block} is a set of adjacent $a$'s surrounded by $w$'s in a cyclic product, where the last entry of a cyclic product is considered to be adjacent to the first. We refer to a block of length $\ell$ as an $\ell$-block or sometimes a block of size $\ell$.
\end{definition}

\begin{definition}\label{def of a Configuration}
A \textbf{configuration} is the set of all cyclic products for which it is specified (a) how many blocks there are, and of what lengths, and (b) in what order these blocks appear. However, it is not specified how many $w$'s there are between each block.
\end{definition}

\begin{example}
The set of all cyclic products of the form $w\cdots waw \cdots waaw \cdots waw \cdots w$, where each $\cdots$ represents a string of $w$'s and the indices are not yet specified, is a configuration.
\end{example}

\begin{definition}\label{def of a Class}
Let $S$ be a multiset of natural numbers. An $S$\textbf{-class}, or class when $S$ is clear from context, is the set of all configurations for which there exists a unique $s$-block for every $s \in S$ counting multiplicity.  In other words, two configurations in the same class must have the same blocks but they may be ordered differently and have different numbers of $w$'s between them.
\end{definition}


\begin{definition}\label{def of a Matching}
Given a configuration, a \textbf{matching} is an equivalence relation $\sim$ on the $a$'s in the cyclic product which constrains the ways of indexing (see Definition \ref{def of an Indexing}) the $a$'s as follows: an indexing of $a$'s conforms to a matching $\sim$ if, for any two $a$'s $a_{i_{\ell},i_{\ell+1}}$ and $a_{i_{t},i_{t+1}}$, we have $\{i_\ell,i_{\ell+1}\}=\{i_t,i_{t+1}\}$ if and only if $a_{i_{\ell}i_{\ell+1}} \sim a_{i_{t},i_{t+1}}$. We further constrain that each $a$ is matched with at least one other by any matching $\sim$.
\end{definition}

\begin{remark}
Noting that the $a_{ij}$ are drawn from a mean-$0$ distribution, any matching with an unmatched $a$ would not contribute in expectation, hence it suffices to only consider those with the $a$'s matched at least in pairs.
\end{remark}

\begin{example}
Given a configuration $a_{i_1i_2}w_{i_2i_3}a_{i_3i_4}w_{i_4i_5}a_{i_5i_6}w_{i_6i_7}a_{i_7i_8}w_{i_8i_1}$ (the indices are not yet specified because this is a configuration), if $a_{i_1i_2} \sim a_{i_5i_6}$ we must have either $i_1=i_5$ and $i_2=i_6$ or $i_1=i_6$ and $i_2=i_5$.
\end{example}

\begin{definition}\label{def of an Indexing}
Given a configuration, matching, and length of the cyclic product, then an \textbf{indexing} is a choice of
\begin{enumerate}
\item the (positive) number of $w$'s between each pair of adjacent blocks (in the cyclic sense), and
\item the integer indices of each $a$ and $w$ in the cyclic product.
\end{enumerate}
\end{definition}

\begin{lemma}
\label{lemma:combinatorial}
(Lemma 3.16 from \cite{RMT2016}.) For any $0\leq p<m$
\begin{align}
\sum_{j=0}^m(-1)^j\binom{m}{j}j^p\ =  \ 0.
\end{align}
Furthermore
\begin{align}
\sum_{j=0}^m(-1)^{m-j}\binom{m}{j}j^m \  = \ m!.
\end{align}
\end{lemma}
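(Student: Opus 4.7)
The plan is to recognize both identities as standard consequences of the $m$-th finite-difference operator acting on polynomials, and to prove them in parallel using a single framework.

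First I would define the forward difference operator $\Delta f(x) := f(x+1) - f(x)$ and establish, by a quick induction on $m$, the closed form
\begin{equation*}
\Delta^m f(x) \;=\; \sum_{j=0}^m (-1)^{m-j} \binom{m}{j} f(x+j),
\end{equation*}
so that evaluating at $x=0$ gives $\Delta^m f(0) = \sum_{j=0}^m (-1)^{m-j}\binom{m}{j} f(j)$. This turns both sums in the lemma into evaluations of $\Delta^m$ on the monomials $f(x)=x^p$ and $f(x)=x^m$, respectively (the sign discrepancy in the first identity is just an overall factor of $(-1)^m$, which is irrelevant once the sum equals zero).

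Next I would carry out the two computations separately. For the first identity, the key observation is that $\Delta$ strictly decreases the degree of any polynomial: writing $f(x) = c_p x^p + (\text{lower})$, one checks $\Delta f(x) = c_p p\, x^{p-1} + (\text{lower})$. Iterating $m$ times on a polynomial of degree $p<m$ therefore produces the zero polynomial, so $\Delta^m x^p = 0$, yielding the first identity. For the second identity, applying this same leading-coefficient calculation $m$ times to $x^m$ gives $\Delta^m x^m = m(m-1)\cdots 1 = m!$ as a constant polynomial; evaluating at $0$ produces the stated equality.

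I do not anticipate a real obstacle here, since this is a classical identity; the only thing to be careful about is the sign bookkeeping between the two stated forms. As a cross-check, one could also give the combinatorial proof: $\sum_{j=0}^m (-1)^{m-j}\binom{m}{j} j^p$ equals $m!\,S(p,m)$, the number of surjections from a $p$-element set to an $m$-element set (by inclusion-exclusion on the image), which vanishes for $p<m$ and equals $m!$ when $p=m$. Either route is short and self-contained.
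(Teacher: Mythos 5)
Your proof is correct. The paper does not actually prove this lemma itself; it simply imports it as Lemma~3.16 of \cite{RMT2016}, so there is no in-paper argument to compare against. Your finite-difference route is the standard one and is complete: the identity $\Delta^m f(0)=\sum_{j=0}^m(-1)^{m-j}\binom{m}{j}f(j)$ follows by induction, $\Delta$ strictly lowers polynomial degree (with leading coefficient multiplied by the degree), so $\Delta^m x^p=0$ for $p<m$ and $\Delta^m x^m=m!$, and the overall sign $(-1)^m$ separating the two stated forms is immaterial when the sum vanishes. The only microscopic point worth a word is the $j=0$, $p=0$ term, where the convention $0^0=1$ is what makes the sum agree with $f(0)$ for $f(x)=x^0\equiv 1$; your polynomial formulation handles this automatically. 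The surjection-counting cross-check via $m!\,S(p,m)$ is also valid and arguably closer in spirit to how such identities arise in the moment combinatorics of \cite{RMT2016}, but either argument suffices.
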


\subsection{Joint density for singular values of complex symmetric Gaussian ensemble}
\label{section:jointDensity}

We give a proof of the joint density of singular values for complex symmetric matrices found in \cite{AZ,Fo}.

\begin{theorem}[Joint Density of Singular Values for Complex Symmetric Matrices]
\label{thm:joint density}
Suppose $M$ is a random complex symmetric $N\times N$ matrix (not necessarily Hermitian), with entries in the upper triangle half and the diagonal iidrv mean $0$ variance $1$ complex Gaussian random variables. The joint density of the singular values of $M$ is given by
\begin{align}
\rho_N(x_1,\ldots,x_N)\ = \ c_N\left|\Delta(x_1^2,\ldots,x_N^2)\right|\prod_{j=1}^N |x_j| \prod_{j=1}^Ne^{-|x_j|^2/2}.
\end{align}
\end{theorem}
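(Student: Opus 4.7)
The plan is to apply the Takagi (Autonne–Takagi) decomposition together with a Jacobian change of variables, then integrate out the unitary group. Any complex symmetric matrix $M$ admits a factorization $M = U\Sigma U^T$ where $U \in U(N)$ is unitary and $\Sigma = \mathrm{diag}(x_1,\ldots,x_N)$ is the diagonal matrix of singular values (nonnegative). This is a covering of the space of full-rank complex symmetric matrices with finite generic stabilizer (sign flips and permutations), so it is a local diffeomorphism off a measure-zero locus. Since the Gaussian density is proportional to $\exp(-\tfrac{1}{2}\,\mathrm{tr}(M^*M))$ and a direct calculation using $U^*U = I$ gives $\mathrm{tr}(M^*M) = \sum_j x_j^2$, the Gaussian factor pulls back to $\prod_j e^{-x_j^2/2}$ on the diagonal parameters.

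The heart of the proof is the Jacobian computation for $(U,\Sigma) \mapsto M$. Setting $S := U^*\,dU$ (skew-Hermitian) and $d\tilde M := U^*\,dM\,\bar U$, differentiating $M = U\Sigma U^T$ and using $U^T \bar U = I$ gives
\begin{equation}
d\tilde M \ = \ S\Sigma + d\Sigma + \Sigma S^T.
\end{equation}
For $i<j$ the skew-Hermitian relation $S_{ji} = -\overline{S_{ij}}$ yields $d\tilde M_{ij} = x_j S_{ij} - x_i \overline{S_{ij}}$, so writing $S_{ij} = a+ib$ the real and imaginary parts become $(x_j-x_i)a$ and $(x_j+x_i)b$, contributing a Jacobian factor $|x_j^2 - x_i^2|$. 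For $i=j$ one has $d\tilde M_{ii} = dx_i + 2x_i S_{ii}$ with $S_{ii}$ purely imaginary, contributing a factor $2x_i$. Multiplying these factors gives
\begin{equation}
|dM| \ = \ 2^N \, \bigl|\Delta(x_1^2,\ldots,x_N^2)\bigr| \, \prod_{j=1}^N x_j \cdot d\mu_{\mathrm{Haar}}(U)\, dx_1 \cdots dx_N.
\end{equation}

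Finally I would integrate $U$ over $U(N)$ against Haar measure and divide out by the finite stabilizer; since the integrand is independent of $U$, this contributes a finite constant that absorbs into $c_N$, and combining with the Gaussian factor produces the claimed density $\rho_N$. The main obstacle I anticipate is the careful Jacobian bookkeeping on the diagonal: naively one might think $\mathrm{Im}(S_{ii})$ is a gauge direction, but rotating $U$ by a diagonal phase $D = \mathrm{diag}(e^{i\alpha_k})$ sends $M$ to $U D^2 \Sigma U^T$, which rescales the singular values by $e^{2i\alpha_k}$; keeping them nonnegative forces $D \in \{\pm 1\}^N$, a finite group. So $\mathrm{Im}(S_{ii})$ is a genuine free parameter that combines with $dx_i$ to produce the single power of $x_i$ per singular value, distinguishing this computation from the complex asymmetric SVD $M = U\Sigma V^*$ where two independent unitaries double the Vandermonde power and give the extra factor $\prod_j x_j$ appearing in the asymmetric case.
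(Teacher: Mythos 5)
Your proof is correct and takes essentially the same route as the paper: both rest on the Takagi factorization $M=U\Sigma U^{T}$ and the identical Jacobian computation, with off-diagonal factors $|x_j^2-x_i^2|$ and diagonal factors $2x_j$ (your linear map $d\Sigma + S\Sigma + \Sigma S^{T}$ is exactly the paper's $A = E + SD' - D'\overline{S}$, since skew-Hermiticity gives $S^{T}=-\overline{S}$). The only difference is presentational --- the paper packages the Jacobian as a comparison of the small-ball probabilities $\P(|M-D'|\leq\epsilon)$ computed two ways rather than working with the differential directly, and your remark that the diagonal phases of $U$ are genuine degrees of freedom rather than gauge directions is the same point the paper makes implicitly via its $N^2+N$ degree-of-freedom count.
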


We adapt a proof of Ginibre's formula for the eigenvalue joint density of complex asymmetric matrices as presented by Stephen Ge \cite{Ge}, who cites Mehta \cite{Me,MeDy}.

\begin{proof}[Proof of Theorem \ref{thm:joint density}]
Let $|M|^2\ = \ {\rm Tr}(M^*M)$ denote the Frobenius (Hilbert-Schmidt) norm. Then
\begin{align}
dP\ := \ C_N e^{-|M|^2/2}~dM\ = \ C_N\prod_{i,j}e^{-|x_{ij}|^2/2}~dM \label{dP}
\end{align}
gives $M$'s density on the space of all $n\times n$ complex symmetric matrices, where $dM$ is Lebesgue measure on that space and $C_n$ is some constant.

We derive the desired formula by computing $\P(|M-D'|\leq\epsilon)$ in two ways, for $\varepsilon >0$ arbitrarily small and $D'$ a fixed diagonal matrix with non-negative real entries. The above density gives
\begin{align}
\P(|M-D'|\leq\epsilon)\ &= \ \int_{|M-D'|\leq\epsilon} ~dP.
%
%
\end{align} %
We can treat this integral as the volume of a thin rectangle centered at D', since $\epsilon$ is very small. Since we take $M$ from a distribution of complex symmetric matrices, and such matrices have $N^2 + N$ degrees of freedom, this volume is bounded above by $C_n e^{-|D'|^2/2} \epsilon^{N^2+N}$. Thus we have
\begin{align}
\P(|M-D'|\leq\epsilon)\ = \ (C+o(1))e^{-|D'|^2/2}\epsilon^{N^2+N}. \label{firstway}
\end{align}

We now compute this probability a second way by using the Takagi factorization of a complex symmetric matrix $M$:
\begin{align}
M\ = \ UDU^\intercal, \label{takagi}
\end{align}
where $U$ is unitary and $D$ is a diagonal matrix with nonnegative real entries. Since unitary matrices $U$ can be written as $U=\exp(S)$ for some $S$ skew Hermitian, $U$ has $N^2$ degrees of freedom, while $D$ has $N$. Thus, the left and the right-hand sides of \eqref{takagi} have the same number of degrees of freedom.


Define a density $\psi(D)dD$ on the space of diagonal matrices with nonnegative real entries, so that when $U$ is taken from the unitary group uniformly and $D$ with density $\psi(D)$, $M =  UDU^\intercal$ is a Gaussian random matrix. We eventually use both probability expressions to compute $\psi(D)$, which will in turn allow us to determine the joint density formula.

Now, let $M$ be such that $|M-D'|\ \leq \ \epsilon$. Then, following the lead of Tao \cite{Tao}, we write $U = I+O(\epsilon)$ and $D = D'+\varepsilon E$, where $E$ is real diagonal.
After counting degrees of freedom, $S$ has density $C'(1+o(1))\epsilon^{N^2}dS$, where $dS$ is the Lebesgue measure on the space of skew-Hermitian matrices. Similarly, $E$ has density $C''(1+o(1))\epsilon^n\psi(D')dE$. We have
\begin{align}
M\ &= \ UDU^\intercal \nonumber
\\
&= \ \mathrm{exp}(\epsilon S)(D' +\epsilon E)\mathrm{exp}(\epsilon S)^\intercal \nonumber
\\
&= \ \mathrm{exp}(\epsilon S)(D' +\epsilon E)\mathrm{exp}(\epsilon S^\intercal) \nonumber
\\
&= \ \mathrm{exp}(\epsilon S)(D' +\epsilon E)\mathrm{exp}(-\epsilon \overline{S}).
\end{align}
Thus we can write
\begin{align}
\P(|M-D'|\ \leq \ \epsilon)\ &= \ C'''\int\int_{|\mathrm{exp}(\epsilon S)(D'+\epsilon E)\mathrm{exp}(-\epsilon \overline{S})-D'|\leq\epsilon} (1+o(1))\epsilon^{N^2}dS \epsilon^{N}\psi(D')~dE \nonumber
\\
&= \ C'''\epsilon^{N^2+N}(1+o(1))\psi(D')\int\int_{|\mathrm{exp}(\epsilon S)(D'+\epsilon E)\mathrm{exp}(-\epsilon \overline{S})-D'|\leq\epsilon} ~dSdE. \label{secondway}
\end{align}
Taylor-expanding the exponential to first order gives
\begin{eqnarray}
|\mathrm{exp}(\epsilon S)(D'+\epsilon E)\mathrm{exp}(-\epsilon \overline{S})-D'| & \ = \ & |(D'+\epsilon SD'+\epsilon E-\epsilon D'\overline{S}+O(\epsilon^2))-D'| \nonumber\\ &=& |\epsilon(E + SD' - D'\overline{S} + O(\epsilon))|.
\end{eqnarray}
which from above we want to be at most $\epsilon$. As $|B|^2 = {\rm Tr}(B^\ast B)$, this implies that
\begin{align}
|E+SD'-D'\overline{S}|\ \leq \  1+O(\epsilon).
\end{align}
Consider the change of variables
\begin{align}
A\ = \ E+SD'-D'\overline{S}.
\end{align}
Entry-by-entry, we have
\begin{align}
A_{jk}\ = \ E_{jk}+(D'_{kk}S_{jk}-D'_{jj}\overline{S_{jk}}).
\end{align}
Here we have used the fact that $D'$ is diagonal. Also, the skew-symmetry of $S$ and the diagonality of $E$ and $D'$ imply $A_{jk}$ is symmetric.

Next, we write the real and imaginary parts of this change of coordinates separately. Let $\sigma_{A_{jk}}$ denote the real part of $A_{jk}$, and $\tau_{A_{jk}}$ denote the imaginary part of $A_{jk}$. Recalling that $D$ is real diagonal, we have
\begin{align}
\sigma_{A_{jk}}\ &= \ \sigma_{E_{jk}}+\sigma_{S_{jk}}(\sigma_{D'_{kk}}-\sigma_{D'_{jj}})
\nonumber\\
\tau_{A_{jk}}\ &= \ \tau_{S_{jk}}(\sigma_{D'_{kk}}+\sigma_{D'_{jj}}).
\end{align}

We can interpret this matrix change of variables $A\rightarrow (S,E)$ as a transformation from $\R^{n^2+n}$ to $\R^{n^2+n}$, i.e., from (for $j\leq k$) the $\sigma_{A_{jk}}$ and $\tau_{A_{jk}}$ to $n^2+n$-tuples with entries $\sigma_{S_{jk}}$ (for $j<k$), $\tau_{S_{jk}}$ (for $j\leq k$), and $\sigma_{E_{jj}}$. This transformation is a direct sum of three diagonal transformations:
\begin{align}
\sigma_{A_{jk}}\ &= \ \sigma_{S_{jk}}(\sigma_{D'_{kk}}-\sigma_{D'_{jj}}) \text{\hspace{0.5 in} for $j\neq k$}
\nonumber\\
\sigma_{A_{jj}}\ &= \ \sigma_{E_{jj}}
\nonumber\\
\tau_{A_{jk}}\ &= \ \tau_{S_{jk}}(\sigma_{D'_{kk}}+\sigma'_{D'_{jj}}).
\end{align}
This is a \textit{diagonal} transformation, i.e., its Jacobian is the product of each of the above scaling factors. The Jacobian for the change of coordinates $(S,E)\rightarrow A$ is
\begin{align}
\prod_{1\leq j<k\leq n} |D'^2_{kk}-D'^2_{jj}|\prod_{j=1}^n |2D'_{jj}|\ = \ |\Delta(D'^2_{11},\ldots,D'^2_{nn})|^{-1}\prod_{j=1}^n |2D'_{jj}|^{-1}.
\end{align}
Returning to \eqref{secondway}, we have
\begin{align}
\P(|M-D'|\leq\epsilon)\ &= \ C'''\epsilon^{N^2+N}(1+o(1))\psi(D')\int\int_{|\mathrm{exp}(\epsilon S)(D'+\epsilon E)\mathrm{exp}(-\epsilon \overline{S})-D'|\leq\epsilon} ~dSdE \nonumber
\\
&= \ C'''\epsilon^{N^2+N}(1+o(1))\psi(D')\int_{|A|\leq1+O(\epsilon)}~dA.
\end{align}

Note that as $\epsilon\rightarrow0$, the integral in $A$ goes to a constant. Absorbing this into the constant $C'''$ and comparing the above expression for $\P(|M-D'|\ \leq \ \epsilon)$ with \eqref{firstway} gives
\begin{align}
\psi(D')\ &= \ C''''|\Delta(D_{11}^2,\ldots,D_{nn}^2)|\left(\prod_{j=1}^N |2D'_{jj}|\right) e^{-|D'|^2/2} \nonumber
\\
&= \ C''''|\Delta(D_{11}^2,\ldots,D_{nn}^2)|\prod_{j=1}^N |2D'_{jj}|\prod_{j=1}^N e^{-D'^2_{jj}/2}.
\end{align}
As the diagonal entries $D_{jj}$ are precisely the singular values of $M$, we find
\begin{align}
\rho_n(x_1,\ldots,x_N)\ = \ C'''' \left |\Delta (x_1^2,\ldots,x_N^2) \right | \prod_{j=1}^N \left |2x_j \right | \prod_{j=1}^N e^{-|x_j|^2/2},
\end{align}
which proves the theorem after absorbing $2^N$ into the constant.
\end{proof}

Theorem \ref{thm:joint density} allows us to compute the distribution of the least singular value for the Gaussian complex symmetric ensemble.
As before, we list the available distribution for the least singular values of the complex asymmetric Gaussian ensemble as computed by Edelman \cite{Edelman1988}:
\begin{align}
&{\rm Complex\ asymmetric\ Gaussian:} \ p(\sigma_N)\ =\ N\sigma_N e^{-N\sigma_N^2/2}
\end{align}

\begin{corollary}
\label{cor:least singVal}
The probability density function of the least singular value $\sigma_N$ follows the Rayleigh distribution
\begin{align}
p(\sigma_N)\ =\ N\sigma_N e^{-N\sigma_N^2/2}.
\end{align}
\end{corollary}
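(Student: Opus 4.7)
The plan is to derive the density of the smallest singular value by integrating out the other $N-1$ singular values in the joint density formula from Theorem \ref{thm:joint density}, exploiting the translation invariance of the Vandermonde determinant after a well-chosen substitution.

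First, I would write the marginal density of the smallest singular value $\sigma_N$ (treating the joint density as symmetric in its arguments) as
\begin{align*}
p(\sigma_N) \;\propto\; \int_{\sigma_N \le x_1,\dots,x_{N-1}} \rho_N(\sigma_N,x_1,\dots,x_{N-1})\,dx_1\cdots dx_{N-1}.
\end{align*}
Since $\sigma_N$ is the smallest entry, the Vandermonde factor splits as
\begin{align*}
\left|\Delta(\sigma_N^2,x_1^2,\dots,x_{N-1}^2)\right| \;=\; \prod_{j=1}^{N-1}(x_j^2-\sigma_N^2)\cdot \left|\Delta(x_1^2,\dots,x_{N-1}^2)\right|,
\end{align*}
which isolates all $\sigma_N$-dependence in the integrand outside of the remaining Vandermonde.

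Next, I would execute the change of variables $y_j := x_j^2-\sigma_N^2$ on each $x_j$, with $y_j \in [0,\infty)$. Under this substitution $x_j\,dx_j = \tfrac12\,dy_j$, so each factor
\begin{align*}
(x_j^2-\sigma_N^2)\cdot x_j\,e^{-x_j^2/2}\,dx_j \;=\; y_j\cdot \tfrac12\, e^{-\sigma_N^2/2}\,e^{-y_j/2}\,dy_j.
\end{align*}
The crucial observation is that the Vandermonde is translation-invariant: $|\Delta(x_1^2,\dots,x_{N-1}^2)| = |\Delta(y_1,\dots,y_{N-1})|$. Therefore the remaining integral over $(y_1,\dots,y_{N-1})$ is a fixed constant $C_N$ independent of $\sigma_N$, and all $\sigma_N$-dependence is collected from (i) the prefactor $\sigma_N e^{-\sigma_N^2/2}$ in $\rho_N$, and (ii) the $N-1$ factors of $e^{-\sigma_N^2/2}$ from the substitution. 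Multiplying,
\begin{align*}
p(\sigma_N) \;\propto\; \sigma_N\, e^{-\sigma_N^2/2}\cdot e^{-(N-1)\sigma_N^2/2} \;=\; \sigma_N\, e^{-N\sigma_N^2/2}.
\end{align*}

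Finally, normalizing via $\int_0^\infty N\sigma\, e^{-N\sigma^2/2}\,d\sigma = 1$ identifies the overall constant and yields the claimed Rayleigh density. The main bookkeeping issue, rather than a genuine obstacle, will be accounting for the combinatorial factor from whether the joint density is ordered or symmetric, together with absorbing the $\sigma_N$-independent $y$-integral $C_N$ into the normalization without having to compute it explicitly; the normalization condition circumvents the need to evaluate it. A small check at $N=1,2$ (where the calculation is transparent) would be worth recording to pin down constants.
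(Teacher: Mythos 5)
Your proposal is correct and is essentially the paper's own argument (which follows Edelman): split off the factors $(x_j^2-\sigma_N^2)$ from the Vandermonde, substitute $y_j=x_j^2-\sigma_N^2$, use translation invariance of the Vandermonde to make the remaining integral a $\sigma_N$-independent constant, and fix that constant by normalization. No substantive differences.
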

Note that, while the joint densities are distinct, the least singular value of the complex asymmetric and complex symmetric Gaussian matrices share the same distribution.

\begin{proof}[Proof of Corollary \ref{cor:least singVal}]
Order the singular values $0\leq\sigma_N\leq\cdots\leq\sigma_1$. Using Theorem \ref{thm:joint density}, we can integrate out the other parameters $\sigma_1,\ldots,\sigma_{N-1}$ from the joint density to obtain the density function of the least singular value:
\begin{align}
p(\sigma_N) &\ =\ \int_{\sigma_N\leq\sigma_{N-1}\leq\cdots\leq\sigma_1} \rho_N(\sigma_1,\sigma_2,\ldots,\sigma_N) ~d\sigma_1\cdots d\sigma_{N-1} \nonumber
\\
&\ =\ C_N\int_{\sigma_N\leq\sigma_{N-1}\leq\cdots\leq\sigma_1}  \prod_{1\leq k<j \leq N} \left(\sigma_k^2-\sigma_j^2\right)\prod_{j=1}^N \sigma_j \prod_{j=1}^N e^{-\sigma_j^2/2} ~d\sigma_1\cdots d\sigma_{N-1} \nonumber
\\
&\ =\ C_N\sigma_N e^{-N \sigma_N^2/2}
\int_{\sigma_N\leq\sigma_{N-1}\leq\cdots\leq\sigma_1} \prod_{1\leq k<j \leq N-1} \left(\sigma_k^2-\sigma_j^2\right) \prod_{1\leq k\leq N-1} \left(\sigma_k^2-\sigma_N^2\right) \nonumber\\ & \ \ \ \ \ \ \ \ \ \ \ \cdot \ \prod_{j=2}^n \sigma_j \prod_{j=2}^n e^{-(\sigma_j^2-\sigma_N^2)/2} ~d\sigma_1\cdots d\sigma_{N-1}. \label{lambda_fact}
\end{align}

As is done in \cite{Edelman1988}, make the change of variables $x_j=\sigma_j^2-\sigma_N^2$. Thus, with $dx_j=2d\sigma_j$, we find
\begin{align}
p(\sigma_N) &=2^{1-N}C_N\sigma_N e^{-N\sigma_N^2/2} \int_{0\leq x_{N-1}\leq\cdots\leq x_1} \prod_{1\leq k<j \leq N-1} (x_k-x_j) \prod_{j=1}^{N-1} x_j \prod_{j=1}^{N-1} e^{-x_j/2} ~dx_1 \cdots dx_{N-1}.
\end{align}
Since the integral is independent of $\sigma_N$, it is constant, which we denote by $C_N'$:
\begin{align}
p(\sigma_N)\ =\ C_N'\sigma_N e^{-N\sigma_N^2/2}.
\end{align}
As $p$ is a probability distribution, $\int_{\R_{\geq 0}} p(\sigma_N)~d\sigma_N =1$ and thus $C_N'=N$, completing the proof.
\end{proof}



\providecommand{\noopsort}[1]{}
\providecommand{\bysame}{\leavevmode\hbox to3em{\hrulefill}\thinspace}
\providecommand{\MR}{\relax\ifhmode\unskip\space\fi MR }
\providecommand{\MRhref}[2]{%
  \href{http://www.ams.org/mathscinet-getitem?mr=#1}{#2}
}

\providecommand{\href}[2]{#2}

\ \\


\end{document}